\documentclass[11pt]{article}
\usepackage[utf8]{inputenc}
\usepackage{fullpage,times}

\usepackage{fullpage}
\usepackage{amsmath,amsfonts,amsthm,amssymb,multirow}
\usepackage{times}
\usepackage{graphicx}
\usepackage{floatpag}
\usepackage{algorithm}
\usepackage[noend]{algpseudocode}
\usepackage{bbold}
\usepackage{enumitem}
\usepackage{subcaption} 
\usepackage{calc}
\usepackage{tikz}
\usetikzlibrary{decorations.markings}
\tikzstyle{vertex}=[circle, draw, inner sep=0pt, minimum size=4pt, fill = black]
\newcommand{\vertex}{\node[vertex]}
\usepackage{graphicx}
\usepackage{tabularx}
\makeatletter
\newcommand{\multiline}[1]{%
  \begin{tabularx}{\dimexpr\linewidth-\ALG@thistlm}[t]{@{}X@{}}
    #1
  \end{tabularx}
}
\makeatother

\makeatletter
\def\BState{\State\hskip-\ALG@thistlm}
\makeatother

\newcommand{\ceil}[1]{\lceil #1 \rceil}
\newcommand{\floor}[1]{\lfloor #1 \rfloor}

\usepackage[compact]{titlesec}
\titlespacing{\section}{0pt}{3ex}{2ex}
\titlespacing{\subsection}{0pt}{2ex}{1ex}
\titlespacing{\subsubsection}{0pt}{0.5ex}{0ex}

\newtheorem{fact}{Fact}[section]

\newtheorem{theorem}{Theorem}[section]

\newtheorem{corollary}{Corollary}[section]
\newenvironment{proofof}[1]{{\bf Proof of #1.  }}{\hfill$\Box$}
\newtheorem{definition}{Definition}[section]
\newtheorem{lemma}{Lemma}[section]

\newtheorem{conjecture}{Conjecture}

\makeatletter
\let\c@fconjecture\c@conjecture
\makeatother

\makeatletter
\let\c@fconj\c@conj
\makeatother

\def \Z {{\mathbb Z}}
\newcommand{\ignore}[1]{}

\title{Graph pattern detection:\\ 
Hardness for all induced patterns and faster non-induced cycles}
\author{Mina Dalirrooyfard, Thuy Duong Vuong and Virginia Vassilevska Williams\thanks{minad@mit.edu, dvuong@mit.edu, virgi@mit.edu, MIT EECS and CSAIL}}
\date{}

\begin{document}

\maketitle
\thispagestyle{empty}
\begin{abstract}
We consider the pattern detection problem in graphs: given a constant size pattern graph $H$ and a host graph $G$, determine whether $G$ contains a subgraph isomorphic to $H$. We present the following new improved upper and lower bounds:
\begin{itemize}
\item We prove that if a pattern $H$ contains a $k$-clique subgraph, then detecting whether an $n$ node host graph contains a {\em not necessarily induced} copy of $H$ requires at least the time for detecting whether an $n$ node graph contains a $k$-clique. The previous result of this nature required that $H$ contains a $k$-clique which is disjoint from all other $k$-cliques of $H$.

\item We show that if the famous Hadwiger conjecture from graph theory is true, then detecting whether an $n$ node host graph contains a {\em not necessarily induced} copy of a pattern with chromatic number $t$ requires at least the time for detecting whether an $n$ node graph contains a $t$-clique. This implies that: (1) under Hadwiger's conjecture for {\em every} $k$-node pattern $H$, finding an {\em induced} copy of $H$ requires at least the time of $\sqrt k$-clique detection and size $\omega(n^{\sqrt{k}/4})$ for any constant depth circuit, and (2) unconditionally, detecting an {\em induced} copy of a random $G(k,p)$ pattern w.h.p. requires at least the time of $\Theta(k/\log k)$-clique detection, and hence also at least size $n^{\Omega(k/\log k)}$ for circuits of constant depth. 


\item We show that for every $k$, there exists a $k$-node pattern that contains a $k-1$-clique and that can be detected as an {\em induced} subgraph in $n$ node graphs in the best known running time for $k-1$-Clique detection. Previously such a result was only known for infinitely many $k$.


\item Finally, we consider the case when the pattern is a directed cycle on $k$ nodes, and we would like to detect whether a directed $m$-edge graph $G$ contains a $k$-Cycle as a {\em not necessarily induced} subgraph. We resolve a {\em 14 year old conjecture} of [Yuster-Zwick SODA'04] on the complexity of $k$-Cycle detection by giving a tight analysis of their $k$-Cycle algorithm. 
Our analysis improves the best bounds for $k$-Cycle detection in directed graphs, for all $k>5$.
\end{itemize}
\end{abstract}
\newpage
\setcounter{page}{1}
\section{Introduction}
One of the most fundamental graph algorithmic problems is Subgraph Isomorphism: given two graphs $G=(V,E)$ and $H=(V_H,E_H)$, determine whether $G$ contains a subgraph isomorphic to $H$. While the general problem is NP-complete, many applications (e.g. from biology \cite{bioapp1,bioapp2}) only need algorithms for the special case in which $H$ is a small graph pattern, of constant size $k$, while the host graph $G$ is large. This graph pattern detection problem is easily in polynomial time: if $G$ has $n$ vertices, the brute-force algorithm solves the problem in $O(n^k)$ time, for any $H$. 

Two versions of the Subgraph Isomorphism problems are typically considered. The first is the {\em induced} version in which one seeks an injective mapping $f:V_H\mapsto V$ so that $(u,v)\in E_H$ if and only if $(f(u),f(v))\in E$. The second is the {\em not necessarily induced} version where one seeks an injective mapping $f:V_H\mapsto V$ so that if $(u,v)\in E_H$ then $(f(u),f(v))\in E$ (however, if $(u,v)\notin E_H$, $(f(u),f(v))$ may or may not be an edge). It is not hard to show (e.g. via color-coding) that when $k$ is a constant, any algorithm for the induced version can be used to solve the not necessarily induced one (for the same pattern) in asymptotically the same time, up to logarithmic factors.

This paper considers two settings of the graph pattern detection problem: (1) Finding {\em induced} patterns of constant size $k$ in dense $n$-node undirected graphs, where the runtime is measured as a function of $n$, and (2) Finding not-necessarily induced patterns in sparse $m$-edge directed graphs; here we focus on $k$-Cycle patterns, a well-studied and important case.

\subsection{Hardness}
A standard generalization of a result of Ne\v{s}etril and Poljak~\cite{Clique2} shows that the induced subgraph isomorphism problem for any $k$-node pattern $H$ in an $n$-node host graph can be reduced in $O(k^2n^2)$ time to the $k$-Clique (or induced $k$-Independent Set (IS)) detection problem in $kn$-node graphs. Thus, for constant $k$, $k$-Clique and $k$-IS are the hardest patterns to detect. 

Following Itai and Rodeh~\cite{Clique1}, Ne\v{s}etril and Poljak~\cite{Clique2} showed that a $k$-Clique (and hence any induced or not-necessarily induced $k$-node pattern) can be detected in an $n$ node graph $G$ asymptotically in time $C(n,k):=M(n^{\lfloor k/3\rfloor}, n^{\lceil k/3\rceil}, n^{\lceil (k-1)/3\rceil})$, where $M(a,b,c)$ is the fastest known runtime for multiplying an $a\times b$ by a $b\times c$ matrix. A simple bound for $M(a,b,c)$ is $M(a,b,c)\leq abc/\min\{a,b,c\}^{3-\omega}$ where $\omega<2.373$ is the exponent of square matrix multiplication \cite{vstoc12,legallmult}, but faster algorithms are known (e.g. Le Gall and Urrutia~\cite{legall-rect-new}). In particular, $C(n,k)\leq O(n^{\omega k/3})$ when $k$ is divisible by $3$.

The $C(n,k)$ runtime for $k$-Clique detection has had no improvements in more than 40 years. Because of this, several papers have hypothesized that the runtime might be optimal for $k$-Cliques (and $k$-Independent Sets) (e.g. \cite{valiantparser,BringmannW17,Lincoln18-cycle}).

Meanwhile, for some $k$-node patterns $H$ that are not Cliques or Independent Sets, specialized algorithms have been developed that are faster than the $C(n,k)$ runtime for $k$-Clique. For instance, 
if $H$ is a $3$-node pattern that is not a triangle or an independent set, it can be detected in $G$ in linear time, much faster than the $C(n,3)=O(n^\omega)$ time for $3$-Clique/triangle. 
Following work of \cite{corneil,Ol90,EiGr04,KlKrMu00,KowalukLL13},
Vassilevska W. et al.~\cite{four-nodes} showed that every $4$-node pattern except for the $4$-Clique and $4$-Independent Set can be detected in $C(n,3)=O(n^\omega)$ time, much faster than the $C(n,4)$ runtime for $4$-Clique. Blaeser et al.~\cite{blaserpattern} recently showed that for $k\leq 8$ there are faster than $C(n,k)$ time algorithms for all non-clique non-independent set $k$-node patterns; for $k\leq 6$, their runtime is $C(n,k-1)$. Independently, we were able to show the same result, using an approach generalizing ideas from \cite{four-nodes}, see the Appendix. 

A natural conjecture, consistent with the prior work so far is that for every $k$ and every $k$-node pattern $H$ that is not a clique or independent set, one can detect it in an $n$ node graph in time $C(n,k-1)$. Blaeser et al. showed that for all $k$ of the form $3\cdot 2^\ell$ for integer $\ell$, there is a $k$-node pattern that (1) is at least as hard to detect as $k-1$-Clique and (2) can be detected in $C(n,k-1)$ time. 
We show that such a pattern exists for {\em all} $k\geq 3$ (Theorem \ref{thm:easierpatterns}).

While there exist $k$-node patterns that can be detected faster than $k$-Clique, it seems unclear how hard $k$-node pattern detection actually is. For instance, it could be that for every $k$, there is {\em some} induced pattern on $k$-nodes that can be detected in say $n^{\log\log(k)}$ time, or even $f(k) n^c$ time, where $c$ is independent of $k$. A Ramsey theoretic result tells us that every $k$-node $H$ either contains an $\Omega(\log k)$ size clique or an $\Omega(\log k)$ size independent set. Hence intuitively, detecting any $k$-node $H$ in an $n$ node graph should be at least as hard as detecting an $\Omega(\log k)$ size clique in an $n$ node graph. The widely believed Exponential Time Hypothesis (ETH) \cite{eth} is known to imply that $k$-Clique cannot be solved in $n^{o(k)}$ time \cite{ChenCFHJKX05}. Coupled with the Ramsey result, ETH should intuitively imply that no matter which $k$-node $H$ we pick, $H$-pattern detection cannot be solved in $n^{o(\log k)}$ time.

 Unfortunately, however, {\em it is still open whether every pattern that contains a $t$-clique is as hard to detect as a $t$-clique} (see e.g. \cite{blaserpattern}\footnote{Blaser et al.~\cite{blaserpattern} show that for the particular types of algorithms that they use a pattern that contains a $k$-clique cannot be found faster than a $k$-clique, and they note that such a result is not known for arbitrary algorithms.}).  In general, it is not clear what makes patterns hard to detect.
 
One of the few results related to this is by Floderus et al.~\cite{FloderusKLL15lower} who showed that if a pattern $H$ contains a $t$-Clique that is disjoint from all other $t$-Cliques in $H$, then $H$ is at least as hard to detect as a $t$-Clique. This implied strong clique-based hardness results for induced $k$-path and $k$-cycle. However, the reduction of \cite{FloderusKLL15lower} fails for patterns whose $k$-Cliques intersect non-trivially. 

The main difficulty in reducing $k$-Clique to the detection problem for other graph patterns $H$ can be seen in the following natural attempt used e.g. by \cite{FloderusKLL15lower}. Say $H$ has a $k$-clique $K$ and let $H'$ be the graph induced by the vertices of $H$ not in $K$. Let $G=(V,E)$ be an instance of $k$-Clique. We'll start by creating $k$ copies of $V$, $V_1,\ldots,V_k$. For every edge $(u,v)$ of $G$, add an edge between the copies of $u$ and $v$ in different parts. Every $k$-clique $C$ of $G$ appears in the new graph $k!$ times; we'll say that the main copy $\bar{C}$ of $C$ has the $i$th vertex of $C$ (in lexicographic order say) appearing in $V_i$. 
Now, add a copy $\bar{H'}$ of $H'$, using fresh vertices, and for every edge $(h,i)$ of $H$ with $h\in H'$ and $i\in K$, add edges from $h\in \bar{H'}$ to all vertices in $V_i$. This forms the new graph $G'$ and guarantees that if $G$ has a $k$-clique $C$, $G'$ contains a copy of $H$ which is just $\bar{C}$ together with $\bar{H'}$. The other direction of the reduction fails miserably however. If $G'$ happens to have a copy of $H$, there is no guarantee that any of the $k$-cliques of $H$ would have a node from each $V_i$ and hence form a clique of $G$. As a simple counterexample (Figure~\ref{fig:counter}) consider $H$ as a $4$-Cycle $(1,2,3,4)$ together with a node $5$ that has edges to all nodes of the $4$-Cycle. Starting from a graph $G$, WLOG we would pick $K$ to be $(1,2,5)$ and $H'=3,4$ and form $G'$ as described. Let $\bar{H'}$ contain the nodes $\bar{3},\bar{4}$ and let the parts of $G$ be $V_1,V_2,V_5$. Now the reduction graph $G'$ might contain a copy of $H$ even if $G$ has no $3$-cliques, as $\bar{4}$ could represent $5$, and $1,3$ and $2,4$ could be represented by two nodes each in $V_1$ and $V_5$ respectively; see Figure~\ref{fig:counter}. Hence the copy of $H$ wouldn't use $V_2$ at all and doesn't represent a triangle in $G$.

\begin{figure}
\centering
    \includegraphics[width=0.5\textwidth]{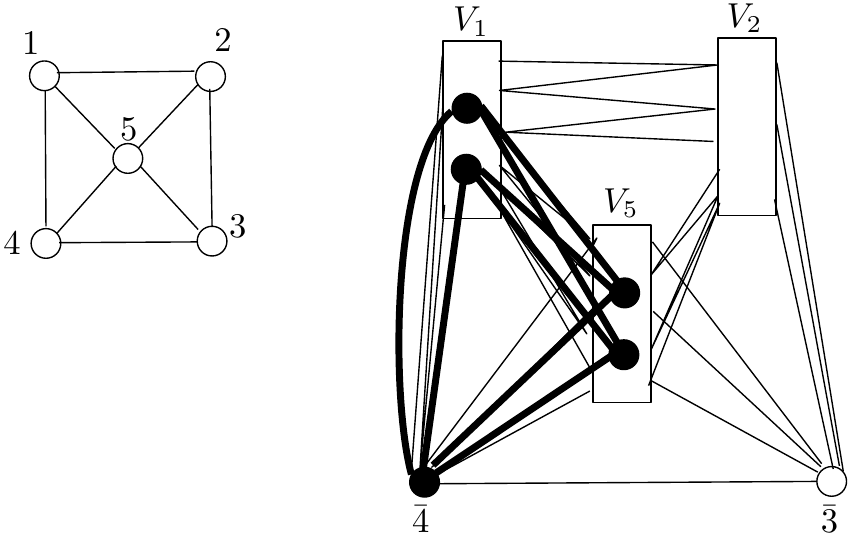}
    \caption{An example of how a simple reduction attempt fails to reduce $3$-Clique to $H$. The edges between the $V_i$ are determined by the $3$-Clique instance.}
	\label{fig:counter}
\end{figure}

One could try to modify the reduction, say by representing the nodes of $H'$ by copies of the vertices of $G$, as with $K$. However, the same issues arise, and they seem to persist in most natural reduction attempts.

With an intricate construction, we show how to overcome this difficulty.  
Our first main theorem is that patterns that contain $t$-cliques are indeed at least as hard as $t$-Clique, and in fact we prove it for the {\em not necessarily induced} case which automatically gives a lower bound for the induced case (Theorem~\ref{thm:K_tLB}
 in the body):
\begin{theorem}
Let $G=(V,E)$ be an $n$-node, $m$-edge graph and let $H$ be a $k$-node pattern such that $H$ has a $t$-clique as a subgraph. Then one can construct a new graph $G^*$ of at most $nk$ vertices in $O(k^2m+kn)$ time such that $G^*$ has a not necessarily induced subgraph isomorphic to $H$ if and only if $G$ has a $t$-clique. 
\end{theorem}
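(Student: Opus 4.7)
I would build $G^*$ by replacing each vertex of a fixed $t$-clique $K=\{k_1,\ldots,k_t\}$ of $H$ by a separate copy of $V(G)$ and each vertex of $H':=V(H)\setminus K$ by a single ``anchor'' vertex, and then connect these parts so that embeddings of $H$ into $G^*$ correspond to $t$-cliques of $G$ together with a canonical placement of $H'$. Concretely, I take $G^*$ on vertex set $V_1\sqcup\cdots\sqcup V_t\sqcup\{\bar h:h\in H'\}$, of size $tn+(k-t)\le nk$, and add edges as follows: between $V_i$ and $V_j$ for distinct $i,j\in[t]$, a bipartite copy of $G$ (so $u\in V_i$ and $v\in V_j$ are adjacent iff $uv\in E(G)$); for each edge $(k_i,h)\in E(H)$ with $h\in H'$, the complete bipartite graph between $\bar h$ and $V_i$; and for each edge $(h,h')\in E(H)$ with both endpoints in $H'$, the single edge $\bar h\bar{h}'$. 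This fits within the stated $O(k^2m+kn)$ time bound, touching each edge of $G$ once per pair of $K$-slots and each edge of $H$ once per attachment.

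For the forward direction, given a $t$-clique $\{v_1,\ldots,v_t\}$ of $G$, I would take the injection $f:V(H)\to V(G^*)$ with $f(k_i):=v_i\in V_i$ and $f(h):=\bar h$ for $h\in H'$. Every edge of $H$ maps to an edge of $G^*$ by construction: edges inside $K$ are realized by the bipartite $G$-copies between the $V_i$'s (using that $\{v_1,\ldots,v_t\}$ is a clique of $G$), $K$-to-$H'$ edges by the complete bipartite attachments of the $\bar h$'s to the $V_i$'s, and edges inside $H'$ by the explicit $\bar h\bar{h}'$ edges.

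For the backward direction, suppose $G^*$ contains a copy of $H$ via an injective map $f$. I would define a type function $\sigma:V(H)\to V(H)$ by setting $\sigma(h)=k_i$ if $f(h)\in V_i$ and $\sigma(h)=h'$ if $f(h)=\bar{h}'$ (for $h'\in H'$). Because each of the three edge classes of $G^*$ is supported only between prescribed pairs of parts, $\sigma$ is a graph homomorphism $H\to H$; and since $G^*$ has no edges inside any single part, $\sigma$ is injective on $K$, so $\sigma(K)$ is some $t$-clique of $H$. The crux---and the main obstacle---is to force $\sigma(K)=K$: without additional care $\sigma$ can legitimately send $K$ to a different $t$-clique $K''$ of $H$ whose edges are then realized through the ``liberal'' $\bar h$-adjacencies (complete bipartite between $\bar h$ and the $V_i$'s, and the fixed edges among the $\bar h$'s), producing a copy of $H$ in $G^*$ whose $G$-dependence encodes only some smaller or non-clique structure (such as a $4$-cycle rather than a triangle in the wheel counterexample of Figure~\ref{fig:counter}).

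To overcome this main obstacle I would enrich the construction so that the $V_i$'s and the $\bar h$'s become structurally distinguishable enough that no homomorphism $\sigma:H\to H$ can cross the $K/H'$ boundary, while preserving the forward direction. One concrete route is to attach a small private ``signature gadget'' to each anchor $\bar h$ whose only structurally compatible role in $H$ is the vertex $h$ itself, so that $f^{-1}(\bar h)=h$ is forced; by injectivity of $\sigma$ on $K$ this forces $\sigma(K)\subseteq K$, hence $\sigma(K)=K$. A more symmetric alternative is to case-analyze over all $t$-cliques $K''$ of $H$ and show that any $\sigma$ with $\sigma(K)=K''$ still yields a $t$-clique of $G$ by tracing $\sigma$-preimages back into $V_1\sqcup\cdots\sqcup V_t$. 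Once $\sigma(K)=K$ is established the extraction is immediate: each $f(k_i)\in V_i$ picks out a vertex $v_i\in V$, and for $(k_i,k_j)\in E(K)$ the edge $(f(k_i),f(k_j))$ in $G^*$ must be one of the bipartite $G$-edges, whence $v_iv_j\in E(G)$ and $\{v_1,\ldots,v_t\}$ is the desired $t$-clique.
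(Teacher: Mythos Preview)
Your construction is exactly the naive one the paper's introduction (and Figure~\ref{fig:counter}) sets up as a counterexample, and you correctly diagnose the obstacle: a copy of $H$ in $G^*$ can send $K$ to some other $t$-clique $K''$ that meets $H'$, and then the edges incident to anchors $\bar h$ encode nothing about $G$. However, neither of your proposed fixes works. The ``signature gadget'' idea is incoherent as stated: attaching a gadget to $\bar h$ changes $G^*$ but not $H$, and since $f$ maps $H$ into $G^*$, the gadget can simply be ignored by $f$; nothing forces $f^{-1}(\bar h)=h$. The ``case-analyze over all $K''$'' alternative is exactly what the wheel counterexample refutes: when $K''$ straddles the $K/H'$ boundary, tracing $\sigma$-preimages back into the $V_i$'s does not produce a $t$-clique of $G$, because the relevant edges are the free complete-bipartite or anchor--anchor edges, not the $G$-edges. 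So the backward direction remains genuinely unproven.

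The paper's solution is structurally different from anything you propose. Rather than blowing up the clique $K$, it defines a \emph{$t$-clique covering} of $H$: a minimum collection $\{C_1,\ldots,C_r\}$ of vertex subsets such that each $H[C_i]$ is $t$-colorable and every $t$-clique of $H$ lies in some $C_i$. The construction blows up \emph{all} of $C_1$ (not just $K$) into copies of $V(G)$, leaving $V(H)\setminus C_1$ as anchors. The forward direction uses a proper $t$-coloring of $C_1$ to place a $t$-clique of $G$. The backward direction is an extremal argument: if a copy of $H$ in $G^*$ has no $t$-clique entirely inside the blown-up part $S_1$, one shows that the pullbacks of $C_2,\ldots,C_r$ into $G^*$ form a $t$-clique covering of $H^*\cong H$ of size $r-1$, contradicting minimality of $r$. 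The key point you are missing is that the set being blown up must simultaneously be (i) $t$-colorable, so the forward direction works, and (ii) chosen via a minimality property that rules out ``escape'' in the backward direction.
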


Note that since the not necessarily induced pattern detection can be solved with the induced version, a lower bound for the not necessarily induced pattern detection gives a lower bound for the induced version. 
Since for every $k$-node graph $H$, either $H$ or its complement contains a clique of size $\Omega(\log k)$, ETH implies that no matter which $k$-node $H$ we pick, induced $H$-pattern detection cannot be solved in $n^{o(\log k)}$ time. 

Our second theorem shows that some patterns are even harder, as in fact {\em the hardness of a pattern grows with its chromatic number!} 

Our theorem relies on the widely believed Hadwiger conjecture~\cite{hadwiger} from graph theory which roughly states that every graph with chromatic number $t$ contains a $t$-clique as a minor. The Hadwiger conjecture is known to hold for $t\leq 6$ \cite{hadwiger6} and to almost hold for $t=7$ \cite{KawarabayashiT05} (It is equivalent to the $4$-Color Theorem for $t=5,6$ \cite{hadwiger6,fourcolorequiv,fourcolor2}.). It also holds for almost all graphs \cite{hadwigerrandom}. 
Our lower bound theorem, which also proved for the {\em not necessarily induced} case (Theorem~\ref{thm:chromaticLB} in the body) is:
\begin{theorem}
Let $G=(V,E)$ be an $n$-node graph and let $H$ be a $k$-node pattern with chromatic number $t$, for $t>1$. Then assuming that Hadwiger conjecture is true, one can construct $G^*$ on at most $nk$ vertices in $O(n^2k^2)$ time such that $G^*$ has a not necessarily induced subgraph isomorphic to $H$ if and only if $G$ has a $t$-clique. 
\end{theorem}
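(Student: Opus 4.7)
The plan is to invoke Hadwiger's conjecture on $H$ to obtain a $K_t$-minor structure, bake that structure into the construction of $G^*$, and then recover a $t$-clique from any copy of $H$ in $G^*$ by lifting a minor through a residual homomorphism.

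First, by Hadwiger applied to $H$, choose disjoint connected subsets $S_1, \ldots, S_t \subseteq V(H)$ with at least one $H$-edge between every pair. By greedily absorbing stray vertices into adjacent branch sets inside the connected component of $H$ that carries the chromatic number, we may assume $\bigcup_i S_i = V(H)$; let $c(h) \in [t]$ denote the branch index of $h$. Now set $V(G^*) = V(G) \times V(H)$, which has size $nk$. For each edge $(h, h') \in E(H)$, add either the intra-branch edges $\{(u, h), (u, h')\}$ for every $u \in V(G)$ if $c(h) = c(h')$, or the cross-branch edges $\{(u, h), (v, h')\}$ for every $\{u, v\} \in E(G)$ if $c(h) \neq c(h')$. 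This construction runs in $O(n^2 k^2)$ time.

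The forward direction is direct: given a $t$-clique $u_1, \ldots, u_t$ in $G$, the map $h \mapsto (u_{c(h)}, h)$ embeds $H$ into $G^*$, since intra-branch $H$-edges land on intra-branch $G^*$-edges at the same $u_i$, and cross-branch $H$-edges land on cross-branch $G^*$-edges via the $G$-edge $\{u_i, u_j\}$.

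For the converse, let $f$ be an embedding of $H$ into $G^*$ and write $f(h) = (\phi(h), \psi(h))$. Projecting onto the second coordinate shows that $\psi : H \to H$ is a graph homomorphism, and the two construction cases enforce the dichotomy that for every $H$-edge $(h, h')$, either $c(\psi(h)) = c(\psi(h'))$ and $\phi(h) = \phi(h')$, or $c(\psi(h)) \neq c(\psi(h'))$ and $(\phi(h), \phi(h')) \in E(G)$. Since $\psi$ factors as a homomorphism onto $\psi(H) \subseteq H$ and $\chi$ is monotone under homomorphisms, $\chi(\psi(H)) = t$, so applying Hadwiger to $\psi(H)$ yields branch sets $T_1, \ldots, T_t \subseteq V(\psi(H))$ of a $K_t$-minor. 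The next step is to lift this minor back through $\psi$: pick connected subsets $W_j \subseteq \psi^{-1}(T_j)$ together with witnessing $H$-edges between pairs $W_j, W_{j'}$, so that the intra-branch rule forces $\phi$ to be constant on each $W_j$ (with common value $u_j$), and the cross-branch rule forces $\{u_j, u_{j'}\} \in E(G)$ for every pair, producing the desired $t$-clique $\{u_1, \ldots, u_t\}$.

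The main obstacle is this final lifting step. One must choose representatives inside each $\psi^{-1}(T_j)$ so that a spanning tree of $\psi(H)[T_j]$ lifts to a connected subgraph of $H$ (so that $\phi$ is genuinely forced constant by a chain of intra-branch equalities), and one must arrange that each inter-$T_j$ witnessing edge lifts to an $H$-edge whose endpoints lie in distinct $S$-branches — otherwise the intra-branch rule would merely force $u_j = u_{j'}$ rather than produce a $G$-edge. Reconciling the $T_j$'s (chosen by Hadwiger in the image $\psi(H)$) with the $S_i$'s (hard-coded into $G^*$) is the delicate combinatorial heart of the argument, and is the step where the injectivity of $f$ together with the connectivity of each $S_i$ must be exploited most carefully.
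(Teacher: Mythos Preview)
Your construction and the forward direction are fine, and you have correctly located the crux. But the gap you name is real and, as far as I can see, not closable along the lines you sketch.

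The problem is that the branch sets $T_1,\ldots,T_t$ obtained by applying Hadwiger to $\psi(H)$ bear no a priori relationship to the branch sets $S_1,\ldots,S_t$ hard-coded into $G^*$. For $\phi$ to be forced constant on $W_j$, every internal $H$-edge of $W_j$ must be intra-branch under $c\circ\psi$; that is, $\psi(W_j)$ must lie inside a \emph{single} $S_i$, not merely inside $T_j$. A given $T_j$ may straddle several $S_i$'s, and then there is no guarantee of a connected monochromatic $W_j$ that still sees cross-branch witnessing edges to every other $W_{j'}$. Injectivity of $f$ and connectivity of the $S_i$ constrain $\psi$, but the obstruction lives in the interaction of two unrelated $K_t$-minor decompositions of $V(H)$, and neither hypothesis forces them to align. (A separate, smaller issue: your ``greedy absorption'' to make $\bigcup_i S_i=V(H)$ only works within one component; for disconnected $H$ the coloring $c$ is not well-defined on the rest.)

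The paper avoids this reconciliation problem by a genuinely different mechanism. Instead of one global application of Hadwiger to $H$, it fixes a carefully chosen subgraph $F\subseteq H$ --- a $t$-chromatic subgraph on the minimum possible number of vertices, and among those with the maximum number of edges --- and introduces the notion of an \emph{$F$-covering}: a collection $C_1,\ldots,C_r$ of vertex sets, each admitting a single coloring with respect to which \emph{every} induced copy of $F$ inside it has a $K_t$-minor, and such that every copy of $F$ in $H$ lies in some $C_i$. One takes $r$ minimal and blows up only $C_1$ (vertices outside $C_1$ become singletons). For the converse, if a copy $H^*$ of $H$ in $G^*$ contains a copy of $F$ inside the blown-up region, the extremal choice of $F$ forces its projection to be an honest induced copy of $F$ in $H$; then the $(K_t,F)$-minor colorability of $C_1$ --- not a fresh Hadwiger application to the image --- supplies a $K_t$-minor already aligned with the construction's coloring, and a $t$-clique in $G$ drops out. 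If no copy of $F$ lies in the blown-up region, one shows that $S_2,\ldots,S_r$ already form an $F$-covering of $H^*\cong H$ of size $r-1$, contradicting minimality. The point is that the coloring used to build $G^*$ is engineered so that every relevant copy of $F$ automatically carries a compatible $K_t$-minor; this is precisely the alignment your two-Hadwiger approach lacks.
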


This is the first connection between the Hadwiger conjecture and Subgraph Isomorphism, to our knowledge.
Let us see some exciting consequences of this theorem. First, we get that if $t$ is the maximum of the chromatic numbers of $H$ and its complement, then an induced $H$ is at least as hard as $t$-Clique to detect. Now, it is a simple exercise that the maximum of the chromatic number of a $k$-node graph and its complement is at least $\sqrt k$. 
{\em Thus, every induced $H$ on $k$-nodes is at least as hard as $\sqrt k$-Clique. There are no easy induced patterns.}

\begin{corollary} No matter what $k$-node $H$ we take, under ETH and the Hadwiger Conjecture, the induced subgraph isomorphism problem for $H$ in $n$-node graphs cannot be solved in $n^{o(\sqrt k)}$ time.
\end{corollary}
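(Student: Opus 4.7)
The plan is to combine the chromatic-number hardness theorem stated just above with two ingredients: the standard complement trick available for induced pattern detection, and a classical Ramsey-flavor inequality relating $\chi(H)$ and $\chi(\bar H)$. The conclusion will then follow from the ETH lower bound of Chen et al.~\cite{ChenCFHJKX05}, which rules out $n^{o(t)}$-time algorithms for $t$-Clique.

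First I would argue that induced $H$-detection is in fact as hard as $t$-Clique for $t = \max(\chi(H),\chi(\bar H))$. The preceding theorem gives hardness $\chi(H)$-Clique for not-necessarily-induced $H$-detection, and by the remark in the introduction any algorithm for induced $H$ solves not-necessarily-induced $H$ in the same time up to logarithmic factors, so this already gives $\chi(H)$-Clique hardness for the induced version. To get $\chi(\bar H)$-Clique hardness as well, I would use the fact that an induced copy of $H$ in $G$ is the same thing as an induced copy of $\bar H$ in $\bar G$; hence an induced-$H$ algorithm, applied to $\bar G$, decides induced $\bar H$-detection in the same asymptotic time, and the previous argument applied to $\bar H$ yields $\chi(\bar H)$-Clique hardness. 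Taking the better of the two bounds gives the desired $\max(\chi(H),\chi(\bar H))$-Clique hardness. Note that this complement step is specific to the induced setting — it is not available for the not-necessarily-induced version — which is why the corollary is phrased for induced detection.

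Next I would establish the elementary inequality $\chi(H)\cdot\chi(\bar H) \geq k$. Given optimal colorings $c_1$ of $H$ with $\chi(H)$ colors and $c_2$ of $\bar H$ with $\chi(\bar H)$ colors, the pair $(c_1(v),c_2(v))$ must be distinct for every vertex $v$: if two vertices shared a pair, equality in $c_1$ would force them non-adjacent in $H$ while equality in $c_2$ would force them non-adjacent in $\bar H$, i.e.\ adjacent in $H$, a contradiction. There are thus at least $k$ such pairs, so $\chi(H)\chi(\bar H)\geq k$ and consequently $\max(\chi(H),\chi(\bar H)) \geq \sqrt k$.

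Combining the two steps, induced $H$-detection is at least as hard as $t$-Clique for some $t\geq \sqrt k$; under ETH, $t$-Clique requires time $n^{\Omega(t)} = n^{\Omega(\sqrt k)}$, completing the proof. There is essentially no obstacle here beyond assembling the ingredients correctly; the one subtle point worth highlighting is why the $\bar H$ half of the hardness bound is legitimate, namely that induced detection is closed under complementing both pattern and host, whereas not-necessarily-induced detection is not.
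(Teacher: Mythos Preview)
Your proposal is correct and follows essentially the same route as the paper: combine Theorem~\ref{thm:chromaticLB} with the complement trick for induced detection to get $\max(\chi(H),\chi(\bar H))$-Clique hardness, then invoke the Nordhaus--Gaddum-type bound $\chi(H)\chi(\bar H)\ge k$ (which the paper calls ``a simple exercise'') and the ETH lower bound for Clique. Your write-up in fact supplies the short proof of $\chi(H)\chi(\bar H)\ge k$ that the paper omits, and correctly flags that the complement step is only available in the induced setting.
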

This is the first result of such generality.

A second consequence comes from circuit complexity. Rossman~\cite{RossmanClique} showed that for any constant integers $k$ and $d$, any circuit of depth $d$ requires size $\omega(n^{k/4})$ to detect a $k$-Clique. Because of the simplicity of our reduction (it can be implemented in constant depth), we also obtain a circuit lower bound for induced pattern detection for any $H$ node subgraph:

\begin{corollary} Let $d$ and $k$ be any integer constants. No matter what $k$-node $H$ we take, under the Hadwiger Conjecture, any depth $d$ circuit for
the induced subgraph isomorphism problem for $H$ in $n$-node graphs requires size $\omega(n^{\sqrt k / 4})$.
\end{corollary}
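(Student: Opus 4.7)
The plan is to combine Theorem~\ref{thm:chromaticLB} with Rossman's constant-depth circuit lower bound for clique detection and the elementary Ramsey-type inequality $\chi(H)\cdot\chi(\bar H)\geq k$, which holds for every $k$-vertex graph $H$ (since an optimal $\chi(H)$-coloring of $H$ partitions $V(H)$ into independent sets of $H$, each of which is a clique in $\bar H$, so of size at most $\chi(\bar H)$). Setting $t:=\max\{\chi(H),\chi(\bar H)\}\geq\sqrt k$, and noting that induced detection of $H$ in $G$ is equivalent to induced detection of $\bar H$ in the complement $\bar G$ via a depth-$1$ input negation, we may assume without loss of generality that $t=\chi(H)$.

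Under Hadwiger's conjecture, Theorem~\ref{thm:chromaticLB} then furnishes a reduction $\phi$ mapping an $n$-vertex graph $G$ to an $(nk)$-vertex graph $G^*$ such that $G^*$ contains $H$ as a (not-necessarily-induced) subgraph iff $G$ contains a $t$-clique. Two properties of $\phi$ would have to be verified by inspecting the construction: (i) $\phi$ is computable by a constant-depth circuit of size $O(n^2k^2)$, since each adjacency bit of $G^*$ is a Boolean function of a constant number of bits of $G$ together with the fixed description of $H$; and (ii) the equivalence in fact strengthens to the \emph{induced} sense, i.e., $G^*$ contains an induced copy of $H$ iff $G$ contains a $t$-clique. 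Property (ii) should hold because the only $H$-copies arising in $G^*$ come from embedding a $t$-clique of $G$ alongside the fixed gadget encoding $H$ via its Hadwiger minor, and that embedding does not introduce spurious edges that would promote a non-edge of $H$ to an edge of $G^*$.

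Given these, suppose a depth-$d$ circuit $\mathcal{C}$ of size $S(N)$ decides induced $H$-detection on $N$-vertex graphs. Precomposing $\mathcal{C}$ with $\phi$ yields a depth-$(d+O(1))$, size-$(S(nk)+O(n^2k^2))$ circuit deciding $t$-Clique on $n$-vertex graphs. Rossman's bound, applied with the constant depth $d+O(1)$ and the constant clique size $t$, forces this size to be $\omega(n^{t/4})\geq \omega(n^{\sqrt k/4})$, whence $S(nk)=\omega(n^{\sqrt k/4})$. Since $k$ is constant, substituting $N=nk$ gives $S(N)=\omega(N^{\sqrt k/4})$, which is exactly the claimed bound.

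The main obstacle is establishing property (ii): Theorem~\ref{thm:chromaticLB} is stated only for the not-necessarily-induced version, while the corollary concerns induced detection, and the usual color-coding-style conversion between the two is not obviously constant-depth. This forces us to open up the construction rather than use it as a black box, and to verify that the Hadwiger-minor gadget encoding $H$ forces no additional adjacencies on any purported induced $H$-copy in $G^*$. The constant-depth implementability in (i), by contrast, is expected to be routine once the construction is written down, since it is a local gadget-based duplication of $G$'s adjacency structure across $k$ vertex classes with fixed inter-class edges dictated by $H$ and its $t$-clique minor.
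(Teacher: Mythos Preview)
Your approach is correct and matches the paper's: compose the reduction of Theorem~\ref{thm:chromaticLB} (which is a projection, hence constant-depth) with Rossman's bound, using $\chi(H)\cdot\chi(\bar H)\geq k$ and complementation to ensure $t\geq\sqrt k$.

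Your worry about property~(ii) is overstated, though. You do not need to re-analyze the gadget or invoke any color-coding conversion: the forward direction of the proof of Theorem~\ref{thm:chromaticLB} already constructs an \emph{induced} copy of $H$ in $G^*$ from a $t$-clique in $G$ (the paper explicitly checks both edges and non-edges of the image $H^*$). Since any induced copy is in particular a not-necessarily-induced copy, the backward direction of the theorem applies verbatim. Hence ``$G$ has a $t$-clique'' $\Leftrightarrow$ ``$G^*$ has an induced $H$'' $\Leftrightarrow$ ``$G^*$ has a non-induced $H$'' all coincide, and no separate induced-vs.-non-induced reduction is needed.
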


A third consequence is that in fact almost all $k$-node induced patterns are very hard -- at least as hard as $\Theta(k/\log k)$-Clique. Consider an Erd\"{o}s-Renyi graph $H$ from $G(k,p)$ for constant $p$. It is known \cite{hadwigerrandom} that the Hadwiger conjecture holds for $H$ with high probability. Moreover, the chromatic number of such graphs (and their complements) is with high probability $\Theta(k/\log k)$ \cite{chromaticrandom}; meanwhile the clique and independent set size is only $O(\log k)$. Thus our chromatic number theorem significantly strengthens our first theorem.

\begin{corollary} For almost all $k$-node patterns $H$, under ETH, induced $H$ detection in $n$ node graphs cannot be done in $n^{o(k/\log k)}$ time.\end{corollary}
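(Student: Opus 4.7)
The plan is to combine Theorem~\ref{thm:chromaticLB} with two classical results about random graphs and with Chen et al.'s ETH-based $k$-Clique lower bound~\cite{ChenCFHJKX05}. First I would sample $H$ from the Erd\H{o}s--R\'enyi distribution $G(k,p)$ with constant $p\in(0,1)$; standard results give that with high probability in $k$, (i) Hadwiger's conjecture holds for $H$~\cite{hadwigerrandom}, and (ii) both $\chi(H)$ and $\chi(\bar H)$ are $\Theta(k/\log k)$~\cite{chromaticrandom}. I would then set $t := \chi(H) = \Theta(k/\log k)$ and feed this $H$ into the chromatic number reduction.

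Since the reduction underlying Theorem~\ref{thm:chromaticLB} only needs Hadwiger's conjecture to hold at the specific pattern being used, property (i) lets us invoke it on the random $H$ without assuming Hadwiger in general. The theorem then turns a $t$-Clique instance on $n$ nodes into a not-necessarily-induced $H$-detection instance on $O(n)$ vertices in polynomial time. Coupled with the standard color-coding reduction from not-necessarily-induced to induced detection for the same constant-size pattern (recorded in the introduction), an induced $H$-detection algorithm running in time $T(n)$ would yield a $t$-Clique algorithm in time $O(T(n))$ up to polylogarithmic factors.

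To conclude, I would invoke~\cite{ChenCFHJKX05}: under ETH, $t$-Clique requires $n^{\Omega(t)}$ time. Substituting $t = \Theta(k/\log k)$ gives the desired $n^{\Omega(k/\log k)}$ lower bound for induced $H$-detection, valid for a $1-o(1)$ fraction of $k$-vertex patterns $H$.

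The main obstacle I expect is verifying that Theorem~\ref{thm:chromaticLB}'s reduction truly depends only on Hadwiger's conjecture being true for the individual pattern $H$ rather than on the full conjecture---this is what makes the corollary's hypothesis ``ETH'' rather than ``ETH and Hadwiger''. A secondary point to check is that the $O(nk)$ blowup from the reduction and the color-coding step do not degrade the $k/\log k$ exponent; since $k$ is treated as a constant from the viewpoint of running time in $n$, both are routine but worth explicit confirmation, as is the fact that Hadwiger and the chromatic-number concentration hold \emph{simultaneously} with high probability (by a union bound).
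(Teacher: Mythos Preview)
Your proposal is correct and follows the paper's argument essentially verbatim: the paragraph immediately preceding the corollary sketches the same chain---sample $H\sim G(k,p)$, invoke~\cite{hadwigerrandom} to get Hadwiger for $H$ whp and~\cite{chromaticrandom} to get $\chi(H)=\Theta(k/\log k)$ whp, then apply Theorem~\ref{thm:chromaticLB} together with the ETH clique lower bound of~\cite{ChenCFHJKX05}. The obstacle you flag (that the reduction must only need Hadwiger for the specific pattern rather than the full conjecture) is indeed the one nontrivial point, and the paper treats it at the same level of informality you do.
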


We also immediately obtain, via Rossman's lower bound, that for almost all $k$-node patterns $H$, any constant depth circuit that can detect an induced $H$ requires size $n^{\Omega(k/\log k)}$.

\subsection{Detecting not-necessarily induced directed $k$-Cycles.}
Some of the most striking difference between the complexity of induced and not-necessarily induced subgraph detection is in the $k$-Path and $k$-Cycle problems. Since a $k$-Path and a $k$-Cycle both contain an independent set on $\lfloor k/2\rfloor$ nodes, the induced version of their subgraph detection problems is at least as hard as detecting $\lfloor k/2\rfloor$-cliques, and 
needs $C(n,\lfloor k/2\rfloor)$ time unless there is a breakthrough in Clique detection. Thus also under ETH, induced $k$-Path and $k$-Cycle cannot be solved in $n^{o(k)}$ time. Monien~\cite{monien}, however, showed that for all constants $k$, a non-induced $k$-Path can be detected with constant probability in {\em linear} time. Thus, for constant $k$, the non-induced $k$-Path problem has an essentially optimal (randomized) algorithm. 
With the same ideas, a $k$-Cycle can be found in $\tilde{O}(n^\omega)$ time. Due to the tight relationship between triangle detection and Boolean matrix multiplication (e.g. \cite{focsy}), this runtime is often conjectured to be optimal for dense graphs. For sparse graphs, however, there has been a lot of active research in improving the runtime of $k$-Cycle detection, and it is completely unclear what the best runtime should be.

Alon, Yuster and Zwick~\cite{AlYuZw97} gave several algorithms for both directed and undirected cycle detection. The bounds for directed graphs are as follows. For $3$-Cycles (triangles) \cite{AlYuZw97} gives an algorithm running in time $O(m^{2\omega/(\omega+1)})\leq O(m^{1.41})$, which is still the fastest algorithm for the problem in sparse graphs. For general $k$, one can find a $k$-Cycle in time $O(m^{2-2/k})$ if $k$ is even and in time $O(m^{2-2/(k+1)})$ if $k$ is odd. These last algorithms do not use matrix multiplication.

Yuster and Zwick~\cite{YuZw04} set out to improve upon the general $k$-Cycle algorithms above using fast matrix multiplication. They presented an algorithm that combines most known techniques for cycle detection and works for arbitrary $k\geq 3$. 
However, they were not able to analyze the complexity of their algorithm in general. They showed that for $k=4$, the algorithm runs in $O(m^{(4\omega-1)/(2\omega+1)})\leq O(m^{1.48})$ time, and that for $k=5$, it runs in time $O(m^{3\omega/(\omega+2)})\leq O(m^{1.63})$. Both bounds improve the runtimes from \cite{AlYuZw97}.

Already for $k=6$ the analysis of the algorithm seemed very difficult. Yuster and Zwick ran computer simulations to find the worst case runtime for $k=6$ and beyond and came up with conjectures for what the runtime should be for $k=6$ and for all odd $k$. They also stated that for even $k$ larger than $6$, the runtime expression is likely extremely complicated. Their conjectures have remained unproven for over 14 years.

In this paper we present an analysis of the running time of the Yuster-Zwick algorithm, proving the two conjectures (for $k=6$ and odd $k$). We give an analysis of the runtime for even $k$ as well. Our bound is tight, assuming that the matrix multiplication exponent $\omega$ is $2$. For larger values of $\omega$, the tight bound on the runtime is a step function of $\omega$ that remains to be analyzed. Our final result is as follows:

\begin{theorem}\label{thm:cycles} There is an algorithm for $k$-Cycle detection in $m$ edge directed graphs (the Yuster-Zwick algorithm) which runs in $\tilde{\Theta}(m^{c_k})$ time, where
\begin{itemize}
\item $c_k=\omega(k+1)/(2\omega+k-1)$ when $k$ is odd, 
\item $c_4=(4\omega-1)/(2\omega+1)$
\item \begin{equation*}
c_6 = \begin{cases} 
\frac{10 \omega -3}{4 \omega +3}, & \text{if}\ 2 \leq \omega \leq \frac{13}{6}\\
\frac{22-4\omega}{17 - 4\omega}, & \text{if}\ \frac{13}{6} \leq \omega \leq \frac{9}{4}\\
\frac{11 \omega -2}{4\omega + 5}, & \textit{if}\ \frac{9}{4} \leq \omega \leq \frac{16}{7} \\
\frac{10 - \omega}{7 - \omega}, & \textit{if}\ \frac{16}{7} \leq \omega \leq \frac{5}{2}
\end{cases}
\end{equation*}
\item $c_k\leq (k\omega-4/k)/(2\omega+k-2-4/k)$ for all even $k\geq 4$. This is tight for $\omega=2$.
\end{itemize}
\end{theorem}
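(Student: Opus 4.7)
The plan is to give a tight analysis of the Yuster--Zwick algorithm by (i) writing its running time as an optimization problem over a degree threshold $\Delta$ and a choice of cycle-partition strategy, (ii) solving the optimization to obtain the closed-form exponents $c_k$, and (iii) for each $k$, exhibiting worst-case instances which match the upper bound.

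First I would recall the algorithm: vertices are split into high-degree (degree $\geq \Delta$) and low-degree (degree $< \Delta$) classes, with at most $2m/\Delta$ high-degree vertices. To detect a $k$-cycle, one iterates over all $2^k$ labellings of the cycle's vertices as H/L. For a fixed labelling, each maximal block of consecutive L-vertices of length $\ell$ on the cycle contributes a cost of roughly $m \cdot \Delta^{\ell-1}$ via direct path enumeration; the resulting ``contracted'' H-graph endpoint matrix is then composed using rectangular matrix multiplication of the form $M(a,b,c)$, where the dimensions are bounded by $2m/\Delta$ on the H-side and by products of the low-degree enumerations on the other sides. Summing these contributions, the total cost is a sum of monomials in $m$ and $\Delta$, one per labelling, and one picks $\Delta$ to minimize the maximum.

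For odd $k$, I would show that the binding labelling is a single balanced alternation of L-blocks, giving a single monomial whose optimization yields $\Delta^* \sim m^{(\omega-1)/(2\omega+k-1)}$ and the claimed exponent $c_k = \omega(k+1)/(2\omega+k-1)$; the cases $k=3,5$ reproduce the known bounds. The case $k=4$ is a single-regime optimization identical to Yuster--Zwick's. For general even $k$, I would exhibit one particular ``balanced'' labelling strategy that yields $c_k \leq (k\omega - 4/k)/(2\omega + k - 2 - 4/k)$, and verify tightness at $\omega=2$ by plugging in. The hard case is $k=6$: here several labelling patterns become simultaneously binding as $\omega$ varies, and the four regimes in the statement correspond to switchpoints where the dominant H/L pattern changes. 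I would enumerate the candidate labellings, reduce each to a linear expression in $\log_m \Delta$ whose exponent is a convex combination involving $\omega$, compute the upper envelope, and verify that the switchpoints $\omega \in \{13/6,\ 9/4,\ 16/7\}$ arise exactly where two expressions coincide.

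For the tightness (lower-bound-on-the-algorithm) direction, I would describe adversarial graphs with carefully calibrated degree profiles: essentially a bipartite-like construction in which the number of high-degree vs.\ low-degree vertices and the number of paths between each pair is set so that, for any choice of $\Delta$, some labelling incurs $\Omega(m^{c_k})$ work. The odd cases and $k=4$ follow from the Yuster--Zwick construction; for $k=6$ I would give one construction per regime of $\omega$. The principal obstacle is the $k=6$ case analysis: proving that the $\omega$-axis decomposes into \emph{exactly} the four subintervals listed, rather than more or fewer, which requires ruling out every labelling not identified as optimal in any regime. A secondary technical difficulty is that the true $M(a,b,c)$ is itself a piecewise function of $\omega$; since our analysis uses only the inequality $M(a,b,c) \leq abc/\min(a,b,c)^{3-\omega}$, the general even-$k$ bound is tight only at $\omega=2$, with the finer step function left for future work.
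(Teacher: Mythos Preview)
Your proposal has a genuine structural gap: it misdescribes the Yuster--Zwick algorithm and therefore sets up the wrong optimization problem. The YZ algorithm does \emph{not} use a single high/low threshold $\Delta$; it partitions vertices into $\log n$ degree classes and, for each of the $(\log n)^k$ tuples $(d_0,\ldots,d_{k-1})\in[0,1]^k$ (where vertex $i$ of the cycle has degree $\approx m^{d_i}$), computes path matrices $B^d_{i,j}$ via a three-way recursion (extend left, extend right, or matrix-multiply at an intermediate split). The exponent $c_k$ is then
\[
c_k=\max_{(d_0,\ldots,d_{k-1})}\ \min\Bigl\{\min_i(2-d_i),\ C_k(d_0,\ldots,d_{k-1})\Bigr\},
\]
a $k$-dimensional continuous optimization, not a one-parameter choice of $\Delta$ together with a discrete H/L labelling. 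Your ``binding labelling is a single balanced alternation'' picture for odd $k$ does not survive this: the paper must prove $C_k(d_0,\ldots,d_{k-1})\le B$ for \emph{every} tuple, and the argument (even for odd $k$) is an intricate induction over auxiliary sequences $(a_n),(b_n)$ and a family of ``conditions $(p,q)$'' tracking which sub-arcs are $\delta$-low, not a single-monomial balance.

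The mismatch is most visible for $k=6$, where the paper's tight ``hard-case degree classes'' have three or four distinct degree values, e.g.\ $(\tfrac{4\delta}{3},\delta,\delta,\tfrac{2\delta}{3},\tfrac{2\delta}{3},\tfrac{2\delta}{3})$ for $2\le\omega\le\tfrac{13}{6}$ and $(2{-}B,2{-}B,2B{-}3,\tfrac{2-B}{2},\tfrac{2-B}{2},2B{-}3)$ for $\tfrac{16}{7}\le\omega\le\tfrac{5}{2}$. A binary H/L split cannot generate these tuples, so your framework cannot certify tightness, and correspondingly your upper-bound case analysis (enumerating H/L labellings) will not see the configurations that actually determine the four $\omega$-regimes. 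Likewise, the lower bound in the paper is not an adversarial graph construction but an explicit degree-class tuple on which the recursive cost $C_k$ provably attains $B$. To repair the proposal you would need to replace the single-threshold model by the full recursion $P^d_{i,j}$ of Equation~(\ref{eq:inductive}) and then carry out the paper's $\delta$-low/$\delta$-high arc arguments; the H/L combinatorics you sketch is a coarse shadow of that analysis and does not suffice.
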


\paragraph{Related Work.}
 Vassilevska~\cite{thesis} showed that $K_k-e$ (a $k$-clique missing an edge) can be found in $O(n^{k-1})$ time without using fast matrix multiplication, whereas the fastest algorithms for $k$-Clique without fast matrix multiplication run in $O(n^k/\log^{k-1} n)$ time~\cite{Vassilevska09}; this was recently improved by Blaeser et al. \cite{blaserpattern} who showed that every $k$ node pattern except the $k$-Clique and $k$-Independent Set can be detected in time $O(n^{k-1})$.  Before this, Floderus et al. \cite{FloderusKLL13small} showed that 5 node patterns\footnote{All patterns except for $K_5$, $K_4+e$, $(3,2)-$fan, gem, house, butterfly, bull, $C_5$, $K_{1,4}$, $K_{2,3}$ and their complements; for these subgraphs the fastest runtime remained $C(n,5)\leq O(n^{4.09})$.} can be found in $O(n^4)$ time, again without using fast matrix multiplication.

Some other related work includes improved algorithms for subgraph detection when $G$ has special structure (e.g. \cite{KowalukL17walcom} and \cite{FominLRSR12}).
Other work counts the number of occurrences of a pattern in a host graph (e.g. \cite{KowalukLL13,VW09,CurticapeanDM17}). Finally, there is some work on establishing conditional lower bounds. Floderus et al.~\cite{FloderusKLL15lower} produced reductions from $k$-Clique (or $k$-Independent Set) to the detection problem of $\ell$-patterns for $\ell>k$ (but still linear in $k$). They show for instance that finding an induced $k$-path is at least as hard as finding an induced $k/2$-independent set. Lincoln et al. \cite{Lincoln18-cycle} give conditional lower bounds for not-necessarily induced directed $k$-cycle detection.
For instance, they show that if $k$-Clique requires essentially $C(n,k)$ time, then finding a directed $k$-Cycle in an $m$ edge graph requires $m^{2\omega k/(3(k+1))-o(1)}$ time. This lower bound is lower than the upper bounds in this paper, but they do show that superlinear time is likely needed.

Detecting $k$-Cycles in undirected graphs is an easier problem, when $k$ is an even constant. Yuster and Zwick~\cite{evencycles} showed that a $k$-Cycle in an undirected graph can be detected (and found) in $O(n^2)$ time for all even constants $k$. Dahlgaard et al. \cite{DahlgaardKS17} extended this result showing that $k$-Cycles for even $k$ in $m$-edge graphs can be found in time $\tilde{O}(m^{2k/(k+1)})$. Their result implies that of \cite{evencycles}, as by a result of Bondy and Simonovits \cite{BoSi74}, any $n$ node graph with $\geq 100k n^{1+1/k}$ edges must contain a $2k$-Cycle.
When $k$ is an odd constant, the $k$-Cycle problems in undirected and directed graphs are equivalent (see e.g. \cite{thesis}).


\section{Lower bounds}
\label{induced-LB}
In this section we consider the problem of detecting and finding a (not necessarily induced) copy of a given small pattern graph $H$ in a host graph $G$. This is the variant of subgraph isomorphism in which the pattern $H$ is fixed, on a constant $k$ number of vertices, and $G=(V,E)$ with $|V|=n$ is given as an input. We focus on the hardness of this problem: we show that any fixed pattern that has a $t$-clique as a subgraph, is not easier to detect as a subgraph than a $t$-clique, formally stated as Theorem \ref{thm:K_tLB}.
First, we start by an easier case of the theorem where the pattern is $t$-chromatic to depict the main idea of our proof and then we proceed with the proof of the theorem for all patterns. Recall that a \emph{proper vertex coloring} of a graph is an assignment of colors to each of its vertices such that no edge connects two identically colored vertices. If the set of colors is of size $c$, we say that the graph is $c$-colorable. The \emph{chromatic number} of a graph is the smallest number $c$ for which the graph is $c$-colorable, and 
we call such graph \emph{$c$-chromatic}. In the second part of this section, we prove a stronger lower bound using Hadwiger conjecture, showing that under this conjecture any $t$-chromatic pattern is not easier to detect as a subgraph than a $t$-clique.

\begin{theorem}
\label{thm:K_tLB}
Let $G=(V,E)$ be an $n$-node graph and let $H$ be a $k$-node pattern such that $H$ has a $t$-clique as a subgraph. Then one can construct $G^*$ on at most $nk$ vertices in $O(k^2m+kn)$ time such that $G^*$ has a (not necessarily induced) subgraph isomorphic to $H$ if and only if $G$ has a $t$-clique. 
\end{theorem}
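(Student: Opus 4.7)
The plan is to build $G^*$ as a structured blow-up of $H$ by $V(G)$ using a well-chosen partition of $V(H)$ into $t$ connected ``bags'' $B_1,\dots,B_t$ with $v_i\in B_i$ for each vertex of the $t$-clique $K=\{v_1,\dots,v_t\}$.  For connected $H$ such a partition can be produced by a greedy BFS that extends the singletons $\{v_i\}$ along $H$-edges, and components of $H$ that miss $K$ are handled by attaching fixed copies to $G^*$ as a disjoint union.  Because the $K$-edges of $H$ already supply a crossing $H$-edge between every pair $B_i,B_j$, the bags form a $K_t$-minor of $H$.

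Take $V(G^*):=V(G)\times V(H)$, giving at most $nk$ vertices.  For every edge $(h,h')\in E(H)$ I add edges to $G^*$ by two rules depending on the bags: if $h$ and $h'$ lie in the same bag I include the ``diagonal'' $u^h\sim u^{h'}$ for every $u\in V(G)$; if $h$ and $h'$ lie in different bags I include $u^h\sim w^{h'}$ iff $(u,w)\in E(G)$.  Up to the constants hidden in $k$, this graph is buildable in $O(k^2m+kn)$ time.

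The forward direction is direct: given a $t$-clique $\{u_1,\dots,u_t\}\subseteq V(G)$, the map $\phi(h)=u_{b(h)}^h$ (with $b(h)=i$ iff $h\in B_i$) is an injective $H$-embedding, since same-bag edges are satisfied by the diagonal rule (both endpoints use $u_i$) and different-bag edges become $(u_i,u_j)\in E(G)$, which holds by the clique.

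The main step is the reverse direction.  Writing any $H$-embedding as $\phi(h)=(w_h,a_h)$, the second coordinate $a\colon V(H)\to V(H)$ is forced to be a graph homomorphism (every $G^*$-edge requires $(a_h,a_{h'})\in E(H)$), so $a(K)$ is again a $t$-clique in $H$.  Setting $\tau(h):=\mathrm{bag}(a(h))$, the diagonal rule forces $w$ to be constant on each class $\tau^{-1}(i)$ along $H$-edges inside that class, while the $G$-rule produces a $G$-edge between the $w$-values of any two classes joined by an $H$-edge.  Combined with the fact that the bags form a $K_t$-minor with a crossing edge between every pair, these constraints supply $t$ pairwise $G$-adjacent distinct $w$-values, which is the desired $K_t$ in $G$.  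The delicate point, and precisely what the naive construction from the introduction fails to control, is when $a$ does not respect the bag partition and $a(K)$ lands in fewer than $t$ bags; here I would exploit the fact that the bags cover all of $V(H)$ (not just $K$) to locate the missing distinct $w$-values among the $H'$-vertices, using the injectivity of $\phi$ to guarantee those extra values remain distinct and pairwise $G$-adjacent.
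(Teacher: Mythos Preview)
Your construction is plausible and the forward direction is clean, but the reverse direction has a genuine gap precisely at the ``delicate point'' you flag. The $K_t$-minor structure lives on the \emph{codomain} copy of $H$ (the bags $B_1,\dots,B_t$), while the constraints you extract from the embedding $\phi=(w,a)$ are phrased in terms of $\tau=b\circ a$ on the \emph{domain} copy. For your argument to go through you would need the $\tau$-classes $\tau^{-1}(i)=a^{-1}(B_i)$ to again form a $K_t$-minor of $H$ (nonempty, connected, pairwise crossing edges), and none of these properties is guaranteed: $a$ need not be surjective, $a^{-1}(B_i)$ need not be connected, and the specific crossing edge you placed between $B_i$ and $B_j$ need not lie in the image of $a$. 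Your fallback (``locate the missing $w$-values among the $H'$-vertices using injectivity'') is not a proof: injectivity only says that vertices sharing an $a$-value have distinct $w$-values, and nothing forces those extra $w$-values to be pairwise $G$-adjacent to each other and to the values you already have. This is exactly the obstruction the paper's introduction illustrates, and your sketch does not overcome it.

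The paper's proof avoids this difficulty by a completely different device. It defines a \emph{$t$-clique covering} of $H$: a collection $\{C_1,\dots,C_r\}$ of vertex sets, each inducing a $t$-colorable subgraph, such that every $t$-clique of $H$ lies in some $C_i$. Taking $r=p(H)$ minimal, the paper blows up only $C_1$ (vertices outside $C_1$ become singletons $v^*$ in $G^*$) and uses \emph{only} the $G$-rule between blown-up columns --- no diagonals at all. The reverse direction then splits cleanly. If some $t$-clique of the embedded copy $H^*$ lies entirely in the blown-up region $S_1=\bigcup_{v\in C_1}G_v$, its $t$ vertices occupy $t$ distinct columns (columns are independent sets) and are pairwise joined by $G$-rule edges, immediately yielding a $K_t$ in $G$. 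If no $t$-clique of $H^*$ lies in $S_1$, the paper shows that $\{S_2\cap V(H^*),\dots,S_r\cap V(H^*)\}$ is a $t$-clique covering of $H^*\cong H$ of size $r-1$, contradicting minimality of $r$. It is this extremal argument, not any bag or minor structure, that handles the case where the embedded clique lands in the wrong place. Your construction (full blow-up with diagonals inside bags) is actually closer in spirit to the paper's Hadwiger-based Theorem~\ref{thm:chromaticLB}, whose proof likewise hinges on a minimality argument over coverings rather than on a direct injectivity chase.
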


\subsection{Simple case: $t$-Chromatic patterns}
We show Theorem \ref{thm:K_tLB} when $H$ is $t$-chromatic in addition to having a $t$-clique as a subgraph. Construct $G^*$ as follows: For each $v\in H$, let $G_v$ be a copy of the vertices of $G$ as an independent set. For any two vertices $v$ and $u$ in $H$ where $vu$ is an edge, add the following edges between $G_v$ and $G_u$: for each $w_1$ and $w_2$ in $G$, add an edge between the copy of $w_1$ in $G_v$ and the copy of $w_2$ in $G_u$ if and only if $w_1w_2$ is an edge in $G$. So $G^*$ has $nk$ vertices and since for each pair of vertices $u,v\in H$ we have at most $m$ edges between $G_u$ and $G_v$, the construction time is at most $O(k^2m+kn)$.

Now we show that $G$ has a $t$-clique as a subgraph if and only if $G^*$ has $H$ as a subgraph. First suppose that $G$ has a $t$-clique, say $T=v_1,\ldots,v_t$. Consider a $t$-coloring of the vertices of $H$, with colors $1,\ldots,t$. For each $w\in H$, pick $v_i$ from $G_w$ if $w$ is of color $i$. Call the induced subgraph on these vertices $H^*$. We show that $H^*$ is isomorphic to $H$: map each $w\in H$ to the vertex picked from $G_w$ in $G^*$. If $w$ and $w'$ are adjacent in $H$, then their colors are different, so the vertices that are picked from $G_w$ and $G_{w'}$ are different vertices of $G$, and they are part of the clique $T$, so they are adjacent. If $w$ and $w'$ are not adjacent, we don't have any edges between $G_w$ and $G_{w'}$, so the vertices picked from them are not adjacent. 

For the other direction, we show that if $G^*$ has $H$ as a subgraph then $G$ has a $t$-clique. Since $H$ has a $t$-clique as a subgraph, $G^*$ also has a $t$-clique as a subgraph. Suppose the vertices of this clique are $W=\{w_1,\ldots, w_t\}$ where $w_i$ is a copy of $v_i\in G$. Each pair of vertices of the clique are in different copies of $G$, as these copies are independent sets. Moreover, for each $i,j\in\{1,\ldots,t\}$, since $w_i$ and $w_j$ are adjacent, they correspond to different vertices in $G$, so $v_i\neq v_j$. Since we connect two vertices in $G^*$ if their corresponding vertices in $G$ are connected, this means that $v_i$ and $v_j$ are connected in $G$. So $v_1,\ldots,v_t$ form a $t$-clique in $G$. 

\subsection{General case}
Define a {\em $t$-clique covering} of a pattern $H$ to be a collection $\mathcal{C}$ of sets of vertices of $H$, such that the induced subgraph on each set is $t$-colorable, and for any $t$-clique $T$ of $H$, there is a set in $\mathcal{C}$ that contains all the vertices of $T$. For example, in Figure \ref{fig:Gstarex}, the graph $H_{ex}$ has the following $3$-clique covering of size $2$: $\{\{a_1,a_2,a_3,a_6\},\{a_3,a_4,a_5,a_1,a_6\}\}$.

For each $H$ we have at least one $t$-clique covering by considering the vertices of each $t$-clique of $H$ as one set. However we are interested in the smallest collection $\mathcal{C}$. So for a fixed $t$, we define $p(H)$ to be the smallest integer $r\ge 1$, such that there is a $t$-clique covering of $H$ of size $r$. We call a $t$-clique covering of size $p(H)$ a {\em minimum} $t$-clique covering. For example, if $H$ is $t$-colorable, $p(H)=1$ as the whole vertex set is the only set that the $t$-clique covering has. If $H$ is not $t$-colorable but has a $t$-clique, then $p(H)>1$. 

\begin{figure}
  \centering
    \includegraphics[width=0.6\textwidth]{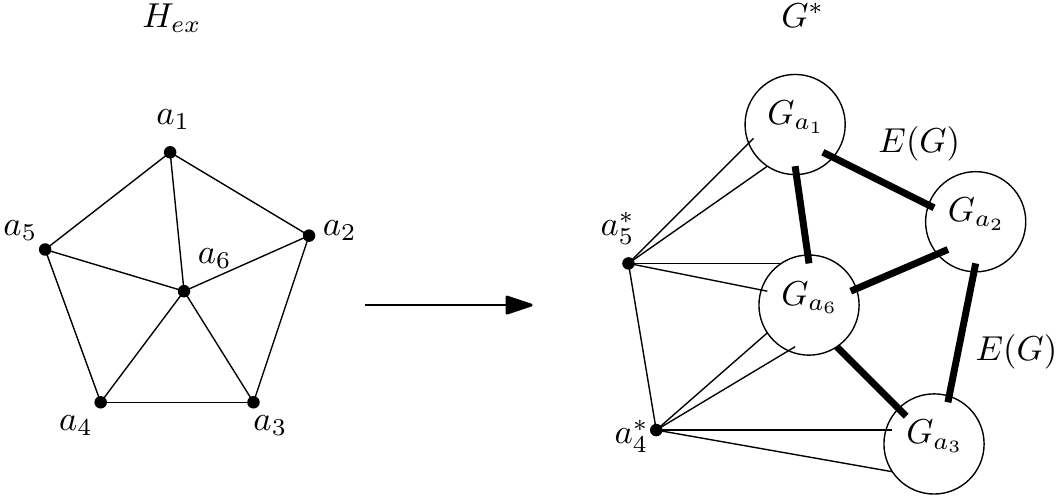}
    \caption{Graph $H_{ex}$ on the left. The largest clique of this graph is a triangle. $H_{ex}$ is $4$-chromatic, so $p(H_{ex})>1$. We have $p(H_{ex})=2$, as a minimum $3$-clique covering for it is $\{\{a_1,a_2,a_3,a_6\},\{a_3,a_4,a_5,a_1,a_6\}\}$. The graph $G^*$ is on the right, thick edges represent the way the edges are specified according to $E(G)$ between two copies of $G$.}
	\label{fig:Gstarex}
\end{figure}

\begin{proofof}{Theorem \ref{thm:K_tLB}}
Let $\mathcal{C}=\{C_1,\ldots,C_r\}$ be a minimum $t$-clique covering of $H$, where $r=p(H)$. 
The vertex set of $G^*$ is the following: For each vertex $v\in C_1$, let $G_v$ be a copy of the vertices of $G$ as an independent set. For each vertex $v\in V(H)\setminus C_1$, let $v^*$ be a copy of $v$ in $G^*$. The edge set of $G^*$ is as follows: For each two vertices $v,u\in C_1$ that $uv$ is an edge in $H$, add the following edges between $G_v$ and $G_u$: for each $w_1$ and $w_2$ in $G$, add an edge between the copy of $w_1$ in $G_v$ and the copy of $w_2$ in $G_u$ if and only if $w_1w_2$ is an edge in $G$. For each two vertices $u\in C_1$ and $v\in V(H)\setminus C_1$ that $uv$ is an edge in $H$, connect $v^*$ to all the vertices in $G_u$. For each two vertices $u,v \in V(H)\setminus C_1$ that $uv$ is an edge in $H$, connect $u^*$ and $v^*$. The way $G^*$ is constructed is shown in Figure \ref{fig:Gstarex} for the particular pattern $H_{ex}$ with maximum clique $3$.

Now we show that $G$ has a $t$-clique as a subgraph if and only if $G^*$ has $H$ as a subgraph. First suppose that $G$ has a $t$-clique, say $v_1,\ldots,v_t$. Consider a $t$-coloring of vertices of $C_1$, with colors $1,\ldots,t$. Let $H^*$ be the subgraph on the following vertices in $G^*$: for each $w\in C_1$, pick $v_i$ from $G_w$ if $w$ is of color $i$. For each $w\in V(H)\setminus C_1$, pick $w^*$. We show that $H$ is isomorphic to $H^*$: for each $w\in C_1$, map $w$ to the vertex picked from $G_w$, and for each $w\in V(H)\setminus C_1$, map $w$ to $w^*$. If $w,u\in C_1$ such that $wu\in E(H)$, then their colors are different in the $t$-coloring of $C_1$, and so the vertices that are picked from $G_w$ and $G_{u}$ are different vertices of $G$ and part of the $t$-clique of $G$, so they are attached. If $wu$ is not an edge, then there is no edge between $G_w$ and $G_{u}$. If $w\in C_1$ and $u\in V(H)\setminus C_1$ and $wu$ is an edge in $H$, then $u^*$ is attached to all vertices in $G_{w}$ including the vertex that is picked from $G_{w}$ for $H^*$. If $wu$ is not an edge, then there is no edge between $u^*$ and $G_w$. If $w,u\in V(H)\setminus C_1$, then $u^*,w^*$ are both picked in $H^*$ and they are adjacent in $G^*$ if and only if $w$ and $u$ are adjacent in $H$. 

For the other direction, we show that if $G^*$ has a subgraph $H^*$ isomorphic to $H$, then $G$ has a $t$-clique. Let $S_1=\cup_{v\in C_1} G_v$. First suppose that $H^*$ has a $t$-clique $T$ using vertices in $S_1$. Since for each $v\in C_1$, $G_v$ is an independent set, no two vertices of $T$ are in the same $G_v$. So there are $t$ vertices of $H$, $v_1,\ldots,v_t$ such that $T$ has a vertex in each $G_{v_i}$. Let this vertex be a copy of $w_i\in G$. Since for each $i,j\in\{1,\ldots,t\}$, $i\neq j$, the copies of $w_i$ and $w_j$ are adjacent in $G^*$, we have that $w_i\neq w_j$ and they are adjacent in $G$. So $\{w_1,\ldots,w_t\}$ form a $t$-clique in $G$. 

So assume that the induced subgraph on $V(H^*)\cap S_1$ in $G^*$ has no clique. As $S_1$ has all the vertices in $G^*$ that correspond to the vertices in $C_1$, we define similar sets for other $C_i$s. For $i\in \{2,\ldots,p(H)\}$, let $S_i'=\cup_{v\in C_i\cap C_1} G_v$, $S_i''=\cup_{v\in C_i\setminus C_1} v^*$ and $S_i=S_i'\cup S_i''$. First note that the induced subgraph on $S_i$ is $t$-colorable: Consider the $t$-coloring of $C_i$. For each $v\in C_i\setminus C_1$, color $v^*$ the same as $v$. For each $v\in C_i\cap C_1$, color all vertices in $G_v$ the same as $v$. 

Now we show that any $t$-clique in $H^*$ is in one of the sets $S_2,\ldots,S_{p(H)}$. This means that the collection $\{S_2\cap V(H^*), \ldots, S_{p(H)}\cap V(H^*)\}$ is a $t$-clique covering for $H$ with size $p(H)-1$, which is a contradiction. Consider a $t$-clique $T=v_1,\ldots,v_t$ in $H^*$. Each $v_i$ is in one of the copies of $G$ or is a copy of a vertex in $H$. So for each $v_i$, there is some vertex $w_i\in H$, such that $v_i\in G_{w_i}$ and $w_i\in C_1$ if $v_i\in S_1$, or $v_i=w_i^*$ and $w_i\notin C_1$ if $v_i\notin S_1$. Since for each $i,j$, $v_i$ and $v_j$ are adjacent in $G^*$, this means that $w_i$ and $w_j$ are different vertices in $H$ and they are adjacent. So $W=\{w_1,\ldots,w_t\}$ form a clique in $H$. Since $T\notin S_1$, wlog we can assume that $v_1\notin S_1$. So $w_1\notin C_1$. So the $t$-clique $W$ is not in $C_1$, and so it is in $C_i$, for some $2\le i\le p(H)$. Hence, $T\in S_i$. 
\end{proofof}

\begin{corollary}
Let $H$ be a $k$-node pattern that has a $t$-clique or a $t$-independent set as a subgraph. Then the problem of finding $H$ as an induced subgraph in an $n$-node graph is at least as hard as finding a $t$-clique in an $O(n)$-node graph.
\end{corollary}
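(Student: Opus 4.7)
The plan is to derive the corollary by composing Theorem~\ref{thm:K_tLB} with two elementary reductions: a standard color-coding conversion from not-necessarily-induced to induced pattern detection, and a complementation trick to handle the $t$-independent set case. Since $k$ is a fixed constant, all the polynomial blowups in $k$ collapse into the $O(n)$ bound on the size of the reduction graph.

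First I would handle the case where $H$ contains a $t$-clique. Theorem~\ref{thm:K_tLB} already produces, from any $n$-node instance $G$ of $t$-clique, a graph $G^*$ on at most $nk = O(n)$ vertices such that $G^*$ contains a not-necessarily-induced copy of $H$ if and only if $G$ has a $t$-clique. To turn this into a reduction to \emph{induced} $H$-detection I would invoke the standard color-coding fact noted in the introduction of this paper: sample a uniformly random coloring $\chi : V(G^*) \to [k]$ (which fixes a hypothetical copy of $H$ to a rainbow coloring with probability at least $k^{-k} = \Omega(1)$ for constant $k$), then delete every edge of $G^*$ whose endpoints' colors are not adjacent in some fixed embedding of $H$. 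Any copy of $H$ surviving in the edge-filtered graph must be induced, so $O(\log n)$ independent trials suffice for a randomized reduction; derandomization using $k$-perfect hash families gives the same bound up to polylogarithmic factors. The output graph still has $O(n)$ vertices, so $t$-clique on $n$ nodes reduces to induced-$H$ detection on $O(n)$ nodes.

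For the case where $H$ contains a $t$-independent set, I would use the complement. The complement $\overline{H}$ then contains a $t$-clique, and induced-$H$ detection in $G$ is equivalent to induced-$\overline{H}$ detection in $\overline{G}$, since edges and non-edges swap roles under complementation and $|V(\overline{G})| = |V(G)|$. Computing $\overline{G}$ takes $O(n^2)$ time, so this case reduces to the previous one applied to the pattern $\overline{H}$.

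The only real obstacle is making the not-necessarily-induced to induced transition rigorous; but the paper's introduction already states that this step is free up to logarithmic factors, and the color-coding sketch above supplies the details. With that in hand, chaining Theorem~\ref{thm:K_tLB} (or its application to $\overline{H}$, $\overline{G}$) with color coding yields the claimed reduction.
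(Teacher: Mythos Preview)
Your overall approach is correct and matches the paper's implicit reasoning: the corollary is stated without proof because it follows directly from Theorem~\ref{thm:K_tLB}, the paper's remark that an algorithm for induced detection solves the not-necessarily-induced version (via color-coding), and complementation for the independent-set case.

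One sentence in your color-coding sketch is inaccurate, though: the claim ``any copy of $H$ surviving in the edge-filtered graph must be induced'' is false. For instance, if $H$ is the $4$-path $v_1\text{--}v_2\text{--}v_3\text{--}v_4$ and the filtered graph retains only edges between color classes $(1,2),(2,3),(3,4)$, a $4$-path $a\text{--}b\text{--}c\text{--}d$ with $a,c$ in class~$2$ and $b,d$ in class~$3$ may still have the chord $a\text{--}d$ present. Fortunately your reduction does not actually need this claim. What you need is: (i) if $G^*$ has a not-necessarily-induced copy of $H$, then with probability at least $k!/k^k$ the random coloring makes \emph{that particular copy} properly colored, and filtering then removes all its non-$H$ edges, so the filtered graph contains an induced $H$; and (ii) any induced copy of $H$ in the filtered graph is trivially a not-necessarily-induced copy in $G^*$, since the filtered graph is a subgraph. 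With the justification corrected this way, the argument goes through.
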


\subsection{A Stronger Lower Bound}
One of the oldest conjectures in graph theory is Hadwiger conjecture which intoduces a certain structure for $t$-chromatic graphs. Assuming that this conjecture is true, we show that any fixed pattern with chromatic number $t$ is not easier to detect as an induced subgraph than a $t$-clique. This strengthens the previous lower bound because the size of the maximum clique of a pattern is at most its chromatic number, and moreover there are graphs with maximum clique of size two but large chromatic number. 

\begin{conjecture}[Hadwiger's Conjecture]
\label{conj:Hadwiger}
Let $H$ be a graph with chromatic number $t$. Then one can find $t$ disjoint connected subgraphs of $H$ such that there is an edge between every pair of subgraphs.
\end{conjecture}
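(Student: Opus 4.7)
The plan is to proceed by induction on the chromatic number $t$. The base cases are already known: $t \le 4$ by elementary arguments of Hadwiger, $t = 5$ by Wagner's reduction to the Four Color Theorem, and $t = 6$ by Robertson, Seymour, and Thomas. So I would take $t \ge 7$ and argue by contradiction, fixing a counterexample $H$ minimizing first the number of vertices and then the number of edges. Standard criticality arguments immediately give that the minimum degree of $H$ is at least $t-1$, that $H$ is $(t-1)$-connected, and that $H$ has no separating clique of size less than $t-1$ (otherwise one could recolor each side and identify the separator to obtain a $(t-1)$-coloring of $H$, contradicting chromatic number $t$).

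The next step would be to exploit high connectivity to contract a well-chosen small connected subset $S \subseteq V(H)$ so that $H/S$ still has chromatic number at least $t-1$, apply induction to $H/S$ to obtain a $K_{t-1}$-minor in $H/S$, lift the $t-1$ branch sets back to $H$ so that they avoid the neighborhood of $S$ appropriately, and finally take $S$ (grown if necessary along edges into $N(S)$) as the $t$-th branch set. The $(t-1)$-connectivity of $H$, combined with the degree lower bound, should intuitively be enough to certify that the $t$-th branch set has a neighbor in each of the other $t-1$ branch sets, producing the desired collection of $t$ disjoint connected subgraphs pairwise joined by an edge.

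The main obstacle -- and it is overwhelming -- is the lifting and attachment step. Contracting even a single edge can drop the chromatic number from $t$ all the way down to some value strictly smaller than $t-1$, so the inductive hypothesis may not even yield a $K_{t-1}$-minor in $H/S$ to begin with. When it does yield one, the lift back to $H$ must still route the $t$-th branch set to meet all of the others, and $(t-1)$-connectivity is consumed quickly as branch sets grow in size. This is precisely the barrier that has kept Hadwiger's Conjecture open for more than eighty years; the strongest known results replace $t$ by $\Theta(t\sqrt{\log t})$ (Kostochka, Thomason) or more recently $O(t(\log t)^{1/4})$ (Norin--Postle--Song), using extremal density arguments rather than induction, and even these fall short of the conjectured bound. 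A complete proof along the lines sketched above is therefore beyond what I can carry out here, and for the purposes of this paper the statement is taken as a hypothesis rather than derived.
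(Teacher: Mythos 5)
You have correctly recognized that the statement labeled ``Conjecture'' in the paper is precisely Hadwiger's Conjecture, an open problem for over eighty years, and that the paper offers no proof of it---it is invoked purely as a hypothesis in Theorem~\ref{thm:chromaticLB} and its corollaries. Your sketch of the standard criticality/contraction strategy and your diagnosis of why the lifting-and-attachment step fails are accurate, and your conclusion that the statement must be assumed rather than derived matches the paper's treatment exactly.
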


Contracting the edges within each of these subgraphs so that each subgraph collapses to a single vertex produces a $t$-clique as a minor of $H$. This is the property we are going to use to show that $H$ is at least as hard to detect as a $t$-clique. Our main theorem is as follows.

\begin{theorem}
\label{thm:chromaticLB}
Let $G=(V,E)$ be an $n$-node graph and let $H$ be a $k$-node $t$-chromatic pattern, for $t>1$. Then assuming that Hadwiger conjecture is true, one can construct $G^*$ on at most $nk$ vertices in $O(n^2k^2)$ time such that $G^*$ has a (not necessarily induced) subgraph isomorphic to $H$ if and only if $G$ has a $t$-clique. 
\end{theorem}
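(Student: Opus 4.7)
The plan is to invoke Hadwiger's conjecture on $H$ to extract $t$ disjoint connected subgraphs $H_1,\ldots,H_t \subseteq V(H)$ with at least one edge of $H$ between every pair of them (a $K_t$-minor of $H$). Assuming $H$ is connected (disconnected components are handled independently), I would greedily enlarge these branches so that $V(H) = H_1 \sqcup \cdots \sqcup H_t$: for any vertex $v \in V(H) \setminus \bigcup_i H_i$, pick a shortest path in $H$ from $v$ to some branch and absorb that path into the branch, preserving connectivity and all inter-branch edges.

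The construction of $G^*$ replaces each $w \in V(H)$ with a fresh independent copy $G_w$ of $V(G)$, whose vertices I label as pairs $(v,w)$ for $v \in V(G)$; thus $|V(G^*)| = nk$. For every edge $uw \in E(H)$: if $u$ and $w$ share a branch $H_i$, I add \emph{identity} edges $(v,u)(v,w)$ for every $v \in V(G)$; if $u \in H_i$ and $w \in H_j$ with $i \ne j$, I add \emph{cross} edges $(v,u)(v',w)$ whenever $vv' \in E(G)$. This runs in $O(n^2 k^2)$ time. For the forward direction, given a $t$-clique $\{v_1,\ldots,v_t\}$ of $G$, set $f(w) = (v_i, w)$ for $w \in H_i$: each intra-branch edge of $H$ is realized by an identity edge (both endpoints carry $v_i$), and each inter-branch edge $uw$ with $u \in H_i$, $w \in H_j$ is realized by a cross edge because $v_i v_j \in E(G)$.

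For the reverse direction, given an embedding $f \colon V(H) \hookrightarrow V(G^*)$, I write $f(w) = (\psi(w), \phi(w))$ and deduce from the edge types of $G^*$ that $\phi$ is a graph homomorphism $H \to H$ with $\phi(u) \ne \phi(w)$ for every edge; moreover, for each $uw \in E(H)$, either (A) $\phi(u), \phi(w)$ both lie in the same branch $H_i$ and $\psi(u) = \psi(w)$, or (B) they lie in different branches $H_i, H_j$ and $\psi(u)\psi(w) \in E(G)$. I then consider the partition $V(H) = \bigsqcup_i C_i$ with $C_i := \phi^{-1}(H_i)$: case (A) forces $\psi$ to be constant on every connected component of $H[C_i]$, while case (B) forces $\psi$-values on different $C_i$'s joined by an edge of $H$ to be $G$-adjacent. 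In the cleanest scenario $\phi$ is an automorphism of $H$, each $C_i$ is connected, and the $t$ constant values $u_i := \psi(C_i)$ are pairwise distinct and pairwise $G$-adjacent because every pair $H_i, H_j$ carries a Hadwiger edge that lifts through $\phi^{-1}$ to an edge of $H$ between $C_i$ and $C_j$; thus $\{u_1, \ldots, u_t\}$ is the desired $t$-clique in $G$.

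The main obstacle is handling the reverse direction when $\phi$ is not an automorphism, which can occur if $H$ is not chromatic-critical and $\phi$ therefore folds $H$ onto a proper subgraph $\phi(H) \subsetneq H$. Two additional facts rescue the argument: every $\phi$-fibre $\phi^{-1}(p)$ is an independent set in $H$ (edges cannot collapse because $\phi$ is edge-preserving with distinct endpoints) and carries distinct $\psi$-values (injectivity of $f$, since fibre members share both $\phi$-image and branch label). Since $\chi(\phi(H)) \ge t$, applying Hadwiger once more to $\phi(H)$ yields a second $K_t$-minor whose branches, pulled back through $\phi$, give the refined partition of $V(H)$ on which $\psi$ is piecewise constant and whose representatives form the required $t$-clique via the case-(B) constraints. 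Verifying this last step carefully---in particular, choosing consistent $\psi$-representatives across the refined blocks---is the crux of the argument.
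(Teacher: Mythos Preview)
Your construction and forward direction are fine, but the reverse direction has a real gap that you yourself flag as ``the crux'' without resolving. The difficulty is exactly where you locate it: when $\phi$ is not an automorphism. Your proposed fix---apply Hadwiger to $\phi(H)$ and pull the new branches $B_1,\ldots,B_t$ back through $\phi$---does not deliver what you claim. The sets $\phi^{-1}(B_j)$ need not be sets on which $\psi$ is constant: $\psi$ is forced to be constant only along edges $uw$ with $\phi(u),\phi(w)$ in the \emph{same original branch} $H_i$, and there is no reason the new branches $B_j\subseteq\phi(H)$ should respect the original partition $(H_i)_i$. So on $\phi^{-1}(B_j)$ you may have both identity-type and cross-type edges, and ``choosing consistent $\psi$-representatives'' is not merely a bookkeeping chore---it is the entire problem, and you have not shown it is solvable. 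More abstractly, what you have after contracting the monochromatic components is a graph homomorphism from a properly $t$-colored \emph{minor} of $H$ into $G$; but minors of $H$ can have chromatic number below $t$, so there is no evident $K_t$ to push forward.

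The paper takes a different route that avoids this obstacle. Rather than blowing up every vertex of $H$, it fixes a carefully chosen $t$-chromatic subgraph $F\subseteq H$ (the smallest one, with maximal edge count among those), defines an $F$-covering of $H$ by sets that are ``$(K_t,F)$-minor colorable'', and expands only the first set $C_1$ of a \emph{minimum} such covering into copies of $G$, leaving the rest as single vertices. The reverse direction then argues by contradiction: if a copy of $H$ in $G^*$ places no copy of $F$ entirely inside the expanded region, one can exhibit an $F$-covering of size one less than the minimum. The minimality of the covering, together with the extremal choice of $F$, is what replaces your attempted second invocation of Hadwiger and makes the argument go through uniformly.
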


To prove Theorem \ref{thm:chromaticLB}, we use a similar approach as Theorem \ref{thm:K_tLB}. The approach of Theorem \ref{thm:K_tLB} is covering the maximum cliques of the pattern by a collection of subgraphs. However, since in Theorem \ref{thm:chromaticLB} the pattern doesn't necessarily have a $t$-clique, we cover another particular subgraph of the pattern, and hence we introduce a similar notion as $t$-clique covering for this subgraph. 

Let $F$ be a graph with a vertex (not necessarily proper) coloring $C:V(F)\rightarrow \{1,\ldots,t\}$. We say that $F$ has a $K_t$ minor {\em with respect to the coloring $C$} if the vertices of each color induce a connected subgraph and for every color there is an edge from one of the vertices of that color to one of the vertices of every other color. For example, in Figure \ref{fig:Gstarex}, consider the following coloring for $H_{ex}$: $C_{ex}:\{a_1,\ldots,a_6\}\rightarrow \{1,\ldots,4\}$, where $C_{ex}(a_1)=C_{ex}(a_2)=1$, $C_{ex}(a_3)=C_{ex}(a_4)=2$, $C_{ex}(a_5)=3$ and $C_{ex}(a_6)=4$. Clearly $H_{ex}$ has a $K_4$ minor with respect to the coloring $C_{ex}$.

Let $F$ and $H$ be two fixed graphs, where $F$ is $t$-chromatic. We say that $H$ is {\em $(K_t, F)$ minor colorable} if there is a (not necessarily proper) coloring $C:V(H)\rightarrow \{1,\ldots,t\}$ such that any induced copy of $F$ in $H$ has a $K_t$ minor with respect to $C$. For example, in Figure \ref{fig:hadwigerLB}, the graph $H_{ex}'$ has graph $H_{ex}$ (Figure \ref{fig:Gstarex}) as a $4$-chromatic subgraph, and it is $(K_4,H_{ex})$ minor colorable: There are exactly two copies of $H_{ex}$ in $H_{ex}'$, one with vertex set $\{a_1,\ldots, a_6\}$ and one with vertex set $\{a_1,a_4,a_5,a_6,a_7,a_8\}$, and both have a $K_4$ minor with respect to the coloring given in Figure \ref{fig:hadwigerLB}. Note that \emph{minor colorability} and \emph{colorability} are different: recall that $c$-colorability of a graph for an integer $c$ means that the graph has a proper coloring using $c$ colors, and the graph is $c$-chromatic if $c$ is the smallest integer such that the graph is $c$-colorable. 

Let $H$ be a pattern and let $F$ be a $t$-chromatic subgraph of $H$. As a generalization to a $t$-clique covering of $H$, we define an {\em $F$-covering} of $H$ to be a collection $\mathcal{C}$ of sets of vertices of $H$, such that the induced subgraph of each set is $(K_t, F)$ minor colorable, and each copy of $F$ is completely inside one of the sets in $\mathcal{C}$.

For any graph $H$, we have at least one $F$-covering by considering the vertices of each copy of $F$ as one set where the $(K_t, F)$ minor colorablitiy of each set comes from Conjecture \ref{conj:Hadwiger}. Similar to $t$-clique coverings we are interested in the smallest collection $\mathcal{C}$ among all $F$-coverings. So for a fixed number $t$ and a $t$-chromatic subgraph $F$ of $H$, we define $p_F(H)$ to be the smallest integer $r\ge1$, such that there is an $F$-covering of $H$ of size $r$. We call an $F$-covering of size $p_F(H)$ a {\em minimum} $F$-covering. Note that $p_{K_t}(H)=p(H)$. For example, in Figure \ref{fig:hadwigerLB}, $p_{H_{ex}}(H_{ex}')=1$, according to the coloring given in the figure. 

\begin{figure*}
  \centering
    \includegraphics[width=0.8\textwidth]{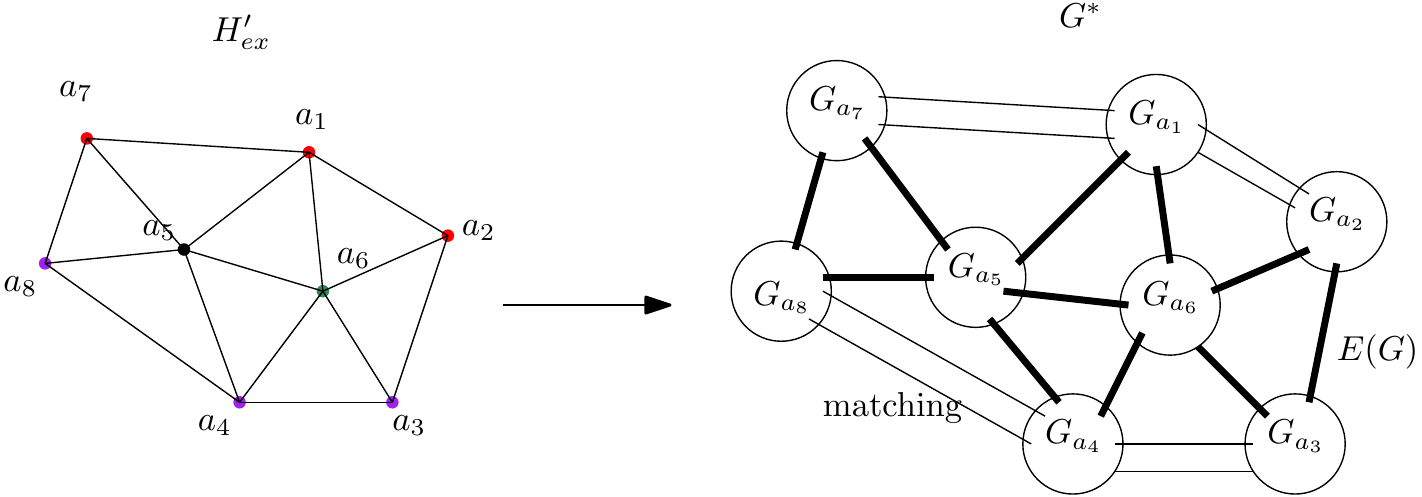}
    \caption{The $4$-chromatic graph $H_{ex}'$ on the left side has the coloring $C_{ex}'$ which makes it $(K_4,H_{ex})$ minor colorable: $C_{ex}'(a_1)=C_{ex}'(a_2)=C_{ex}'(a_7)=1$, $C_{ex}'(a_3)=C_{ex}'(a_4)=C_{ex}'(a_8)=2$, $C_{ex}'(a_5)=3$, $C_{ex}'(a_6)=4$. On the right side we show how $G^*$ is constructed as it is described in the proof of Theorem \ref{thm:chromaticLB}. The double edges indicate a matching where nodes that are copy of the same vertex in $G$ are connected. The thick edges represent the way we add edges according to $E(G)$. }
	\label{fig:hadwigerLB}
\end{figure*}

Now we are ready to prove Theorem \ref{thm:chromaticLB}.

\begin{proofof}{Theorem \ref{thm:chromaticLB}}
We are going to mimic the proof of Theorem \ref{thm:K_tLB}, and so we are going to carefully choose a subgraph $F$ and consider the minimum $F$-covering of it. 

Let $z$ be the largest integer such that every $(z-1)$-node subgraph of $H$ is $t-1$ colorable. Let $F$ be a $t$-chromatic subgraph of $H$ on $z$ nodes with maximum number of edges. Note that $F$ is an induced subgraph of $H$.
In Figure \ref{fig:hadwigerLB}, $H=H_{ex}'$ is $4$-chromatic and one can check that any subgraph on $6$ vertices or less is $3$ colorable. In this graph $z=7$ and $F=H_{ex}$. 

Now suppose that $\mathcal{C}=\{C_1,\ldots,C_{r}\}$ is a minimum $F$-covering of $H$, where $r=p_F(H)$. Let $f:C_1\rightarrow \{1,\ldots,t\}$ be a $(K_t,F)$ minor coloring of $C_1$. Define the vertex set of $G^*$ as follows: For each vertex $v\in C_1$, let $G_v$ be a copy of $G$ as an independent set. For each vertex $v\in V(H)\setminus C_1$, let $v^*$ be a copy of $v$ in $G^*$. The edge set of $G^*$ is as follows: For each pair of vertices $u,v\in C_1$, if $uv$ is not an edge in $H$ we don't add any edges between $G_u$ and $G_v$. If $uv$ is an edge and $f(u)= f(v)$, then add the following edges between $G_u$ and $G_v$: For each $w\in G$, add an edge between the copy of $w$ in $G_u$ and the copy of $w$ in $G_v$ (So we have a complete matching between $G_u$ and $G_v$). If $uv$ is an edge and $f(u)\neq f(v)$, then add the following edges between $G_u$ and $G_v$: for each $w_1$ and $w_2$ in $G$, add an edge between the copy of $w_1$ in $G_u$ and the copy of $w_2$ in $G_v$ if and only if $w_1w_2$ is an edge in $G$. For each pair of vertices $u\in C_1$ and $v\in V(H)\setminus C_1$ such that $uv$ is an edge in $H$, add an edge between  $v^*$ and all vertices in $G_u$. For each pair of vertices $u,v \in V(H)\setminus C_1$ such that $uv$ is an edge in $H$, add an edge between $u^*$ and $v^*$ in $G^*$. In Figure \ref{fig:hadwigerLB}, $H_{ex}'$ has a $H_{ex}$-covering of size $1$ which is the whole graph. On the right side of the figure we show how $G^*$ is constructed.  

Now we show that $G$ has a $t$-clique as a subgraph if and only if $G^*$ has $H$ as a subgraph. First, suppose that $G$ has a $t$-clique, say $T=v_1,\ldots, v_t$. Let $H^*$ be the induced subgrpah on the following vertices in $G^*$: for each $w\in C_1$, pick $v_i$ from $G_w$ if $f(w)=i$. For each $w\in V(H)\setminus C_1$, pick $w^*$. We show that $H$ is isomorphic to $H^*$: for each $w\in C_1$, map $w$ to the vertex picked from $C_w$, and for each $w\in V(H)\setminus C_1$, map $w$ to $w^*$. If $u,w\in C_1$ and they are not adjacent, then there is no edge between $G_{u}$ and $G_{w}$. If $uw$ is an edge in $H$, then if $f(u)=f(w)=i$, we picked $v_i$ from both $G_{u}$ and $G_{w}$ and hence there are adjacent (note that in this case the edges between $G_{u}$ and $G_{w}$ form a complete matching). If $f(u)\neq f(w)$, then the vertices that we picked from $G_{u}$ and $G_{w}$ are copies of different vertices of the clique $T$, and so they are adjacent in $G^*$. If $u\in C_1$ and $w\in V(H)\setminus C_1$ and $uw$ is an edge in $H$, then $w^*$ is attached to all vertices in $G_u$, so it is adjacent to the vertex chosen from $G_u$ for $H^*$. If $uw$ is not an edge, then there is no edge between $w^*$ and $G_u$. If $u,w\in V(H)\setminus C_1$, then $u^*$ and $w^*$ are connected in $H^*$ if and only if $uw$ are connected in $H$. 

For the other direction, we show that if $G^*$ has a (not necessarily induced) subgraph $H^*$ isomorphic to $H$, then $G$ has a $t$-clique. Let $S_1=\cup_{v\in C_1}G_v$. First suppose that $H^*$ has a copy of $F$ in $S_1$. Let the vertices of this copy be $w_1,\ldots, w_z$. For each $w_i$ there is a vertex $v_i\in H$ such that $w_i\in G_{v_i}$. Now if for some $i\neq j$, $v_i=v_j$, then the induced subgraph on $\{v_1,\ldots,v_z\}$ has less than $z$ vertices, so it is $t-1$ colorable (using proper coloring). Now if we color $w_i$ the same color as $v_i$, we get a proper coloring of this copy of $F$ with $t-1$ colors, a contradiction to the chromatic number of $F$. So for each $i\neq j$, $v_i\neq v_j$. Now we show that the induced subgraph on $\{v_1,\ldots,v_z\}$ in $H$ is isomorphic to $F$. Call this subgraph $F'$. We just showed that $|V(F')|=z$. Since there is no edge between $G_{v_i}$ and $G_{v_j}$ if $v_i$ and $v_j$ are not connected, we have that $F$ is a subgraph of $F'$, and so $F'$ is not $t-1$ colorable, and since it is a subgraph of $H$, it is $t$-chromatic. If $F$ and $F'$ are not isomorphic, then $F'$ has more edges than $F$, which is a contradiction. So $F$ and $F'$ are isomorphic,
and in particular $w_i$ and $w_j$ are adjacent if and only if $v_i$ and $v_j$ are adjacent. Suppose that $w_i\in G_{v_i}$ is the copy of $w_i'$ in $G$. We show that $\{w_1',\ldots,w_z'\}$ induces a $t$-clique in $G$. Consider the coloring $f$ on $C_1$. First note that if $v_i$ and $v_j$ are adjacent vertices such that $f(v_i)=f(v_j)$, then since $w_i$ and $w_j$ are adjacent, we have $w_i'=w_j'$. Since $F_1$ has a $K_t$ minor with respect to the coloring $f$, the subgraph that each color induces is connected, and so for each $v_i$ and $v_j$ with $f(v_i)=f(v_j)=a$ we have $w_i'=w_j'$. This means that all $w_i$'s with $f(v_i)=a$ are copies of the same vertex, say $u_a$. Now take a pair of colors, $a,b\in \{1,\ldots,t\}$. There are vertices $v_i$ and $v_j$ such that $f(v_i)=a$, $f(v_j)=b$ and $v_iv_j$ is an edge in $H$. So $w_iw_j$ is an edge in $G^*$, and since $a\neq b$, $w_i'\neq w_j'$, and $w_i'w_j'$ is an edge in $G$. Since $w_i'=u_a$ and $w_j'=u_b$, we have that $u_a$ and $u_b$ are different vertices and they are adjacent in $G$. So $\{w_1',\ldots,w_z'\}=\{u_1,\ldots,u_t\}$ induces a $t$-clique in $G^*$. 

Now suppose that there is no copy of $F$ in the induced subgraph on $V(H^*)\cap S_1$ in $G^*$. For $i\in \{2,\ldots,p_F(H)\}$, let $S_i'=\cup_{v\in C_i\cap C_1} G_v$, $S_i''=\cup_{v\in C_i\setminus C_1} v^*$ and $S_i=S_i'\cup S_i''$. We prove that the collection $\{S_2\cap V(H^*),\ldots, S_{p_F(H)}\cap V(H^*)\}$ is an $F$-covering for $H^*$, which means that $p_F(H)=p_F(H^*)<r$, a contradiction.

First we show that any copy of $F$ in $H^*$ is in one of $S_i$s. Let $F^*$ with vertex set $\{w_1,\ldots, w_z\}$ be a copy of $F$ in $H^*$. For each $w_i$, there is a $v_i\in H$ where $w_i\in G_{v_i}$ and $v_i\in C_1$ if $w_i\in S_1$, or $w_i=v_i^*$ and $v_i\notin C_1$ if $w_i\notin S_1$. If $v_i=v_j$ for some $i\neq j$, then $F^*$ is $t-1$ colorable (with proper coloring): the induced graph on $\{v_1,\ldots,v_z\}$ has at most $z-1$ vertices and so it is $t-1$ colorable. Color $w_i$ the same as $v_i$. From the way we construct $G^*$ we know that if $v_i$ and $v_j$ are not connected, $w_i$ and $w_j$ are also not connected, and so this coloring of $F^*$ is proper. Since $F^*$ is $t$-chromatic, this is a contradiction. So if we call the induced graph on $\{v_1,\ldots,v_z\}\subseteq V(H)$ by $F^H$, then $|V(F^H)|=z$. We know that if $w_i$ and $w_j$ are connected, then $v_i$ and $v_j$ are connected. So $F$ is a subgraph of $F^H$, and so $F^H$ is $t$-chromatic. If $F^H$ and $F$ are not isomorphic, it means that $F^H$ has more edges than $F$, which is a contradiction. So $F^H$ and $F$ are isomorphic.
Now Since $F^*$ is not in $S_1$, wlog we can assume that $w_1\notin S_1$, and so $w_1=v_1^*$ and $v_1$ is not in $C_1$. So $F^H\notin C_1$ and there is some $i\ge 2$ such that $F^H\in C_i$. So $F^*$ is in $S_i$. 

Now we show that for each $i\ge 2$, $S_i\cap V(H^*)$ is $(K_t,F)$-minor colorable. Since $C_i$ is $(K_t,F)$-minor colorable, there is a coloring $f_i:C_i\rightarrow \{1,\ldots,t\}$ such that each induced copy of $F$ in $C_i$ has a $K_t$ minor with respect to $f_i$. Let $f_i^*:S_i\cap V(H^*)\rightarrow \{1,\ldots,t\}$ be the following coloring: For each $v\in C_i\cap C_1$, let $f_i^*(u)=f_i(v)$ for all vertices $u\in S_i\cap V(H^*) \cap G_v$. For each $v\in C_i\setminus C_1$ where $v^*\in V(H^*)$, let $f_i^*(v^*)=f_i(v)$. Now if $F^*=\{w_1,\ldots,w_z\}$ is a copy of $F$ in $S_i\cap V(H^*)$, we know that the set $F^H=\{v_1,\ldots,v_z\}$ is a copy of $F$ in $C_i$, where $w_i\in G_{v_i}$ if $w_i\in S_1$ and $w_i=v_i^*$ if $w_i\notin S_1$. Note that $f(v_i)=f_i^*(w_i)$ and $v_i$ and $v_j$ are adjacent if and only if $w_i$ and $w_j$ are adjacent. So since the subgraph induced on vertices of any color in $F^H$ is connected, the subgraph induced on any color in $F^*$ is also connected. Moreover, since in $f_i$ for any pair of colors there is an edge between one of the vertices of that color to one of the vertices of the other color, this property holds for $f_i^*$. So $S_i\cap V(H^*)$ is $(K_t,F)$-minor colorable, and so we have an $F$-covering for $H$ of size less than $p_F(H)$.
\end{proofof}

\begin{corollary}
Let $H$ be a pattern and let $t$ be the maximum chromatic number of $H$ and its complement. Then under Hadwiger conjecture, finding an induced copy of $H$ in an $n$-node graph is at least as hard as finding a $t$-clique in an $O(n)$-node graph.
\end{corollary}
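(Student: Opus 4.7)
The plan is to derive this corollary as an almost immediate consequence of Theorem~\ref{thm:chromaticLB}, using two observations. First, I would reduce to the case $\chi(H)=t$ by complementation: detecting an induced copy of $H$ in a host graph $X$ is literally the same problem as detecting an induced copy of $\bar H$ in $\bar X$, on the same vertex set, and $\bar X$ is obtained from $X$ in $O(|V(X)|^2)$ time. Hence any lower bound for induced $\bar H$-detection on $N$-node graphs transfers verbatim to a lower bound for induced $H$-detection on $N$-node graphs. So, replacing $H$ by $\bar H$ if $\chi(\bar H)>\chi(H)$, I may assume $\chi(H)=t$ from now on.

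Once $\chi(H)=t$, I would apply the construction in the proof of Theorem~\ref{thm:chromaticLB} to $H$, producing from any $n$-node graph $G$ a graph $G^*$ on at most $nk$ vertices such that $G^*$ contains a not-necessarily-induced copy of $H$ iff $G$ contains a $t$-clique. The key step is to strengthen this to: $G^*$ contains an \emph{induced} copy of $H$ iff $G$ contains a $t$-clique. The $(\Rightarrow)$ direction is immediate, since every induced copy is in particular a not-necessarily-induced copy. For $(\Leftarrow)$, I would take the specific copy $H^*$ of $H$ built in the forward direction of Theorem~\ref{thm:chromaticLB}'s proof---one vertex picked from each $G_w$ for $w\in C_1$ according to the $(K_t,F)$-minor coloring $f$, plus $w^*$ for each $w\in V(H)\setminus C_1$---and verify that $H^*$ is in fact \emph{induced}, not merely not-necessarily-induced.

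The main (and essentially only) thing to check is that every non-edge $uw\notin E(H)$ yields a non-edge in $G^*$ between the chosen representatives of $u$ and $w$. This is a short case analysis directly from the construction of $G^*$: if $u,w\in C_1$ and $uw\notin E(H)$, no edges at all are placed between $G_u$ and $G_w$; if $u\in C_1$, $w\notin C_1$ and $uw\notin E(H)$, no edges are placed between $G_u$ and $w^*$; and if $u,w\notin C_1$, the vertices $u^*$ and $w^*$ are made adjacent precisely when $uw\in E(H)$. Hence the representatives of any non-edge of $H$ are non-adjacent in $G^*$, so $H^*$ is induced. Concluding, $G^*$ has an induced $H$ iff $G$ has a $t$-clique, and since $|V(G^*)|\le nk=O(n)$ for constant $k$ and the construction runs in $O(n^2 k^2)$ time, induced $H$-detection on $O(n)$-node graphs is at least as hard, under Hadwiger's conjecture, as $t$-clique detection on $n$-node graphs.
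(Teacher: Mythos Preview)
Your proof is correct. The complementation step and the case analysis verifying that $H^*$ is in fact an induced copy are both sound; indeed, the forward direction of the proof of Theorem~\ref{thm:chromaticLB} already explicitly checks each non-edge case (no edges between $G_u$ and $G_w$ when $uw\notin E(H)$, etc.), so your observation amounts to pointing out that the paper's own construction already yields an induced copy.

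The paper does not spell out a proof of this corollary, but its intended route---signalled in the introduction---is slightly different: it uses the black-box fact that any algorithm for induced $H$-detection solves not-necessarily-induced $H$-detection in the same asymptotic time (via color-coding), so the lower bound for the not-necessarily-induced problem from Theorem~\ref{thm:chromaticLB} transfers automatically to the induced problem. Your approach is more direct and elementary (no color-coding, no extra log factors) because it exploits the specific structure of $G^*$; the paper's approach is more modular, working for any reduction that establishes hardness of not-necessarily-induced detection without needing to revisit its internals.
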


\section{Induced pattern detection: Algorithms}
\label{induced-upperbound}
In this section we focus on the algorithmic part of the induced pattern detection problem, starting with some background on the problem.
First, it is a simple and folklore exercise to show that if there is a $T(n)$ time algorithm that can {\em detect} whether $G$ contains a copy of $H$, then one can also {\em find} such a copy in $O(T(n))$ time: 
 Partition the vertices $V$ of $G$ into $k+1$ equal parts (wlog $n$ is divisible by $k+1$), for every $k$-tuple of parts, use the detection algorithm in $T(nk/(k+1))$ time to check whether the union of the parts contains a copy of $H$. The moment a $k$-tuple of parts is detected to contain a copy of $H$, stop looking at other $k$-tuples and recurse on the graph induced by the union of the $k$ parts. (Stop the recursion when $n$ is constant, and brute force then.) Since every $k$ node subgraph is contained in some $k$-tuple of the parts, the algorithm is correct. The runtime is
\begin{align*}
    t(n)&\leq \sum_{i=1}^{\log_{(1+1/k)} n} (k+1)T(n (k/(k+1))^i \\ &\leq(k+1)T(n)\sum_{i=1}^{\infty} ((k/(k+1))^2)^i\leq O(T(n)).
\end{align*}
The second inequality above follows since $T(n)\geq \Omega(n^2)$ as the algorithm needs to at least read the input and the input can be dense. Because of this, for some nondecreasing function $g(n)$, $T(n)=n^2 g(n)$. Hence for any $L\geq 1$, $T(n/L)=n^2/L^2 g(n/L)\leq n^2/L^2 g(n)=T(n)/L^2$. (Without this observation about $T(n)$, the analysis would incur at most a $\log n$ factor for finding from detection.)
As finding and detection are equivalent, we will focus on the detection version of the problem.

\sloppy Recall from the introduction, $C(n,k):=M(n^{\lfloor k/3\rfloor},n^{\lceil k/3\rceil},n^{\lceil (k-1)/3\rceil})$.
Ne\v{s}etril and Poljak~\cite{Clique1} 
showed that the pattern detection problem can be reduced to rectangular matrix multiplication. In particular, when $k\equiv q \mod 3$, detecting a $k$ node pattern in an $n$ node $G$ can be reduced in $O(n^{(2k+q)/3})$ time to the product of an $n^{\lfloor k/3\rfloor} \times n^{\lceil k/3\rceil}$ matrix by an $n^{\lceil k/3\rceil} \times n^{\lceil (k-1)/3\rceil}$ matrix.

Here we first recall the approach from \cite{four-nodes}, and then generalize the ideas there to obtain an approach for all $k$ to show that (1) for all $k\leq 6$ and for all $k$-node $H$ that is not a Clique or Independent Set, $H$ can be detected in $O(C(n,k-1))$ time, whp, and (2) for all $k\ge 3$, there is a pattern that can be detected in time $O(C(n,k-1))$, whp.


\subsection{The approach from \cite{four-nodes}}
Vassilevska W. et al.~\cite{four-nodes} proposed the following approach for detecting a copy of $H$ in $G$:
\begin{enumerate}
\item First obtain a random subgraph $G'$ of $G$ by removing each vertex of $G$ independently and uniformly at random with probability $1/2$. 
\item Compute a quantity $Q$ that equals the number of induced $H$ in $G'$, modulo a particular integer $q$.
\item If $Q \neq 0\mod q$, return that $G$ contains an induced $H$, and otherwise, return that $G$ contains no induced $H$ with high probability.
\end{enumerate}

The following lemma from \cite{four-nodes} implies that (regardless of $q$), if $G$ contains a copy of $H$, after the first step, with constant probability, the number of copies of $H$ in $G'$ is not divisible by $q$.

\begin{lemma}[\cite{four-nodes}]
\label{samplinglemma}
Let $q \ge 2$ be an integer, $G,H$ be undirected graphs. Let $G'$ be a random induced subgraph of $G$ such
that each vertex is taken with probability $\frac{1}{2}$, independently. If there is at least one induced-$H$ in $G$, the number of induced-$H$ in $G'$ is not a multiple of $q$ with probability at least $2^{-|H|}$. 
\end{lemma}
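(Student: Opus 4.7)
The plan is to fix one particular induced copy of $H$ inside $G$, use it to isolate a $2^{|H|}$-sized coin-flipping subproblem, and then invoke inclusion-exclusion to show that at least one outcome of that subproblem produces a count not divisible by $q$.

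Concretely, let $S^*$ be the vertex set of an induced copy of $H$ in $G$ (which exists by hypothesis), and write $k=|H|=|S^*|$. Split the random sample defining $G'$ as $V(G')=T_R\cup T_S$, where $T_R\subseteq V(G)\setminus S^*$ and $T_S\subseteq S^*$ are independent uniform random subsets. Let $X$ denote the number of induced copies of $H$ in $G'$, and let $f_{T_R}(U)$ be the number of induced copies of $H$ in $G[T_R\cup U]$ for $U\subseteq S^*$; thus $X=f_{T_R}(T_S)$. I will condition on $T_R$ and analyze the conditional probability that $f_{T_R}(T_S)\not\equiv 0\pmod q$.

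The key step is the following identity, obtained by Möbius inversion on the Boolean lattice of subsets of $S^*$:
\[
\sum_{U\subseteq S^*}(-1)^{k-|U|}\,f_{T_R}(U)\;=\;\#\{C\text{ induced copy of }H\text{ in }G[T_R\cup S^*]\,:\,S^*\subseteq C\}.
\]
Since every induced copy of $H$ has exactly $k$ vertices and $|S^*|=k$, the only such $C$ is $S^*$ itself, and $S^*$ is an induced copy by construction, so the right-hand side equals $1$. If, for some realization of $T_R$, we had $f_{T_R}(U)\equiv 0\pmod q$ for every $U\subseteq S^*$, then the left-hand side would vanish modulo $q$, contradicting the fact that it equals $1$ and $q\geq 2$. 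Hence for every $T_R$ there exists some $U^*\subseteq S^*$ with $f_{T_R}(U^*)\not\equiv 0\pmod q$.

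To conclude, since $T_S$ is uniform over the $2^k$ subsets of $S^*$ and independent of $T_R$, the conditional probability $\Pr[T_S=U^*\mid T_R]=2^{-k}$, so $\Pr[X\not\equiv 0\pmod q\mid T_R]\geq 2^{-k}$ for every $T_R$; averaging over $T_R$ gives the claimed bound $2^{-|H|}$. The main conceptual hurdle is spotting that Möbius inversion precisely isolates the contribution of $S^*$ and yields the constant $1$ on the right-hand side; everything after that is bookkeeping. A minor point to be careful with is whether "number of induced copies" counts subsets or labeled embeddings, but in either convention the alternating sum produces a fixed positive integer (either $1$ or $|\mathrm{Aut}(H)|$) independent of $T_R$, and the argument that this integer cannot be $\equiv 0\pmod q$ for all $U$ if $f_{T_R}(\cdot)$ is identically $0\pmod q$ still goes through—one just needs that the target value is a nonzero integer, which it is.
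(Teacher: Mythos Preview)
The paper does not supply its own proof of this lemma; it is quoted from \cite{four-nodes} and used as a black box. So there is nothing in the present paper to compare against. Your argument is the standard one and is correct for the intended reading (``number of induced-$H$'' meaning the number of vertex subsets inducing a copy of $H$): the M\"obius/inclusion--exclusion identity
\[
\sum_{U\subseteq S^*}(-1)^{k-|U|}f_{T_R}(U)=1
\]
is exactly right, and the rest is routine.

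One small correction to your closing caveat: it is \emph{not} true that ``one just needs that the target value is a nonzero integer.'' If copies were counted as labeled embeddings, the alternating sum would equal $|\mathrm{Aut}(H)|$, and when $q\mid|\mathrm{Aut}(H)|$ the contradiction disappears---having every $f_{T_R}(U)\equiv 0\pmod q$ is then perfectly consistent with the identity. What you actually need is that the target value is \emph{not divisible by $q$}; this holds automatically for the unlabeled count (target $=1$), which is the interpretation the lemma intends. So the main proof stands, but the labeled-case remark should be dropped or corrected.
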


Now using Lemma \ref{samplinglemma}, we can sample graph $G'$ from $G$, and with probability $2^{-k}$ we have the number of induced $H$ is not divisible by $q$. To obtain higher probability, we can simply repeat this procedure.

Hence, it suffices to provide an algorithm for counting the number of copies of $H$ modulo some integer. The approach from \cite{four-nodes} is to efficiently compute a quantity which is an integer linear combination 
$Q=\sum_{i=1}^t \alpha_i n_{H_i}$
of the number of copies $n_{H_i}$ in $G$ of several different patterns $H=H_1,H_2,\ldots, H_t$, so that some integer $q$ divides the coefficients $\alpha_i$ in front of $n_{H_i}$ for $i>1$ but $q$ does not divide $\alpha_1$. Thus, $Q=\alpha_1 n_{H}\mod q$.

Suppose that $d$ is the largest common divisor of $\alpha_1$ and $q$. Suppose that $d\neq 1$.
Since $q$ divides every $\alpha_k$ with $k>1$, $d$ must divide all $\alpha_i$. Hence,
 we could just consider $Q/d$ in place of $Q$ before taking things mod $q$. 
Thus wlog $\alpha_1$ and $q$ are coprime, and so $\alpha^{-1}$ exists in $\Z_q$. Hence, $Q\alpha^{-1}=n_{H}\mod q$, and we can use this quantity in step 2 of the approach above.

For instance, if $H$ is $K_4-e$ (the diamond), one can compute the square $A^2$ of the adjacency matrix $A$ of $G$ in $O(n^\omega)$ time, and compute $$Q=\sum_{(u,v)\in E} {A^2(u,v)\choose 2} = n_{K_4-e}+6n_{K_4},$$ so that $Q=n_{K_4-e}\mod 6$.

In prior work, the equations $Q$ were obtained carefully for each particular $4$ node pattern. 
In this section we provide a general and principled approach of obtaining such quantities that can be computed in $O(C(n,k-1))$ time for $k\leq 6$.

\subsection{Setup}
As mentioned earlier, two graphs $H$ and $H'$ are isomorphic if there is an injective mapping from the vertex set of $H$ onto the vertex set of $H'$ so that edges and non-edges are preserved. We will represent this mapping by presenting permutations of the vertices of $H$ and $H'$, i.e. for two graphs $H$ and $H'$ with vertex orders $H=(v_1,\ldots, v_t)$ and $H'=(w_1,\ldots, w_t)$, we say $H$ maps to $H'$ if for each $i$ and $j$, $(v_i,v_j)\in E(H)$ if and only if $(w_i,w_j)\in E(H')$. Note that if $H$ maps to $H'$, $H'$ maps to $H$ as well.

We refer to $k$-node graphs as patterns, and we want to detect them in $n$-node graphs. We will assume that every graph we consider is given with a vertex ordering, unless otherwise specified. We call a pattern with an ordering {\em labeled}, and otherwise, the pattern is {\em unlabeled}.
 By the subgraph $(v_1,\ldots, v_h)$ in a graph $G$, we mean the subgraph induced by these vertices, with this specified order when considering isomorphisms.

We partition all $k$-node patterns with specified vertex orders (there are $2^{{k\choose 2}}$ many of these) into classes and for each class we count the number of subgraphs in a given graph $G$ which map to one of the graphs in this class. 
Let $k' = \floor{\frac{k-1}{3}}$. For a $k$-node pattern $H=(v_0,\ldots,v_{k-1})$, define the class of $k$-node patterns $C(H)$ as follows:

\sloppy Let $F$ be the set of the following pairs of vertices: $(v_0,v_1),\ldots (v_0,v_{k'})$ (We sometimes refer to these pairs as the first $k'$ edges of $H$). Then $H'=(w_0,\ldots,w_{k-1}) \in C(H)$ if for all pairs of vertices $(v_i,v_j)\notin F$, we have $(v_i,v_j)\in E(H)$ if and only if $(w_i, w_j)\in E(H')$. In other words, all graphs in a class agree on the edge relation except possibly for the pairs in $F$.

Note that for any $H'\in C(H)$, we have $C(H')=C(H)$. So each $k$-node pattern is in exactly one class, which is obtained by changing its first $k'$ edges. Figure \ref{fig:twoclasses} shows two classes of graphs for $k=4$ (and hence $k'=1$). In this case the set $F$ consists of only one edge ($(v_0,v_1)$) and hence the graph classes are of size two. 

\begin{figure}[h]
\centering
    \begin{subfigure}[a]{0.4\textwidth}
        \centering
    \includegraphics[width=0.9\textwidth]{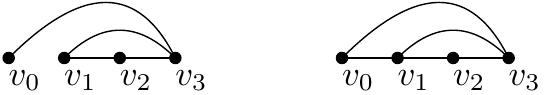}
 	\caption{class $c_1$}
        \label{c1}
    \end{subfigure}%
    \qquad
    \begin{subfigure}[a]{0.4\textwidth}
    \centering
    \includegraphics[width=0.9\textwidth]{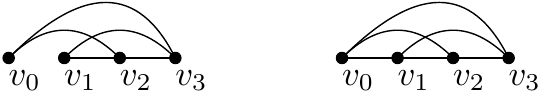}
 	\caption{class $c_2$}
    \label{c2}
    \end{subfigure}%
    \caption{Two graph classes for $k=4$. In both classes, the graphs in the class agree on all edges except the edge $v_0v_1$}
    \label{fig:twoclasses}
\end{figure}

\subsection{General Approach}
Our goal is to detect an unlabeled pattern by counting the number of patterns in different classes of graphs, which can be done as fast as the fastest algorithm for detecting $k-1$-clique (i.e. $C(n,k-1)$). Theorem \ref{generalcase} states this result formally and  we prove it at the end of this section. The graph classes possess some useful properties which we introduce in Theorem \ref{b_H} and Lemma \ref{classificationlemma} and provide their proofs in the Appendix. Using these properties, we show how to use graph classes to detect unlabeled patterns.

\begin{theorem}
\label{generalcase}
 Let $G$ be an $n$-node graph and let $c$ be one of the classes of $k$-node patterns. We can count the number of subgraphs in $G$ which map to a pattern in $c$ in $O(C(n,k-1))=O(M(n^{\floor{\frac{k-1}{3}}},n^{\ceil{\frac{k-1}{3}}},n^{\ceil{\frac{k-2}{3}}}))$ time, which is the runtime of the fastest algorithm for detecting $K_{k-1}$.
\end{theorem}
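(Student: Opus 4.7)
The plan is to reduce the counting problem for class $c$ to a single rectangular matrix multiplication of the same shape as the Ne\v{s}etril--Poljak product for $K_{k-1}$, by exploiting the fact that the $k'$ free edges in the class are exactly the edges out of $v_0$ to the smallest block of the standard three-way partition of the remaining $k-1$ vertices. Let $H$ be the class representative and, for each non-free pair $(v_i,v_j)$, let $G_{ij}$ be the $n\times n$ matrix with $G_{ij}[u,u']=1$ iff the pair $uu'$ has the same edge status in $G$ as $v_iv_j$ has in $H$. The quantity we want is
\[
T=\sum_{(u_0,\ldots,u_{k-1})\text{ distinct}}\prod_{(v_i,v_j)\notin F}G_{ij}[u_i,u_j].
\]
First I would compute the relaxed sum $T'$ that drops distinctness, and then recover $T$ by a standard inclusion--exclusion on the partition lattice of $\{0,\ldots,k-1\}$: each identification of two indices produces a counting problem on a pattern with at most $k-1$ distinct vertices, which by induction runs in $C(n,k-1)$ time.

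Partition $\{v_1,\ldots,v_{k-1}\}$ into $A=\{v_1,\ldots,v_a\}$, $B$, $C$ of sizes $a=\lfloor(k-1)/3\rfloor$, $b=\lceil(k-1)/3\rceil$, $c=\lceil(k-2)/3\rceil$. The key alignment is $a=k'$, so the free $v_0$-edges are precisely the $v_0$--$A$ edges; hence the constraints on $v_0$ live only between $v_0$ and $B\cup C$. Pack the remaining constraints into three matrices: $X[\vec A,\vec B]$ (within $A$, within $B$, and $A$--$B$ edges), $Y[\vec B,\vec C]$ (within $C$ and $B$--$C$ edges), and $Z[\vec A,\vec C]$ ($A$--$C$ edges); each table is filled in time bounded by the number of its entries. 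Then
\[
T'=\sum_{\vec A,\vec B,\vec C}X[\vec A,\vec B]\,Y[\vec B,\vec C]\,Z[\vec A,\vec C]\,h[\vec B,\vec C],\qquad h[\vec B,\vec C]=\sum_{u_0\in V}\prod_{j\in B\cup C}G_{0j}[u_0,u_j],
\]
so the entire $v_0$-summation is absorbed into the new matrix $h$.

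The heart of the algorithm is the computation of $h$. I factor it as $h=\Phi^{T}\Psi$ with $\Phi[u_0,\vec B]=\prod_{j\in B}G_{0j}[u_0,u_j]$ and $\Psi[u_0,\vec C]=\prod_{j\in C}G_{0j}[u_0,u_j]$, both buildable in $O(kn^{\max(b,c)+1})$ time, yielding $h$ via $M(n^b,n,n^c)=n^{\omega(b,1,c)}$. Since $a\ge 1$ for $k\ge 4$, monotonicity and permutation-symmetry of the matrix-multiplication exponent give $\omega(b,1,c)\le\omega(a,b,c)$, so computing $h$ fits within budget. With $h$ in hand I form $\tilde Y=Y\odot h$ (Hadamard product) in $O(n^{b+c})$ time, compute the main product $P=X\tilde Y$ in $M(n^a,n^b,n^c)=C(n,k-1)$ time, and finish with $T'=\sum_{\vec A,\vec C}P[\vec A,\vec C]\,Z[\vec A,\vec C]$ in $O(n^{a+c})$ time; adding the inclusion--exclusion corrections produces $T$.

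The main obstacle is not an algorithmic one but a careful bookkeeping of where each pattern constraint is recorded (so that exactly one of $X$, $Y$, $Z$, or $h$ contains it) and a verification that every auxiliary cost---filling $X,Y,Z,\Phi,\Psi$, forming Hadamard products, and especially the MM for $h$---is bounded by $C(n,k-1)$ via the inequality $\omega(b,1,c)\le\omega(a,b,c)$. Conceptually, the argument shows that the freedom built into the class---the $k'$ unconstrained edges from $v_0$---is precisely what permits folding the $k$-th vertex into a $(k-1)$-clique--shaped product without paying an extra factor of $n$.
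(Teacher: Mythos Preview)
Your approach is correct and very close to the paper's, but the two diverge in how they dispose of the distinctness constraint, and your justification there is the one soft spot.

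Both proofs compute a ``small'' product to aggregate the $v_0$-contribution (your $h=\Phi^T\Psi$ is exactly the paper's matrix $M=BC$, of shape $n^{k_1}\times n\times n^{k_2}$) and then a ``large'' $C(n,k-1)$-shaped product for the remaining $k-1$ vertices. The paper organizes the large product with the $k'$-block as the \emph{inner} dimension ($M'=B'C'$, shape $n^{k_1}\times n^{k'}\times n^{k_2}$), whereas you put $B$ inside ($P=X\tilde Y$, shape $n^{k'}\times n^{k_1}\times n^{k_2}$); by symmetry of $M(\cdot,\cdot,\cdot)$ the cost is the same.

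The real difference is distinctness. The paper never does inclusion--exclusion: it observes that for each fixed $(p',p'')\in S$ and each valid $(w_1,\dots,w_{k'})$-tuple $p_1$, exactly $r$ components of $p_1$ are themselves valid choices for $w_0$, where $r$ depends only on the pattern (it is the number of $i\le k'$ for which $(w_i,w_{k'+1},\dots,w_{k-1})$ maps to $(w_0,w_{k'+1},\dots,w_{k-1})$). Hence $\sum_{p\in S}(M_{p',p''}-r)\,M'_{p',p''}$ already counts only tuples with $u_0\notin\{u_1,\dots,u_{k'}\}$, and all other distinctness is enforced inside $B,C,B',C',S$ because those encode labeled isomorphisms. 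Your Möbius-on-partitions route also works, but ``by induction'' is not the right citation: the correction terms $S_\tau$ are not instances of the theorem you are proving (after collapsing, the free-edge structure need not match any class on $|\tau|$ nodes). What you should say is that each $S_\tau$ is a sum over (not necessarily distinct) $|\tau|$-tuples of a product of $0/1$ pairwise constraints, and any such sum on $\ell\le k-1$ positions is computable in $C(n,\ell)\le C(n,k-1)$ time by the ordinary Ne\v{s}et\v{r}il--Poljak three-way split. You also need to fix a convention for the diagonal entries of $G_{ij}$ (e.g.\ take $G_{ij}=A$ or $J-A$) so that the lattice inversion is literally correct.
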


Now we need to relate unlabeled patterns to pattern classes. Each unlabeled $k$-node pattern has $k!$ possible vertex orderings. We say that an unlabeled pattern $\tilde{H}$ embeds in class $c$ if there is an ordering of vertices of $\tilde{H}$ which is in $c$. Let $U(c)$ be the set of unlabeled patterns that embed in $c$. For example, for the classes $c_1$ and $c_2$ in Figure \ref{fig:twoclasses}, $U(c_1)$ consists of the diamond (also called diam for abbreviation) and the paw (depicted in Figure \ref{unlabeled_patterns}), and $U(c_2)$ consists of the diamond and $K_4$. 
\begin{figure}
  \centering
    \includegraphics[width=0.3\textwidth]{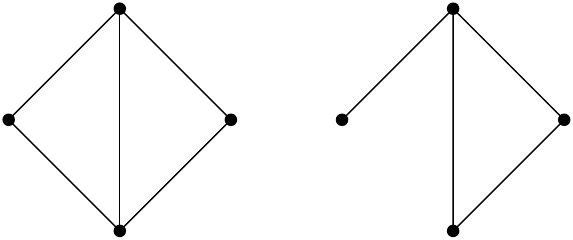}
    \caption{The diamond graph on the left and the paw graph on the right.}
	\label{unlabeled_patterns}
\end{figure}
For each unlabeled pattern $\tilde{H}$, let $\alpha_{\tilde{H}}^c$ denote the number of ways $\tilde{H}$ can be embedded in $c$, i.e. the number of vertex orderings of $\tilde{H}$ that put $\tilde{H}$ into $c$. In the example of Figure \ref{fig:twoclasses}, $\alpha_{diam}^{c_1} = 4 = \alpha_{diam}^{c_2}$,  $\alpha_{paw}^{c_1} = 2$ and $\alpha_{K_4}^{c_2} = 24$. In this example, the $\alpha^c_{\tilde{H}}$ numbers are all equal to $|Aut({\tilde{H}})|$, \footnote{$Aut(\tilde{H})$ is the automorphism group of $\tilde{H}$.} each class contains at most one labeled copy of each $H$; in general, this need not be the case.

Let $n_{\tilde{H}}$ be the number of copies of $\tilde{H}$ in $G$. We have the following corollary:

\begin{corollary}
\label{sumform}
The number of (labeled) subgraphs in $G$ which map to a pattern in $c$ is $\sum_{\tilde{H}\in U(c)} \alpha_{\tilde{H}}^c n_{\tilde{H}}$. 
\end{corollary}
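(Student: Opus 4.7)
The plan is a direct double-count, partitioning the labeled $k$-vertex subgraphs of $G$ lying in $c$ according to the isomorphism type of their underlying induced subgraph. The proof amounts to carefully unpacking the definitions: there is no real obstacle beyond verifying an isomorphism-invariance claim about the count $\alpha_{\tilde H}^c$.

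First I would unpack the definition: a ``(labeled) subgraph of $G$ mapping to a pattern in $c$'' is a sequence $(u_0,\ldots,u_{k-1})$ of distinct vertices of $G$ such that the induced labeled graph $G[\{u_0,\ldots,u_{k-1}\}]$ with this vertex ordering belongs to $c$. Let $\mathcal{L}$ denote the collection of all such sequences; the goal is to show $|\mathcal{L}|=\sum_{\tilde H\in U(c)} \alpha_{\tilde H}^c\, n_{\tilde H}$. The underlying unordered vertex set $S=\{u_0,\ldots,u_{k-1}\}$ of any element of $\mathcal{L}$ determines an induced subgraph $G[S]$ whose unlabeled isomorphism type $\tilde H$ must, by the very definition of $U(c)$, lie in $U(c)$, since the ordering $(u_0,\ldots,u_{k-1})$ itself witnesses that some ordering of $G[S]$ places it in $c$.

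Next I would fix an unlabeled $\tilde H\in U(c)$ and count how many elements of $\mathcal{L}$ have underlying induced-subgraph type $\tilde H$. Partition further by $S$: there are exactly $n_{\tilde H}$ choices of $S\subseteq V(G)$ with $G[S]\cong \tilde H$, and for each such $S$ fix an arbitrary isomorphism $\phi\colon V(\tilde H)\to S$. Since membership in $c$ is a property of the edge pattern on the vertex labels $v_0,\ldots,v_{k-1}$ alone, a sequence $(u_0,\ldots,u_{k-1})$ enumerating $S$ lies in $c$ if and only if the pulled-back sequence $(\phi^{-1}(u_0),\ldots,\phi^{-1}(u_{k-1}))$ enumerating $V(\tilde H)$ lies in $c$. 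Hence $\phi$ induces a bijection between orderings of $V(\tilde H)$ that place $\tilde H$ in $c$ and orderings of $S$ that place $G[S]$ in $c$, so the number of sequences contributed by $S$ is exactly $\alpha_{\tilde H}^c$, independent of the choice of $S$ (and of $\phi$).

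Summing first over the $n_{\tilde H}$ choices of $S$ and then over $\tilde H\in U(c)$ yields $|\mathcal{L}|=\sum_{\tilde H\in U(c)}\alpha_{\tilde H}^c\, n_{\tilde H}$, as claimed. The only point requiring care is the isomorphism-invariance used above: because classes are defined purely in terms of labeled edge relations on the label set $\{v_0,\ldots,v_{k-1}\}$, any graph isomorphism transports ``in $c$'' orderings to ``in $c$'' orderings, so each unordered copy of $\tilde H$ in $G$ contributes the same number $\alpha_{\tilde H}^c$ of elements to $\mathcal{L}$. Once that observation is made, the identity is just a rearrangement of the sum.
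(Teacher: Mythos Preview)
Your argument is correct and is exactly the natural double-count that the paper has in mind; the paper itself states this corollary without proof, treating it as immediate from the definitions of $U(c)$, $\alpha_{\tilde H}^c$, and $n_{\tilde H}$, and your write-up simply spells out those details carefully.
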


The numbers $\alpha_{\tilde{H}}^c$ have some useful properties as shown in the next theorem.
\begin{theorem}
\label{b_H} 
\sloppy For any unlabeled pattern $\tilde{H}$ we have $|Aut(\tilde{H})|\, | \, \alpha_{\tilde{H}}^c$.
Moreover, for any class $c$, we have 
$$
\sum_{\tilde{H}\in U(c)} \frac{\alpha_{\tilde{H}}^c}{|Aut(\tilde{H})|}= 2^{k'}.
$$
\end{theorem}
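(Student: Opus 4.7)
The plan is to exhibit a natural free action of $Aut(\tilde H)$ on orderings and then combine this with a double-counting argument. Fix an unlabeled pattern $\tilde H$. View an ordering of $\tilde H$ as a bijection $\sigma:\{0,\dots,k-1\}\to V(\tilde H)$; it induces a labeled pattern $H_\sigma$ on vertex set $\{0,\dots,k-1\}$ whose edges are the pairs $(i,j)$ with $(\sigma(i),\sigma(j))\in E(\tilde H)$. By definition, $\alpha_{\tilde H}^c$ is exactly the number of orderings $\sigma$ with $H_\sigma\in c$.

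For the divisibility statement, consider the left action $\sigma\mapsto \phi\circ\sigma$ of $Aut(\tilde H)$ on the set of orderings. Because $\phi$ preserves edges of $\tilde H$, one has $H_{\phi\circ\sigma}=H_\sigma$, so membership of $H_\sigma$ in $c$ is preserved by this action. The action is also free: if $\phi\circ\sigma=\sigma$ pointwise then $\phi=\mathrm{id}$. Hence the set of orderings counted by $\alpha_{\tilde H}^c$ decomposes into orbits each of size exactly $|Aut(\tilde H)|$, giving $|Aut(\tilde H)|\,\big|\,\alpha_{\tilde H}^c$.

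For the second identity, observe that two orderings $\sigma_1,\sigma_2$ satisfy $H_{\sigma_1}=H_{\sigma_2}$ iff $\sigma_2\sigma_1^{-1}\in Aut(\tilde H)$, so every labeled pattern whose underlying unlabeled graph is $\tilde H$ is realized by exactly $|Aut(\tilde H)|$ orderings. Thus $\alpha_{\tilde H}^c/|Aut(\tilde H)|$ is precisely the number of distinct labeled patterns in $c$ whose underlying unlabeled graph is isomorphic to $\tilde H$, and summing over $\tilde H\in U(c)$ counts every labeled pattern in $c$ exactly once. A labeled pattern lies in $c$ iff it agrees with a chosen representative on every pair outside $F=\{(v_0,v_1),\dots,(v_0,v_{k'})\}$, and each of the $k'$ pairs in $F$ may independently be an edge or a non-edge, so the class contains exactly $2^{|F|}=2^{k'}$ labeled patterns. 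This yields the claimed sum.

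The main thing to watch is keeping the three objects -- the unlabeled graph $\tilde H$, the ordering $\sigma$, and the induced labeled pattern $H_\sigma$ -- clearly distinguished; once this bookkeeping is fixed, both parts reduce to the standard orbit/stabilizer observation for a free group action, together with counting the $2^{k'}$ free adjacencies that define a class. No deeper structural property of the patterns or of $F$ is needed.
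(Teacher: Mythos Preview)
Your proof is correct and follows essentially the same approach as the paper: both arguments identify $\alpha_{\tilde H}^c/|Aut(\tilde H)|$ with the number of distinct labeled patterns in $c$ isomorphic to $\tilde H$ (the paper calls this quantity $b_{\tilde H}^c$), and then sum over $\tilde H$ to count all $2^{k'}$ labeled patterns in the class. Your presentation is a bit more explicit about the free group action underlying the orbit count, but the content is the same.
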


First note that this theorem gives us upper and lower bounds on the size of $U(c)$. Each term in the above summation contributes at least $1$, so $|U(c)|\le 2^{k'}$. Moreover since $c$ has at least $k'+1$ labeled patterns which have different numbers of edges, we have $|U(c)|\ge k'+1$.  So we get the following corollary.
\begin{corollary}
\label{sizeofU}
 For any class $c$, we have $2^{k'}\ge|U(c)|\ge k'+1$.
 \end{corollary}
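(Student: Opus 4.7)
The plan is to extract both bounds directly from Theorem \ref{b_H} together with the structural definition of a class; no new machinery is needed, since all the real work has been done in establishing the identity $\sum_{\tilde H \in U(c)} \alpha_{\tilde H}^c / |Aut(\tilde H)| = 2^{k'}$ and the divisibility statement $|Aut(\tilde H)| \mid \alpha_{\tilde H}^c$.

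For the upper bound I would argue that every summand in the identity of Theorem \ref{b_H} is a \emph{positive integer}. Integrality is exactly the divisibility clause of the theorem, and positivity follows because $\tilde H \in U(c)$ means that $\tilde H$ admits at least one vertex ordering lying in $c$, so $\alpha_{\tilde H}^c \ge 1$. A sum of at least $|U(c)|$ positive integers equalling $2^{k'}$ then forces $|U(c)| \le 2^{k'}$.

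For the lower bound I would appeal directly to the definition of the class. All labeled patterns in $c$ agree on every pair of vertices outside the distinguished set $F = \{(v_0,v_1),\ldots,(v_0,v_{k'})\}$, and range freely over the $2^{k'}$ possible edge-subsets of $F$. In particular, for each $j \in \{0,1,\ldots,k'\}$ there is a labeled pattern in $c$ that includes exactly $j$ of the $F$-edges, and forgetting the vertex order produces an unlabeled graph whose total edge count equals $j$ plus the fixed number of edges outside $F$. Since isomorphic graphs have equal edge counts, these $k'+1$ unlabeled patterns are pairwise non-isomorphic and each lies in $U(c)$, yielding $|U(c)| \ge k'+1$.

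The only subtlety worth flagging is making sure that positivity in the upper-bound step is correctly invoked: it is essential that $U(c)$ is exactly the set of unlabeled patterns with $\alpha_{\tilde H}^c \ge 1$, so that every term in the sum from Theorem \ref{b_H} is at least $1$. This is immediate from the definition of $U(c)$, so the corollary reduces to a short bookkeeping argument once Theorem \ref{b_H} is in hand.
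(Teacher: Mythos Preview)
Your proposal is correct and follows essentially the same approach as the paper: the upper bound comes from each summand in Theorem \ref{b_H} being a positive integer summing to $2^{k'}$, and the lower bound comes from the $k'+1$ labeled patterns in $c$ with distinct edge counts (obtained by including $0,1,\ldots,k'$ of the $F$-edges) yielding pairwise non-isomorphic unlabeled graphs. Your write-up is slightly more explicit about why each summand is at least $1$ and why different edge counts force non-isomorphism, but the argument is the same.
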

Define $b_{\tilde{H}}^c=\frac{\alpha_{\tilde{H}}^c}{|Aut(\tilde{H})|}$. By Theorem \ref{b_H}, the number of subgraphs in $G$ that map to a pattern in $c$ computed by Theorem \ref{generalcase} is of the following form:
\begin{equation}
\label{formofeq}
\sum_{\tilde{H}\in U(c)} b_{\tilde{H}}^c |Aut(\tilde{H})| n_{\tilde{H}}
\end{equation}
So far we showed how each pattern class relates to unlabeled patterns. Now we show how we can obtain different pattern classes from unlabeled patterns.
 
 \begin{lemma}
 \label{classificationlemma}
Let $\tilde{H}$ be an unlabeled $k$-node pattern. For an arbitrary vertex with degree at least $k'$, consider $k'$ of the edges attached to it; namely $e_1,\ldots, e_{k'}$. Let $S$ be the set of all graphs obtained by removing any number of the edges in $\{e_1,\ldots,e_{k'}\}$. Then there is a class $c$, such that $U(c)=S$. Moreover, $b_{\tilde{H}}^c=1$, and $\tilde{H}$ is the pattern with maximum number of edges in $c$.
\end{lemma}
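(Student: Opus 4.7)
The plan is to construct an explicit class $c$ realizing $U(c)=S$ by choosing a careful labeling of the vertices of $\tilde{H}$. Let $v$ be the given vertex of degree at least $k'$, and let $u_i$ be the other endpoint of $e_i$ for $i=1,\ldots,k'$. Define a labeled pattern $H$ by setting $v_0 = v$, $v_i = u_i$ for $i \le k'$, and placing the remaining $k - k' - 1$ vertices of $\tilde{H}$ in positions $v_{k'+1},\ldots, v_{k-1}$ in an arbitrary order. Let $c = C(H)$. Under this labeling the set $F = \{(v_0,v_1),\ldots,(v_0,v_{k'})\}$ coincides exactly with $\{e_1,\ldots,e_{k'}\}$, and $c$ consists of all labeled $k$-patterns agreeing with $H$ on pairs outside $F$.

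To show $U(c) = S$, observe that every labeled pattern $H' \in c$ is obtained from $H$ by picking an arbitrary subset $F' \subseteq F$ as its $F$-edges. Since $H$ has all of $F$ as edges (they are the $e_i$), each labeled pattern in $c$ corresponds to removing the subset $F \setminus F'$ of edges from $\tilde{H}$. Passing to underlying unlabeled patterns gives $U(c) \subseteq S$. Conversely, any pattern in $S$---an unlabeled graph obtained from $\tilde{H}$ by deleting a subset of the $e_i$---can be labeled using the same vertex ordering as $H$ to produce a labeled pattern in $c$, so $S \subseteq U(c)$. In particular, $\tilde{H}$ itself arises from the choice $F' = F$ and lies in $U(c)$.

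For $b_{\tilde{H}}^c = 1$ together with the max-edge claim, I would use a counting argument. Writing $H_{F'}$ for the labeled pattern in $c$ with $F$-edge set $F'$, one has $|E(H_{F'})| = |E(H)| - k' + |F'|$, so the maximum edge count $|E(\tilde{H})|$ is achieved uniquely at $F' = F$, i.e., at $H$ itself. Now count orderings $(w_0,\ldots,w_{k-1})$ of $V(\tilde{H})$ that embed $\tilde{H}$ into $c$: the resulting labeled pattern is a relabeling of $\tilde{H}$, so it has $|E(\tilde{H})|$ edges and must equal $H_{F'}$ for some $F'$; the edge-count constraint forces $F' = F$, hence the labeled pattern equals $H$. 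Thus the map $v_i \mapsto w_i$ is an automorphism of $\tilde{H}$, so valid orderings are in bijection with $\mathrm{Aut}(\tilde{H})$, giving $\alpha_{\tilde{H}}^c = |\mathrm{Aut}(\tilde{H})|$ and $b_{\tilde{H}}^c = 1$.

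The delicate step, and the main obstacle, is this last counting argument: showing that a valid relabeling of $\tilde{H}$ placing it in $c$ must in fact coincide with $H$ (and thus be an automorphism) relies crucially on both $\tilde{H}$ being the unique maximum-edge pattern in $c$ and on any relabeling preserving the total edge count. This interplay between edge-count preservation and the maximality of $\tilde{H}$ in $c$ is what collapses the embedding count from $k!$ down to exactly $|\mathrm{Aut}(\tilde{H})|$.
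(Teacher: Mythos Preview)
Your proof is correct and follows essentially the same approach as the paper: both choose the labeling with the distinguished vertex at position $0$ and the endpoints of $e_1,\ldots,e_{k'}$ at positions $1,\ldots,k'$, then argue $U(c)=S$ by noting that varying the $F$-edges exactly corresponds to deleting subsets of $\{e_1,\ldots,e_{k'}\}$. For $b_{\tilde{H}}^c=1$, the paper uses the alternative characterization of $b_{\tilde{H}}^c$ (from the proof of Theorem~\ref{b_H}) as the number of $F$-edge configurations yielding $\tilde{H}$, observing that only the ``all edges present'' choice works; your route via $\alpha_{\tilde{H}}^c=|\mathrm{Aut}(\tilde{H})|$ is a bit more explicit but rests on the identical observation that $\tilde{H}$ is the unique maximum-edge labeled pattern in $c$.
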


Applying Lemma \ref{classificationlemma} to our example, consider $K_4$ as the initial pattern and consider an arbitrary edge of it. Then the set $S$ consists of the diamond and $K_4$, and so $U(c_2) = S$. Moreover, since $|Aut(K_4)|=24=\alpha_{K_4}^{c_2}$, we have $b_{K_4}^{c_2}=1$. So by Theorem \ref{b_H}, $b_{diam}^{c_2}=2-1=1$. Similarly if we consider the diamond as the initial pattern and take the edge between the degree three vertices, then the set $S$ consists of the diamond and the paw, and so $U(c_1)=S$. Moreover, since $|Aut(diam)|=4=\alpha_{diam}^{c_1}$, we have $b_{diam}^{c_1}=1$, and hence $b_{paw}^{c_1}=1.$

Now we are ready to show how to detect unlabeled patterns using graph classes. First let $B_r$ be the set of unlabeled patterns $\tilde{H}$ such that $r\mid |Aut(\tilde{H})|$. Note that we have $K_k, \bar{K_k} \in B_r$ for all $r$ such that $r| k!$ (where $K_k$ is the $k$-clique and $\bar{K_k}$ is the $k$-Independent set). For a fixed unlabeled pattern $\tilde{H}$ which is not the $k$-Independent Set or the $k$-Clique, the idea is to compute the sums of the form (\ref{formofeq}) for different pattern classes $c$, such that a linear combination of these sums gives us a sum consisting of only the terms from $\tilde{H}$ and patterns $\tilde{H'}\in B_r$ for some $r$ such that $r \not |\,|Aut(\tilde{H})|$. More specifically, we want to compute a sum of the following form:
 \begin{equation}
 \label{goaleqform}
 |Aut(\tilde{H})|n_{\tilde{H}}+\sum_{\tilde{H'}\in B_r} d_{\tilde{H'}}|Aut(\tilde{H'})|n_{\tilde{H'}}
 \end{equation}
where $d_{\tilde{H'}}$ are some integers. Then using the fact that this sum is equal to $|Aut(\tilde{H})|n_{\tilde{H}}$ modulo $r$,
by the approach of Vassilevska W. et al.~\cite{four-nodes} we can assume with constant probability that  $r\not |\, n_{\tilde{H}}$, and hence we can detect $\tilde{H}$ in $G$. 

We provide the proof of Theorem \ref{generalcase}, and in the next section we use our approach to show that for each $k$, there is a pattern that can be detected in time $O(C(n,k-1))$. Moreover, in the Appendix we show how our approach is used to prove that any $k$-node pattern except $k$-clique and $k$-independent set can be detected in $O(C(n,k-1))$ time, for $k\le 6$.

\subsection{Proof of Theorem \ref{generalcase}}
The general idea is to remove one vertex, divide the rest of the vertices into three (almost) equal parts. Then form two matrices such that the first matrix captures the subgraphs isomorphic to the removed vertex plus the first part, and the second matrix captures the subgraphs isomorphic to the removed vertex plus the second and the third part, and then use matrix multiplication to count the number of subgraphs isomorphic to the whole pattern in the host graph. We show the approach more formally below.

Let $V(G) = \{v_1,\ldots,v_n\}$. 
Let $H=(w_0,\ldots,w_{k-1})$ be an arbitrary pattern in $c$ (so $c=C(H)$). Recall that $k'=\lfloor \frac{k-1}{3}\rfloor$. Our algorithm consists of three steps. In step one, for each $t=k-k'-1$ vertices $v_{i_1},\ldots, v_{i_t}$, we count the number of vertices $u$ in $G$ such that the subgraph $(u,v_{i_1},\ldots, v_{i_t})$ in $G$ maps to the subgraph $(w_0,w_{k'+1},w_{k'+2},\ldots, w_{k-1})$ in $H$. In step two, we count the number of $k'$-tuples $(v_{j_1},\ldots, v_{j_{k'}})$ such that the subgraph $(v_{j_1},\ldots, v_{j_{k'}}, v_{i_1},\ldots, v_{i_t})$ in $G$ maps to the subgraph $(w_1,\ldots, w_{k-1})$ in $H$.  In step three, we show 
how to combine the numbers obtained in the last two steps to get the resulting value.

Before we explain each step, here is some notation. Let $k_1 = \ceil{\frac{k-1}{3}}$ and $k_2=\ceil{\frac{k-2}{3}}$. Note that $k_1,k_2\in \{k',k'+1\}$ and $k'+k_1+k_2=k-1$. Define the set $S$ to be all $t$-tuples $p=(v_{i_1},\ldots, v_{i_t})$ where the subgraph induced by $p$ maps to the subgraph $(w_{k'+1},\ldots, w_{k-1})$ in $H$. We can write each $t$-tuple $p$ with a pair of $k_1$ and $k_2$ tuples, $p'$ and $p''$; i.e. $p'=(v_{i_1},\ldots, v_{i_{k_1}})$ and $p''=(v_{i_{k_1+1}},\ldots, v_{i_{t}})$

{\bf Step one}: Construct two matrices $B$ and $C$ of sizes $n^{k_1}\times n$ and $n\times n^{k_2} $ as follows: For each $k_1$-tuple $p_1=(v_{i_1},\ldots, v_{i_{k_1}})$ and each vertex $v_h\in G$, let $B_{p_1,v_h}=1$ if the subgraph $(v_h,p_1)$ in $G$ maps to the subgraph $(w_0,w_{k'+1},\ldots, w_{k'+k_1})$ in $H$.  Otherwise set it to $0$. For each $k_2$-tuple $p_2=(v_{j_1},\ldots, v_{j_{k_2}})$ and each vertex $v_h\in G$, let $C_{v_h,p_2}=1$ if the subgraph $(v_h,p_2)$ in $G$ maps to the subgraph $(w_0,w_{k'+k_1+1},\ldots, w_{k-1})$ in $H$.  Otherwise set it to $0$. Compute $M=BC$. For any $p_1=(v_{i_1},\ldots, v_{i_{k_1}})$ and $p_2=(v_{j_1},\ldots, v_{j_{k_2}})$ such that the $t$-tuple $(p_1,p_2)\in S$, we have $M_{p_1,p_2}$ is the number of vertices $u$ such that the subgraph $(u,p_1,p_2)$ in $G$ maps to the subgraph $(w_0,w_{k'+1},\ldots,w_{k-1})$ in $H$.

\sloppy {\bf Step two:} Construct two matrices $B'$ and $C'$ of sizes $n^{k_1}\times n^{k'}$ and $n^{k'}\times n^{k_2} $ as follows: For each $k_1$-tuple $p_2=(v_{i_1},\ldots, v_{i_{k_1}})$ and each $k'$-tuple $p_1=(v_{j_1},\ldots, v_{j_{k'}})$ in $G$, let $B_{p_2,p_1}'=1$ if the subgraph $(p_1,p_2)$ in $G$ maps to the subgraph $(w_1,\ldots, w_{k'+k_1})$ in $H$.  Otherwise set it to $0$. For each $k_2$-tuple $p_3=(v_{h_1},\ldots, v_{h_{k'}})$ and each $k'$-tuple $p_1=(v_{j_1},\ldots, v_{j_{k'}})$ in $ G$, let $C_{p_1,p_3}'=1$ if the subgraph $(p_1,p_3)$ in $G$ maps to the subgraph $(w_1,\ldots, w_{k'},w_{k'+k_1+1},\ldots, w_{k-1})$ in $H$.  Otherwise set it to $0$. Compute $M'=B'C'$. For any $p_2=(v_{i_1},\ldots, v_{i_{k_1}})$ and $p_3=(v_{h_1},\ldots, v_{h_{k_2}})$ such that the $t$-tuple $(p_2,p_3)\in S$, we have $M_{p_1,p_3}'$ is the number of $k'$-tuples $p_1$ in $G$ such that the subgraph $(p_1,p_2,p_3)$ in $G$ maps to the subgraph $(w_1,\ldots,w_{k-1})$ in $H$.

{\bf Step three:} Let $r$ be the number of vertices $w_i$ in $\{w_1,\ldots,w_{k'}\}$, such that the subgraph $(w_i,w_{k'+1},\ldots,w_{k-1})$ in $H$ maps to the subgraph $(w_0,w_{k'+1},\ldots,w_{k-1})$ in $H$. 
Compute the following sum using matrices $M$ and $M'$:
\begin{equation}
\label{sumofmatrices}
\sum_{p\in S}  (M_{p',p''}-r)M_{p',p''}'
\end{equation}

\sloppy If $r=0$, by the way we constructed $M$ and $M'$, each number $M_{p',p''}M_{p',p''}'$ is the number of $k'+1$ tuples $(v_{i_0},\ldots, v_{i_{k'}})$ such that the subgraph $(v_{i_0},p',p'')$ in $G$ maps to the subgraph $(w_0,w_{k'+1},\ldots,w_{k-1})$ in $H$, and the subgraph $(v_{i_1},\ldots,v_{i_{k'}},p',p'')$ in $G$ maps to the subgraph $(w_1,\ldots,w_{k-1})$ in $H$. So the number in equation \ref{sumofmatrices} is the number of subgraphs in $G$ which map to a pattern in $c$. Now if $r>0$, then each $k'$-tuple that is counted in $M_{p',p''}'$ contains exactly $r$ vertices that are also counted in $M_{p',p''}$ and cannot be used simultaneously. So in this case, the number $(M_{p',p''}-r)M_{p',p''}'$ counts the number of $k'+1$ tuples with the property mentioned above.

\sloppy Now we analyze the running time. $M$ and $M'$ in step one and two can be computed in $O(M(n^{\floor{\frac{k-1}{3}}},n,n^{\ceil{\frac{k-2}{3}}}))$ and $O(M(n^{\floor{\frac{k-1}{3}}},n^{\ceil{\frac{k-1}{3}}},n^{\ceil{\frac{k-2}{3}}}))$ time, respectively, using rectangular matrix multiplication. By checking all $t$-tuples of vertices in $G$ in $n^{t}$ time, we can identify the set $S$, and then the sum in step three can be computed in $O(|S|)\le O(n^t)$ time. Note that $O(M(n^{\floor{\frac{k-1}{3}}},n^{\ceil{\frac{k-1}{3}}},n^{\ceil{\frac{k-2}{3}}}))\ge n^{\max{(\floor{\frac{k-1}{3}}+\ceil{\frac{k-1}{3}},\ceil{\frac{k-1}{3}}+\ceil{\frac{k-2}{3}}})}$ which is the size of the input in rectangular matrix multiplication, and also we have $t=k-1-k' \le \max{(\floor{\frac{k-1}{3}}+\ceil{\frac{k-1}{3}},\ceil{\frac{k-1}{3}}+\ceil{\frac{k-2}{3}}})$. So the total the running time is $O(M(n^{\floor{\frac{k-1}{3}}},n^{\ceil{\frac{k-1}{3}}},n^{\ceil{\frac{k-2}{3}}}))$.

\section{Patterns easier than cliques}
\label{induced-easier-patterns}
Using the approach of Section \ref{induced-upperbound}, we show that for any $k$, there is a pattern that contains a $k-1$-clique and can be detected in $O(C(n,k-1))$ time in an $n$-node graph $G$. Since this pattern has a $k-1$-clique as a subgraph, it is at least as hard as $k-1$-clique to detect, which means that the runtime obtained for it is \emph{tight}, if we assume that the best runtime for detecting $k-1$-clique is $O(C(n,k-1))$. Let $H_s^k$ be the $k$-node pattern consisting of a $(k-1)$-clique and a vertex adjacent to $s$ vertices of the $(k-1)$-clique. Assume that $s\ge \ceil{\frac{k-1}{2}}$. If $s\neq k-2$, then $|Aut(H_s^k)|=s!(k-s-1)!$. For $s=k-2$, $|Aut(H_{k-2}^k)|=(k-2)!2!$. So in all cases $|Aut(H_s^k)|$ is divisible by $s!(k-s-1)!$.

\begin{theorem}
\label{thm:easierpatterns}
Let $k$ be any positive integer, and suppose that there exists $s$, $\ceil{\frac{k-1}{2}}\le s\le k-1-\floor{\frac{k-1}{3}}$, such that $s+1$ is a prime number. Then $H_{s}^k$ can be detected in $C(n,k-1)$ time with high probability. 
\end{theorem}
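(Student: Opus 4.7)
The plan is to apply the framework of Section~\ref{induced-upperbound}: I will construct one pattern class $c$ containing $H_s^k$, compute the corresponding sum with Theorem~\ref{generalcase}, and then reduce modulo $r = s+1$ so that only the $n_{H_s^k}$ term survives.

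Set $k' = \lfloor(k-1)/3\rfloor$. I will invoke Lemma~\ref{classificationlemma} with the initial unlabeled pattern $\tilde{H} = H_{s+k'}^k$, choosing its degree-$(s+k')$ special vertex and $k'$ of the edges attached to it. This is valid: by hypothesis $s+k' \le k-1$, so $H_{s+k'}^k$ exists, and the special vertex has degree $s+k' \ge k'$. The lemma then yields a class $c$ with $U(c) = \{H_{s+k'-j}^k : 0 \le j \le k'\} = \{H_s^k, H_{s+1}^k, \ldots, H_{s+k'}^k\}$ and $b_{H_{s+k'}^k}^c = 1$. A direct count confirms $b_{H_{s+i}^k}^c = \binom{k'}{i}$ for each $i$: a labeled pattern in $c$ is determined by which subset of the $k'$ variable edges is switched on, and switching on exactly $i$ of them yields the unlabeled pattern $H_{s+i}^k$. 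In particular $b_{H_s^k}^c = 1$. Then Theorem~\ref{generalcase} lets me compute the quantity
\[
Q \;=\; \sum_{i=0}^{k'} \alpha_{H_{s+i}^k}^c\, n_{H_{s+i}^k}
\]
in $O(C(n,k-1))$ time.

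Next I will show $Q \equiv |Aut(H_s^k)|\, n_{H_s^k} \pmod{s+1}$. For each $i \ge 1$ the primality of $s+1$ gives $(s+1) \mid |Aut(H_{s+i}^k)|$: when $s+i \ne k-2$, $|Aut(H_{s+i}^k)| = (s+i)!(k-1-s-i)!$ and $s+i \ge s+1$ forces $(s+1) \mid (s+i)!$; when $s+i = k-2$ (which forces $i \ge 1$ and hence $s \le k-3$), $|Aut| = 2(k-2)!$ and $s+1 \le k-2$ gives $(s+1) \mid (k-2)!$. Since $|Aut(\tilde{H})|$ divides $\alpha_{\tilde{H}}^c$ by Theorem~\ref{b_H}, every $i \ge 1$ term of $Q$ vanishes modulo $s+1$. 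The $i=0$ term has $\alpha_{H_s^k}^c = b_{H_s^k}^c \cdot |Aut(H_s^k)| = |Aut(H_s^k)|$, and I verify $(s+1) \nmid |Aut(H_s^k)|$ from the primality of $s+1$ together with $s+1 > s$ and (crucially) $s+1 > k-1-s$, which follows from $s \ge \lceil(k-1)/2\rceil$; the borderline case $s=k-2$ is handled by observing $s+1 = k-1$ does not divide $2(k-2)!$ for $k \ge 4$.

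Finally I will combine this with the random-sampling reduction of Lemma~\ref{samplinglemma}. Sample a random induced subgraph $G'$ of $G$ by keeping each vertex independently with probability $1/2$, compute $Q$ on $G'$ using the algorithm above, and multiply by $|Aut(H_s^k)|^{-1} \pmod{s+1}$ (which exists because $\gcd(|Aut(H_s^k)|, s+1)=1$) to recover $n_{H_s^k}(G') \bmod (s+1)$. If $G$ contains a copy of $H_s^k$, Lemma~\ref{samplinglemma} guarantees a nonzero residue with probability at least $2^{-k}$; repeating $O(2^k \log n)$ independent trials amplifies this to high probability, giving total running time $\tilde{O}(C(n,k-1))$. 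Rather than an obstacle, the key substantive idea is the choice of the \emph{super-pattern} $H_{s+k'}^k$ (rather than $H_s^k$ itself) as the initial pattern in Lemma~\ref{classificationlemma}: this ensures every other pattern in $U(c)$ has strictly more than $s$ neighbors of its special vertex, so the factor $(s+i)!$ in its automorphism count is automatically divisible by the prime $s+1$. The only routine case analysis required is the boundary $s+i = k-2$, dispatched above.
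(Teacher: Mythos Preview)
Your proof is correct and follows essentially the same approach as the paper's. The only cosmetic difference is that you produce the class $c$ by applying Lemma~\ref{classificationlemma} to the ``top'' pattern $H_{s+k'}^k$ and removing edges, whereas the paper orders $H_s^k$ directly with $v_0$ as the special vertex and $v_1,\ldots,v_{k'}$ as its non-neighbors (then adds edges); both descriptions yield the same class and the same coefficient $b_{H_s^k}^c=1$, and the divisibility argument modulo the prime $s+1$ is identical.
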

\begin{proof}
Let the vertex outside the $(k-1)$-clique in $H_{s}^k$ be $v_0$. We know that if $k'=\floor{\frac{k-1}{3}}$, there are at least $k'$ vertices that are not attached to $v_0$ because $s\le k-1-k'$. Let $v_1,\ldots,v_{k'}$ be $k'$ of the vertices of the $(k-1)$-clique that $v_0$ is not attached to. 
Let $v_{k'+1},\ldots,v_k$ be the rest of the vertices. Consider the ordering $H=(v_0,v_1,\ldots, v_k)$ of $H_s^k$, and let $c=p(H)$ be the class defined by $H$. Note that $U(c)$, which is the set of unlabeled graphs that can be embedded in $c$, is $\{H_s^k,H_{s+1}^k,\ldots,H_{s+k'}^k\}$. So the equation we get from this class in time $O(C(n,k-1))$ is $Q=\sum_{i=0}^{k'} b_i |Aut(H_{s+i}^k)|n_{H_{s+i}^k}$, where $b_i$ is some integer and $b_0=1$ (by an argument similar to Lemma \ref{classificationlemma}). Since $s\ge (k-1)/2$, we have that $s+1> k-s-1$, and so $|Aut(H_s^k)|$ is not divisible by $s+1$, which means that the coefficient of $n_{H_{s}^k}$ in the equation is not divisible by $s+1$. However, for all $i\ge 1$, we have that $|Aut(H_{s+i}^k)|$ is divisible by $s+1$. So $Q$ is of the form (\ref{goaleqform}) for $r=s+1$, and hence we can detect $H_s^k$ in time $O(C(n,k-1))$ with high probability.
\end{proof}

\begin{lemma}
For any positive integer $k\ge 3$, $k\neq 14$, there exists $s$ such that $\ceil{\frac{k-1}{2}}\le s\le k-1-\floor{\frac{k-1}{3}}$ and $s+1$ is prime. 
\end{lemma}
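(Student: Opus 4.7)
The plan is to reformulate the statement as a prime-existence question in a short interval and then split the proof into an asymptotic regime handled by Nagura's theorem and a finite range handled by direct inspection. Set $a(k) := \lceil (k-1)/2 \rceil + 1$ and $b(k) := k - \lfloor (k-1)/3 \rfloor$; then the lemma is equivalent to asserting the existence of a prime $p \in [a(k), b(k)]$, the desired value being $s := p-1$. A straightforward case split on $k \bmod 6$ yields closed-form expressions for $a(k)$ and $b(k)$ (up to an additive constant, $a(k) \approx k/2$ and $b(k) \approx 2k/3$), so $b(k)/a(k) \to 4/3$ as $k \to \infty$.

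For the asymptotic regime I would invoke Nagura's theorem: for every integer $n \geq 25$ there is a prime in the interval $(n, 6n/5)$. Since $6/5 < 4/3$, the inequality $5\, b(k) \geq 6\, a(k)$ should hold for all sufficiently large $k$; computing $5 b(k) - 6 a(k)$ in each of the six residue classes modulo $6$ confirms this, with the worst class being $k \equiv 4 \pmod 6$ (there one gets $5b-6a = (k-13)/3$) so that $k \geq 13$ already suffices. Combined with the Nagura hypothesis $a(k) \geq 25$, which requires $k \geq 49$ or so, it follows that $[a(k), b(k)]$ contains a prime for every $k \geq 49$.

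For the remaining finite range $3 \leq k \leq 48$, $k \neq 14$, I would carry out a direct finite case check, exhibiting an explicit prime $p \in [a(k), b(k)]$ for each such $k$; the primes $2, 3, 5, 7, 11, 13, 17, 19, 23, 29, 31$ suffice to cover every admissible value. The exceptional value $k = 14$ gives $[a(14), b(14)] = [8, 10]$, which contains no prime; this is precisely the excluded case in the lemma statement, corresponding to the prime gap between $7$ and $11$.

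The main obstacle is not conceptual but a matter of careful bookkeeping. One has to verify the inequality $5\, b(k) \geq 6\, a(k)$ in all six residue classes of $k \bmod 6$ and identify the correct threshold on $k$ in each; one then has to inspect the small range $k \leq 48$ and confirm that $k = 14$ is the unique exception in the window where Nagura's hypothesis $n \geq 25$ does not apply. Both steps are routine but must be spelled out so that no additional exceptional $k$ is missed.
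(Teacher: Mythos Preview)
Your approach is correct and follows the same overall shape as the paper's proof (prime-gap theorem for large $k$, finite inspection for small $k$), but the decomposition differs. You rely solely on Nagura's theorem and then check all $3\le k\le 48$ by hand. The paper instead invokes Loo's theorem (a prime in $(3n,4n)$ for every $n>1$) to dispose of all residue classes $k\bmod 6$ except $k\equiv 2$, and applies Nagura only in that remaining class; this shrinks the finite check to the single list $k\in\{8,20,26,32,38,44\}$ (with $k=14$ the sole failure). Your route is more self-contained (one prime-gap result instead of two) at the cost of a longer but still trivial case enumeration; the paper's route trades an extra citation for a shorter tail verification.
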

\begin{proof}
We are going to use two theorems about prime numbers in intervals. The first one is due to Loo \cite{loo2011primes} that says for all $n>1$, there is a prime number in $(3n,4n)$. The second theorem is due to Nagura \cite{nagura1952interval} and says that for all $x\ge 25$, there is a prime number in $[x,6x/5]$. 

First suppose that $k=6t+i$ for two nonnegative integers $t$ and $i$ where $0\le i\le 5$ and $i\neq 2$. If $i<2$, let $n=t$, and otherwise let $n=t+1$. We need a prime in the interval $I=(\ceil{\frac{k-1}{2}} , k-\floor{\frac{k-1}{3}}+1)$, and since $\ceil{\frac{k-1}{2}}\le3n$ and $4n\le k-\floor{\frac{k-1}{3}}+1$, there exists such a prime by the first theorem. Now assume that $i=2$. If $t\ge 8$, then $\ceil{\frac{k-1}{2}}+1\ge 25$, and so if $x=\ceil{\frac{k-1}{2}}+1$, then $6x/5\le k-\floor{\frac{k-1}{3}}$ and so there is a prime in the interval $I$ by the second theorem. 
Now suppose that $t\le 7$ and $i=2$. For $t=1,3,4,5,6,7$, the prime numbers in the interval $I$ associated to each $k$ are $5,11,17,17,23,23$ respectively.
\end{proof}

For $k=14$, we show that we can detect $H_7^{k}$ in $O(C(n,k-1))$ time. The approach is the same as Theorem \ref{thm:easierpatterns}: we look at the class $c$ where $U(c)$ consists of $H_7^k,\ldots,H_{11}^k$ and we consider the equation $Q=\sum_{i=0}^{4} b_i |Aut(H_{7+i}^k)|n_{H_{7+i}^k}$ which can be obtained in $O(C(n,k-1))$ time, where $b_0=1$ (by an argument similar to Lemma \ref{classificationlemma}). Now note that $|Aut(H_{7+i}^k)|$ is divisible by $2^9$ for all $0<i\le 4$, and $|Aut(H_{7}^k)|$ is not divisible by $2^9$. So $Q$ is of the form (\ref{goaleqform}) for $r=s+1$, and hence we can detect $H_7^k$ in $O(C(n,k-1))$ time, and hence we have the following corollary.

\begin{corollary}
For all $k>2$, there is some $s$ where the $k$-node pattern $H_s^k$ can be detected in $O(C(n,k-1))$ time with high probability. 
\end{corollary}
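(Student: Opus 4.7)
The plan is simply to bundle the three ingredients that have just been established in the preceding discussion. For any $k\ge 3$ with $k\ne 14$, the preceding lemma on primes in the interval $\bigl[\ceil{(k-1)/2},\,k-1-\floor{(k-1)/3}\bigr]$ produces an integer $s$ in that range with $s+1$ prime, and Theorem \ref{thm:easierpatterns} then immediately gives an algorithm that detects $H_s^k$ in $O(C(n,k-1))$ time with high probability. So the only case not already handled verbatim by Theorem \ref{thm:easierpatterns} is $k=14$.

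For $k=14$ I would take $s=7$ (which lies in the required range since $\ceil{13/2}=7$ and $13-\floor{13/3}=9$) and mimic the proof of Theorem \ref{thm:easierpatterns} verbatim, except that the modulus $s+1$ is replaced by the prime power $r=2^9$, exactly as in the paragraph above the corollary. Ordering the vertices of $H_7^{14}$ as in that theorem gives a class $c$ with $U(c)=\{H_7^{14},H_8^{14},\ldots,H_{11}^{14}\}$, and Theorem \ref{generalcase} produces in $O(C(n,13))$ time an integer
\[
Q \;=\; \sum_{i=0}^{4} b_i\,|Aut(H_{7+i}^{14})|\,n_{H_{7+i}^{14}},
\]
with $b_0=1$ (obtained, as in Lemma \ref{classificationlemma}, because $H_7^{14}$ is the densest pattern in the class).

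The arithmetic check I would then carry out is the $2$-adic valuation comparison: since $s=7\ne k-2=12$, one has $|Aut(H_7^{14})|=7!\cdot 6!$, whose $2$-part is $2^{4+4}=2^8$, so $2^9\nmid |Aut(H_7^{14})|$; while for every $i\in\{1,2,3,4\}$ the factorial product $(7+i)!\,(6-i)!$ is divisible by $2^9$ (again using $s+i\ne k-2$ except for $i=5$, which is outside our range). Hence $Q$ is of the form \eqref{goaleqform} with $r=2^9$, and the Vassilevska W. et al.\ sampling approach of Section \ref{induced-upperbound} yields whp detection of $H_7^{14}$ in $O(C(n,13))$ time. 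Combining the two cases proves the corollary. The only mildly delicate point is the $2$-adic check for $k=14$; everything else is a direct citation of the previous lemma, Theorem \ref{thm:easierpatterns}, and the $k=14$ paragraph.
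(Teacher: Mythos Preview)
Your proposal is correct and follows exactly the paper's own argument: combine the prime-interval lemma with Theorem~\ref{thm:easierpatterns} for $k\neq 14$, and for $k=14$ repeat the same class computation with the modulus $2^9$ in place of a prime. The explicit $2$-adic valuation check you carry out ($v_2(7!\cdot 6!)=8$ versus $v_2((7+i)!\,(6-i)!)\ge 9$ for $i=1,2,3,4$) is a detail the paper leaves implicit, so your write-up is in fact slightly more complete than the original.
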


\section{Detecting non-induced directed cycles}
In this Section we analyze an algorithm proposed by Yuster and Zwick~\cite{YuZw04}, obtaining the fastest algorithms for $k$-Cycle detection in sparse directed graphs, to date.

We begin by summarizing the algorithm.

\subsection{Yuster and Zwick's Algorithm} \label{alg:detectcycle}
Let $k\geq 3$ be a constant. Let $G=(V,E)$ be a given directed graph with $|V|=n,$ $|E|=m$. The algorithm will find a $k$-Cycle in $G$ if one exists. First, let us note that we can assume that $G$ is $k$-partite with partitions $V_0,\ldots,V_{k-1}$ so that the edges only go between $V_i$ and $V_{i+1\mod k}$ (for $i\in\{0,\ldots,k-1\})$. This is because we can use the Color-Coding technique~\cite{AYZ97}: if we assign each vertex $v$ a color $c(v)\in \{0,\dots,k-1\}$ independently uniformly at random and then place $v$ into $V_{c(v)}$, removing edges that are not between adjacent partitions $V_i$ and $V_{i+1\mod k}$, then any $k$-Cycle will be preserved with probability $\geq 1/k^k$. The procedure can be derandomized at the cost of a $O(\log n)$ factor in the runtime.

Now that we have a $k$-partite $m$-edge $G$, we are looking for a cycle $v_0\in V_0\rightarrow v_1\in V_1\rightarrow \ldots\rightarrow v_{k-1}\in V_{k-1}\rightarrow v_0$. Let us partition the vertices $V$ into $\log n$ degree classes: $W_j=\{v\in V~|~deg(v)\in [2^j,2^{j+1})\}$. We refer to a degree class $W_j$ by its index $j,$ for simplicity of notation.

For all $(\log n)^k$ choices of degree classes $(f_0,\ldots,f_{k-1})$ with $f_r\in \{0,\ldots,\log n\}$ for all $r$, we will be looking for a $k$-Cycle $v_0\rightarrow v_1\rightarrow\ldots \rightarrow v_{k-1}\rightarrow v_0$ such that for all $j\in \{0,\ldots,k-1\}$, $v_j\in V_j\cap W_{f_j}$ (i.e. $v_j$ has degree roughly $2^{f_j}$).

It will make sense for the degrees of the cycle vertices to be expressed in terms of the number of edges $m$. For this reason, when we are considering a $k$-tuple of degree classes  $(f_0,\ldots,f_{k-1})$, we will let $m^{d_j}=2^{f_j}$, so $d_j=f_j/\log m$, and we will be talking about degree classes $(d_0,\ldots,d_{k-1})$ instead.

Now, let us fix one of the degree classes $d=(d_0,\ldots,d_{k-1})$. There are two approaches for finding a $k$-Cycle $v_0\rightarrow v_1\rightarrow\ldots \rightarrow v_{k-1}\rightarrow v_0$ such that $v_j\in V_j$ and $v_j$ has degree roughly $m^{d_j}$:

\begin{enumerate}
\item For each $j\in\{0,\ldots,k-1\}$ we know that the number of vertices in $V_j$ of degree roughly $m^{d_j}$ is $O(m^{1-d_j})$. Thus, in $O(m^{2-d_j})$ time we can run BFS from each node in $V_j$ of such degree and determine whether there is a $k$-Cycle going through it.
\item Let $p,q\in \{0,\ldots,k-1\}$. Let's denote by $B^d_{p,q}$ the $|V_p|\times |V_q|$ Boolean matrix such that for all $v_p\in V_p, v_q\in V_q$,  $B^d_{p,q}[v_p,v_q]=1$ if and only if there is a path $v_p\rightarrow v_{p+1\mod k}\rightarrow\ldots\rightarrow v_q$ (indices mod $k$) so that each $v_r\in V_r$ and the degree of $v_r$ is roughly $m^{d_r}$.

The approach here is pick a particular pair $i,j\in \{0,\ldots,k-1\}$ and compute $B^d_{i,j}$ and $B^d_{j,i}$. Then one can find a pair of vertices $v_i\in V_i,v_j\in V_j$ such that $B^d_{i,j}[v_i,v_j]=B^d_{j,i}[v_j,v_i]=1$, if such a pair exists, at an additional cost of the number of nonzero entries in $B^d_{i,j}$ and $B^d_{j,i}$ which is dominated by the runtime of computing these matrices.
\end{enumerate}

For a fixed degree class $d=(d_0,\ldots,d_{k-1})$, let $P^d_{i,j}$ be the minimum such that $B^d_{i,j}$ can be computed in $\tilde{O}(m^{P^d_{i,j}})$ time. There are three ways to compute $B^d_{i,j}$:

\begin{itemize}
\item[(a)] Compute $B^d_{i,j-1}$ and then for every vertex $v_{j-1}\in V_{j-1}$ of degree roughly $m^{d_{j-1}}$, go through all of its outneighbors $v_j$ in $V_j$ (only of degree roughly $m^{d_j}$) and set $B^d_{i,j}[v_i,v_j]$ to $1$ for every $v_i\in V_i$ for which $B^d_{i,j-1}[v_i,v_{j-1}]=1$.
\item[(b)] Similar to above but reversing the roles of $j-1$ and $i$: Compute $B^d_{i+1,j}$ and then for every vertex $v_{i+1}\in V_{i+1}$ of degree roughly $m^{d_{i+1}}$, go through all of its inneighbors $v_{i}$ in $V_{i}$ (only of degree roughly $m^{d_i}$) and set $B^d_{i,j}[v_i,v_j]$ to $1$ for every $v_j\in V_j$ for which $B^d_{i+1,j}[v_{i+1},v_{j}]=1$.
\item[(c)] For some $r$ with $i<r<j$, compute $B^d_{i,r}$ and $B^d_{r,j}$ and compute their Boolean product to obtain $B^d_{i,j}$.
\end{itemize}

The exponent of the runtime of (a) is recursively bounded as $P^d_{i,j}\leq P^d_{i,j-1}+d_{j-1}$ as in the worst case, the number of nonzero entries in $B^d_{i,j}$ could be $\tilde{O}(m^{P^d_{i,j-1}})$. Similarly, the runtime of (b) is bounded by $P^d_{i,j}\leq P^d_{i+1,j}+d_{i+1}$. The runtime of (c) is bounded by 
$$P^d_{i,j}\leq \min_{i<r<j} \max\{P^d_{i,r},P^d_{r,j},M(1-d_i,1-d_r,1-d_j)\},$$
$M(a,b,c)$ is the smallest $g$ such that one can multiply
an $m^a\times m^b$ by an $m^b\times m^c$ matrix in $O(m^g)$ time. We will not use the known fast rectangular matrix multiplication algorithms (e.g.\cite{legallrect,legall-rect-new}) here, but for clarity instead will use the estimate $M(a,b,c)\leq a+b+c-(3-\omega)\min\{a,b,c\}$.

We get the inductive definition.

\begin{equation} \label{eq:inductive} P^d_{i,i+1}=1,~\forall j\neq i+1,~ P^d_{i,j}=\min\{P^d_{i,j-1}+d_{j-1}, P^d_{i+1,j}+d_{i+1}, \min_{i<r<j} \max\{P^d_{i,r},P^d_{r,j},M(1-d_i,1-d_r,1-d_j)\}\}.\end{equation}

For $d=(d_0,\ldots,d_{k-1})$, define
$$C_k(d_0,\ldots,d_{k-1})=\min_{0\leq i<j\leq k-1} \max\{P^d_{i,j},P^d_{j,i}\}.$$ 

The algorithm above runs in $\tilde{\Theta}(m^{c_k})$ time, where 
\[c_k = \max_{d=(d_0,\ldots,d_{k-1})} \min\left\{\min_{0\leq i\leq k-1} (2-d_i), C_k(d_0,\ldots,d_{k-1})\right\}.\]


Yuster and Zwick were only able to analyze $c_k$ for $k\leq 5$. In particular, they showed that $c_3= 2\omega/(\omega+1), c_4= (4\omega-1)(2\omega+1), c_5=3\omega/(\omega+2)$. While they were not able to analyze $c_k$ for $k>5$, using extensive numerical experiments, they came up with conjectures about the structure of $c_k$ for all odd $k$ and for $k=6$. They did not propose a conjecture for larger even $k$. 

\begin{conjecture}\label{conj:odd} \footnote{The conjecture given in \cite{YuZw04} states that $c_k = (k+1)\omega/(2\omega+k-1)$. However, we discover that $c_k \leq 2- \frac{2}{k+1} < \frac{(k+1)\omega}{2\omega+k-1}$ when $\omega > \frac{2k}{k-1}$.  }
For all odd $k\geq 3$, $c_k\leq (k+1)\omega/(2\omega+k-1);$ if $\omega \leq \frac{2k}{k-1}$, $c_k = (k+1)\omega/(2\omega+k-1).$
\end{conjecture}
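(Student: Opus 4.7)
The plan is to prove both $c_k\leq (k+1)\omega/(2\omega+k-1)$ and the matching equality when $\omega\leq 2k/(k-1)$, by combining a case analysis of the recursion in \eqref{eq:inductive} with an explicit hard instance. Write $s=(k-1)/2$ and $c^*:=(k+1)\omega/(2\omega+k-1)$, so in the cyclic distance the forward arc for the pair $(0,s)$ has length $s$ and the backward arc has length $s+1$.

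For the upper bound, I fix a degree-class vector $d=(d_0,\ldots,d_{k-1})$. If $\max_i d_i\geq 2-c^*$, then the first algorithmic option (BFS from each vertex of the critical class) already delivers $\min_i(2-d_i)\leq c^*$. Otherwise $C_k(d)\leq c^*$ is what needs proving. For this I would first establish two bounds on $P^d_{i,j}$: iterating options (a)/(b) gives
\[
P^d_{i,j}\leq 1+\sum_{r\text{ internal to the forward arc}} d_r,
\]
while recursively applying option (c) with balanced splits together with the estimate $M(a,b,c)\leq \omega\max\{a,b,c\}$ gives $P^d_{i,j}\leq \omega\bigl(1-\min\{d_i,d_j\}\bigr)$. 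Taking the antipodal pair $(0,s)$ and the minimum of the two bounds for each of $P^d_{0,s}$ and $P^d_{s,0}$, under uniform $d_i=d$ this yields $\max\{P^d_{0,s},P^d_{s,0}\}\leq \min\{1+sd,\,\omega(1-d)\}$. Maximizing $\min\{2-d,\,1+sd,\,\omega(1-d)\}$ in $d$ attains exactly $c^*$ at $d^*=(\omega-1)/(s+\omega)$, and the slack constraint $2-d^*\geq c^*$ simplifies to $\omega\leq (2s+1)/s = 2k/(k-1)$, matching the conjectured threshold.

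For general non-uniform $d$ (still in the low-degree case) I expect uniform to be the worst case, which I would justify by an exchange argument: averaging $d_i$ over an antipodal pair keeps the BFS bound $2-\max_a d_a$ non-increasing while never raising the min of the two $P^d$ bounds above, so iterating the averaging reduces to uniform $d$. For the matching lower bound when $\omega\leq 2k/(k-1)$, I would construct a $k$-partite random instance in which each consecutive bipartite layer carries $\Theta(m)$ edges and every vertex has degree $\Theta(m^{d^*})$, so the only nontrivial degree class is $d^*$. On such an input, (i) BFS costs $m^{2-d^*}\geq m^{c^*}$, (ii) any Boolean product $B^d_{i,r}\cdot B^d_{r,j}$ has all three dimensions equal to $m^{1-d^*}$ and costs $\Omega(m^{\omega(1-d^*)})$ under the standard $\omega$ assumption, and (iii) iterated one-step extension along the longer arc costs $\Omega(m^{1+sd^*})$; the choice of $d^*$ equalizes all three bounds to $m^{c^*}$, so the algorithm cannot do better.

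The main obstacle is the non-uniform-to-uniform reduction for the upper bound: the recursion for $P^d_{i,j}$ mixes options (a), (b), and nested applications of (c), so proving that the min-max over all choices is monotone under averaging of the $d_i$'s demands careful LP-style bookkeeping between the ``extension'' bound and the ``matrix-product'' bound, handling all split points $r$ simultaneously. A secondary technical point is ensuring the lower-bound instance defeats every strategy the algorithm may try (including splits along the arc that mix dimensions from different layers), which the uniformity of $d^*$ makes tractable but still requires a careful counting argument to rule out cheap outputs in intermediate Boolean products.
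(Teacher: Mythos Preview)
Your proposal correctly handles the uniform case $d_0=\cdots=d_{k-1}=\delta$ and the matching lower-bound instance, but the heart of the proof---showing that no non-uniform degree vector $d$ can push the cost above $B=c^*$---is missing, and the two ingredients you offer for it both fail.

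First, the bound $P^d_{i,j}\leq \omega(1-\min\{d_i,d_j\})$ obtained ``by recursively applying option (c)'' is false in general. Splitting at an internal index $r$ incurs $M(1-d_i,1-d_r,1-d_j)$, which under the estimate $a+b+c-(3-\omega)\min\{a,b,c\}$ equals $(\omega-1)(1-\max\{d_i,d_j\})+1$ when $d_r$ is tiny; this can exceed $\omega(1-\min\{d_i,d_j\})$ by an arbitrary amount. The paper's Lemma~\ref{lemma:consecutivelow} gives the correct replacement: one gets $P_{i,j}\leq\max\{1+t\delta,\omega(1-\delta)\}$ only under the structural hypothesis that there are no $t{+}1$ consecutive $\delta$-low indices between $i$ and $j$, because only then can one splice together short extension chains with matrix products whose middle dimension is controlled.

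Second, the averaging/exchange argument is where the actual difficulty lives, and it does not go through as stated. The quantity $C_k(d)$ is a $\min_{i,j}\max\{P^d_{i,j},P^d_{j,i}\}$ where each $P^d$ is itself a minimum over recursive strategies whose leaves are convex functions of $d$ (the $M(\cdot,\cdot,\cdot)$ terms) and linear functions of $d$ (the extension sums); this composite is neither concave nor Schur-concave, so ``averaging an antipodal pair'' need not move the objective in the direction you want. (Your sentence also has the monotonicity of $2-\max_a d_a$ backwards.) The paper does \emph{not} reduce to the uniform case: it directly attacks the hard regime where there are $t{+}1$ or more consecutive $\delta$-low indices, normalizes so that indices $s{+}1,\ldots,k{-}1$ are $\delta$-low with $0$ and $s$ $\delta$-high, and then runs a delicate double induction (Theorem~\ref{thm:main} and Lemma~\ref{lemma:weakenedpq}). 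The mechanism is to force, via Lemma~\ref{lemma:lowsum}, many pairwise inequalities of the form $d_r+d_{r\pm t}\leq \omega-B-(\omega-2)\delta\leq 2\delta$, then sum them around the cycle to obtain $f(R{+}1,R{-}1)\leq 2t\delta$ for some index $R$, which by Lemma~\ref{lemma:circlesum} yields $C_k\leq B$. The combinatorics of which inequalities are available is tracked by the auxiliary sequences $(a_n),(b_n)$ and the ``$q$-low/$q$-high arc'' bookkeeping (Definition~\ref{def:lowarc}), and this is where all the work is. Your LP-style averaging would have to reproduce exactly this structure, and there is no shortcut visible.
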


\begin{conjecture}\label{conj:6}\footnote{The conjecture given in \cite{YuZw04} had a slight typo in the first case - the denominator stated there was $(4\omega+4)$ instead of $(4\omega+3)$. However, looking at the numerical experiments given to us by Uri Zwick we saw that it should be corrected, and indeed we prove that the corrected version is correct.}
\begin{equation} \label{eq:6cycruntime}
c_6 = \begin{cases} 
\frac{10 \omega -3}{4 \omega +3}, & \text{if}\ 2 \leq \omega \leq \frac{13}{6}\\
\frac{22-4\omega}{17 - 4\omega}, & \text{if}\ \frac{13}{6} \leq \omega \leq \frac{9}{4}\\
\frac{11 \omega -2}{4\omega + 5}, & \textit{if}\ \frac{9}{4} \leq \omega \leq \frac{16}{7} \\
\frac{10 - \omega}{7 - \omega}, & \textit{if}\ \frac{16}{7} \leq \omega \leq \frac{5}{2}
\end{cases}
\end{equation}
\end{conjecture}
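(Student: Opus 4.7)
The plan is to establish the formula for $c_6$ by a matching upper and lower bound analysis of the quantity $\max_d \min\{\min_i(2-d_i),\, C_6(d)\}$ defined by the recurrence (\ref{eq:inductive}). Both sides reduce to linear programming over the six variables $d_0,\ldots,d_5$, parameterized by $\omega$. The four cases in the statement correspond to four regions of $\omega$ where different constraint sets become active at the optimum.

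For the upper bound, I would first use the cyclic and reflectional symmetry of the 6-cycle to reduce the analysis to canonical break pairs $(i,j)$: the ``opposite'' pair $(0,3)$ and the ``near-opposite'' pair $(0,2)$ (equivalently $(0,4)$). For each fixed profile $d$, I would evaluate $P^d_{0,3}$ and $P^d_{3,0}$ using all relevant evaluations from the recurrence: extending a length-two segment $B^d_{0,2}$ or $B^d_{1,3}$ by one hop via (a) or (b), multiplying $B^d_{0,1} \cdot B^d_{1,3}$ or $B^d_{0,2}\cdot B^d_{2,3}$ via (c), and analogously for the reverse direction around the cycle. Each of these alternatives yields a linear inequality for $\max\{P^d_{0,3},P^d_{3,0}\}$ in terms of $d_0,\ldots,d_5,\omega$. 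Adding the BFS inequality $\min_i(2-d_i) \leq$ (target) and maximizing over $d$ produces an LP whose optimum is a piecewise-linear function of $\omega$, matching the stated four-piece formula; the breakpoints $\omega \in \{13/6,\,9/4,\,16/7\}$ are precisely where the identity of the tight constraints changes.

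For the lower bound I would exhibit a concrete extremal profile $d^*$ in each regime. A natural ansatz in the first regime $\omega \in [2,13/6]$ is $d_0^*=d_3^*$ small and $d_1^*=d_2^*=d_4^*=d_5^*$ equal, chosen so that the BFS bound $2-d_1^*$ and the matrix-product bound from splitting $B^d_{0,3}$ at an intermediate index are simultaneously equal to $(10\omega-3)/(4\omega+3)$; substituting into the recurrence and checking every alternative evaluation confirms that no cheaper schedule exists. Analogous three-parameter ansätze (with different patterns of ``heavy'' and ``light'' degrees, and with the tight multiplication possibly occurring for a different pair) yield extremal profiles matching the other three expressions. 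Continuity at the transition points $\omega = 13/6,\,9/4,\,16/7$ then follows because two extremal profiles coincide there.

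The main obstacle will be the bookkeeping in the upper bound: even after exploiting symmetry, a length-$4$ segment of the recurrence can be realized in many ways (two extension directions and three split points), so $P^d_{i,j}$ unfolds into a tree of alternatives whose leaves involve nonlinear $\max$ expressions in $d$. One must argue that for every degree profile some leaf of this tree is bounded by the claimed $c_6$, and moreover that the identity of the winning leaf switches at exactly the claimed $\omega$-thresholds. I expect the cleanest way to manage this is to reformulate the recurrence as a fractional-integer LP in each regime and verify by LP duality (or direct substitution of the dual solution) that the constructed extremal $d^*$ achieves the matching lower bound, thus avoiding a brute enumeration of subcases.
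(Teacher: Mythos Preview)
Your high-level plan is reasonable, but there are two concrete gaps that would prevent it from going through as stated.

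First, your lower-bound ansatz is wrong. For the first regime $2\le\omega\le 13/6$ you propose a profile with the dihedral symmetry $d_0=d_3$ and $d_1=d_2=d_4=d_5$; in fact the paper's extremal profile is $(\tfrac{4\delta}{3},\delta,\delta,\tfrac{2\delta}{3},\tfrac{2\delta}{3},\tfrac{2\delta}{3})$ for $\delta=(B-1)/2$, which has no such symmetry (one index is strictly largest, three are strictly smallest). The other three regimes also use asymmetric profiles, e.g.\ $(2-B,\tfrac{7B-10}{4},\tfrac{6-3B}{4},\tfrac{2-B}{2},\tfrac{2-B}{2},2B-3)$ for $\tfrac{13}{6}\le\omega\le\tfrac{9}{4}$. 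So guessing the right ansatz is not as routine as you suggest, and the symmetry heuristic is misleading here.

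Second, and more seriously, your upper-bound outline is missing the organizing idea that the paper actually uses and that makes the case analysis finite. The paper fixes the target $B$, sets a threshold $\delta=(B-1)/2$, and partitions indices into $\delta$-high ($d_i\ge\delta$) and $\delta$-low. A short lemma shows that if no three consecutive indices are low then $C_6\le\max\{1+2\delta,\omega(1-\delta)\}=B$ already, using only length-$\le 2$ extensions and one multiplication. This immediately reduces to the situation ``WLOG $d_3,d_4,d_5<\delta$, $d_0\ge\delta$'', and then one further splits on whether $d_2\ge\delta$ or ($d_2<\delta$ and $d_1\ge\delta$) or both $d_1,d_2<\delta$. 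In each subcase one can write down a small number of matrix-multiplication lower bounds of the form $M(1-d_i,1-d_j,1-d_r)>B$ (obtained by showing the three relevant $P$'s are already $\le B$, so if $C_6>B$ the multiplication must be the bottleneck), and a few linear combinations of these, together with $d_r<2-B$, yield the claimed strict upper bounds on $B$. Without the high/low threshold device you are facing the full min--max tree you describe, and ``LP duality in each regime'' does not by itself tell you \emph{which} inequalities to combine; the paper's combinations are short but far from obvious, and the $\omega$-dependent branching in case (b) ($\omega\le 9/4$ vs.\ $\omega>9/4$, and separately $\omega\le 5/2$) reflects genuinely different dual certificates, not just different primal optima.
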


We prove these conjectures, and in addition prove upper bounds on $c_k$ that are tight when $\omega=2$.

\subsection{The runtime of Yuster-Zwick's algorithm for finding $k$-Cycles}
Here we prove Conjectures~\ref{conj:6} and~\ref{conj:odd}, and in addition we give bounds for all even $k$ that are tight when $\omega=2$. This proves  Theorem~\ref{thm:cycles} from the introduction.
Let $C_k$ denote the $k$-Cycle.





To highlight the result for even cycles for which there wasn't even a conjectured runtime, we split it into its separate theorem:
\begin{theorem} \label{thm:evencycle}
For all even $k\geq 4$, $c_k \leq \frac{k \omega - \frac{4}{k}}{2\omega + k-2 - \frac{4}{k}}$. This bound is tight for $\omega = 2$.
\end{theorem}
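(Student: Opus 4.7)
The plan is to establish the upper bound $c_k \le c$, where $c$ denotes the target $(k\omega - 4/k)/(2\omega + k - 2 - 4/k)$, by analyzing the algorithmic recursion (\ref{eq:inductive}), and then to exhibit an explicit degree vector witnessing tightness at $\omega = 2$.

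For the upper bound, fix any $d = (d_0, \ldots, d_{k-1})$. If $\min_i (2 - d_i) \le c$, the inner $\min$ in the formula for $c_k$ already beats $c$, so we may assume $d_i < 2 - c$ for every $i$, which gives a total degree budget $\sum_i d_i < k(2-c)$. The next step is to bound $C_k(d)$. I would consider the $k/2$ antipodal pairs $(i,\, i + k/2)$, and for each such pair apply option (c) of the recursion by splitting the $k/2$-length arc at a well-chosen intermediate vertex $r$ where $d_r$ is small. The matrix-multiplication term $M(1-d_i, 1-d_r, 1-d_j)$ is then small, and the sub-paths are bounded by iterated application of options (a) and (b), giving $P^d_{i, r} \le 1 + \sum_{\ell=i+1}^{r-1} d_\ell$ and analogously for $P^d_{r,j}$. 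Combining these bounds with the degree budget (via an averaging/pigeonhole argument over antipodal pairs) shows that for at least one antipodal pair $\max\{P^d_{i,j}, P^d_{j,i}\} \le c$, establishing $C_k(d) \le c$.

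For tightness at $\omega = 2$, I would exhibit an explicit degree sequence $d^*$ with $C_k(d^*) = c_k$ and $\min_i (2 - d^*_i) \ge c_k$. The $k=4$ case suggests the shape: one can verify that $d^* = (1/5,\, 2/5,\, 2/5,\, 1/5)$ yields $C_4(d^*) = 7/5$, with two consecutive ``heavy'' positions of degree $2/(2\omega+1)$ and two consecutive ``light'' positions of degree $1/(2\omega+1)$. For general even $k \ge 4$, I would propose a similar two-level cyclic pattern, with heavy and light values derived from the identity $c(2\omega + k - 2 - 4/k) = k\omega - 4/k$ so that (i) every antipodal arc has option (c) cost exactly $c_k$ via the matrix-multiplication bound $M = a+b+c - \min(a,b,c) = $ sum of the two largest, and (ii) no shortcut through option (a)/(b) on any sub-path beats $c_k$. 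Verification at $\omega = 2$ is direct: each $P^{d^*}_{i,j}$ reduces to the minimum of a small family of linear expressions in the $d^*_i$, enabling finite case analysis over the six (more generally $\binom{k}{2}$) pairs.

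The main obstacle is controlling the three branches of the recursion in (\ref{eq:inductive}) together with the free split $r$ in option (c). For the upper bound, identifying the right split as a function of $d$ requires case analysis on the relative sizes of the $d_i$, and the averaging step must reconcile antipodal arcs whose two sides are unbalanced. For tightness, the critical difficulty is ruling out non-antipodal pairs or nested splits that could yield a $P^{d^*}_{i,j}$ smaller than $c_k$: 3-paths and longer sub-cycle arcs admit several recursive re-splits, and one must verify inequality chains on all of them simultaneously, which is exactly the step that breaks down outside $\omega = 2$ and so explains why tightness is claimed only there.
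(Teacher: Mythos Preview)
Your plan has a genuine gap at exactly the point where all the work lies. The sentence ``Combining these bounds with the degree budget (via an averaging/pigeonhole argument over antipodal pairs) shows that for at least one antipodal pair $\max\{P^d_{i,j}, P^d_{j,i}\} \le c$'' is the entire proof, and it is only asserted, not argued. The quantity you need to bound is a \emph{maximum} of three terms (two path sums and one matrix-multiplication cost), and averaging sums of maxima over antipodal pairs does not give you control over the minimum of those maxima. Concretely, already for $k=4$ with $d_1,d_3$ both near $2-c=3/5$ the purely additive bound on the antipodal pair $(0,2)$ gives $1+\max\{d_1,d_3\}>c$, so you are forced into the matrix-multiplication branch, which in turn requires $d_0,d_2$ to be large --- but then a different pair fails. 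The adversary can arrange that every antipodal split has either a large path sum or a large $M$-term, and nothing in your outline addresses this tension. Restricting attention to antipodal pairs also throws away exactly the non-antipodal cuts the paper needs.

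The paper's argument is structurally very different and far more delicate. It fixes a threshold $\delta$ (with companion $\beta=\delta t/h$) so that $B=1+t\delta=1+h\beta$ equals the target, partitions indices into $\delta$-low and $\delta$-high, and disposes of the case with no long run of $\delta$-low indices via Lemma~\ref{lemma:consecutivelow}. The hard case --- a block of at least $t+1$ consecutive $\delta$-low indices --- is handled by an intricate induction (Theorem~\ref{thm:main} and Lemmas~\ref{lemma:evenlow}, \ref{lemma:evenmain}) that repeatedly derives pairwise inequalities of the form $d_i+d_r\le\delta+\beta$ from failed matrix multiplications and sums them around the cycle to force $f(R+1,R-1)\le 2t\delta$, whence $C_k\le B$ by Lemma~\ref{lemma:circlesum}. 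For tightness at $\omega=2$ the paper exhibits the explicit sequence $d_0=2\beta$, $d_1=\cdots=d_t=\delta$, $d_{t+1}=\cdots=d_{k-1}=\beta$, which is not the symmetric two-level pattern you suggest (your $k=4$ example happens to coincide up to rotation, but the general shape has a single distinguished heavy coordinate $d_0=2\beta$).
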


\subsection{Setup: Basic Lemmas}
For simplicity's sake, we write $P_{i,j}$ for $P^{(d_0,..,d_{k-1})}_{i,j}$ when $(d_0,..,d_{k-1})$ is fixed. Here and in what follows, all indices are considered modulo $k$. Visualizing these indices as $k$ points arranging counterclockwise on a circle would make the following definitions and inequalities more intuitive.

\begin{definition}
For any index $r$, and $\delta \geq 0$, $r$ is $\delta$-low if $d_r < \delta$, and $\delta$-high otherwise.
\end{definition}

\begin{definition}
For any two indices $i,j$, let $\ell(i,j) = (j-i +1 ) \pmod k$ and $f(i,j) = \sum_{r=i}^{i + \ell(i,j) -1} d_r$. Note that $\ell(i,j) \geq 0$. When $\ell(i,j) = 0$ (i.e. $i = j+1$), $f(i,j) = \sum_{r = j+1}^j d_r = 0 $.

Repeatedly applying inequality $P^d_{w,y} \leq P^d_{w,y-1} + d_{y-1}$, which is derived from Equation (\ref{eq:inductive}), gives $P^d_{w,y} \leq 1+f(w+1,y-1)$ 
\end{definition}

\begin{lemma} \label{lemma:lowsum}
Suppose that $d_r \leq \delta \leq d_i, d_j$. Let $j_0 \in \{i,j\} $ such that $d_{j_0} = \min \{d_i, d_j\}$. 
\begin{enumerate}
    \item If $P^d_{j,i}, P^d_{i,r}, P^d_{r,j} \leq B$ and $C_k(d_0,\cdots,d_{k-1}) > B$ then $M(1-d_i,1-d_j, 1-d_r) > B$
    \item If $M(1-d_i, 1-d_j,1-d_r) \geq B \geq \omega(1-\delta)$ then $d_r + d_{j_0} \leq \omega - B - (\omega-2) \delta \leq 2 \delta$, and $d_r \leq \omega - B - (\omega-2) \delta -\delta$.
\end{enumerate}
\end{lemma}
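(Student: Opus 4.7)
The plan is to treat the two parts separately. Part 1 is a direct unrolling of the recursion (\ref{eq:inductive}) for $P^d_{i,j}$, and part 2 is a plug-in to the rectangular matrix-multiplication bound $M(a,b,c)\leq a+b+c-(3-\omega)\min\{a,b,c\}$.

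For part 1, I would first observe that $C_k(d_0,\ldots,d_{k-1})>B$ applied to the specific pair $(i,j)$ forces $\max\{P^d_{i,j},P^d_{j,i}\}>B$, and since $P^d_{j,i}\leq B$ by hypothesis, we must have $P^d_{i,j}>B$. The hypotheses $P^d_{i,r}\leq B$ and $P^d_{r,j}\leq B$ are only meaningful if $r$ lies cyclically between $i$ and $j$, which I would note explicitly; with that in hand, the third branch of (\ref{eq:inductive}) at this particular $r$ gives
\[
P^d_{i,j}\leq \max\{P^d_{i,r},P^d_{r,j},M(1-d_i,1-d_r,1-d_j)\}.
\]
Combined with $P^d_{i,j}>B$ and $P^d_{i,r},P^d_{r,j}\leq B$, this forces $M(1-d_i,1-d_r,1-d_j)>B$, which is the desired conclusion up to the symmetry of $M$ in its arguments.

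For part 2, I would apply the estimate $M(a,b,c)\leq a+b+c-(3-\omega)\min\{a,b,c\}$ with $(a,b,c)=(1-d_i,1-d_j,1-d_r)$. Because $d_r\leq\delta\leq d_i,d_j$, we have $1-d_r\geq 1-d_i,1-d_j$, so $\min\{1-d_i,1-d_j,1-d_r\}=1-\max\{d_i,d_j\}$. Writing $d_i+d_j=d_{j_0}+\max\{d_i,d_j\}$ and simplifying should yield
\[
M(1-d_i,1-d_j,1-d_r)\leq \omega-d_r-d_{j_0}+(2-\omega)\max\{d_i,d_j\}.
\]
Since $\omega\geq 2$ and $\max\{d_i,d_j\}\geq\delta$, the last term is at most $-(\omega-2)\delta$, and combining with $M\geq B$ gives $d_r+d_{j_0}\leq \omega-B-(\omega-2)\delta$. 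The secondary inequality $\omega-B-(\omega-2)\delta\leq 2\delta$ rearranges exactly to the hypothesis $B\geq\omega(1-\delta)$. Finally, subtracting $d_{j_0}\geq\delta$ yields $d_r\leq\omega-B-(\omega-2)\delta-\delta$.

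I do not expect a real obstacle. The only subtlety worth flagging is the cyclic ordering of $i,r,j$ needed to invoke the min-over-$r$ branch of the recursion in part 1; the hypotheses of the lemma tacitly enforce this ordering, but the proof should state it clearly before applying (\ref{eq:inductive}). Everything else is algebraic manipulation of the $M(a,b,c)$ upper bound.
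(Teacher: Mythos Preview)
Your proposal is correct and follows essentially the same route as the paper: part 1 is the contrapositive of the matrix-multiplication branch of the recursion (\ref{eq:inductive}) combined with $C_k\leq\max\{P^d_{i,j},P^d_{j,i}\}$, and part 2 is the same algebraic unwinding of the estimate $M(a,b,c)\leq a+b+c-(3-\omega)\min\{a,b,c\}$. Your explicit remark about the cyclic ordering of $i,r,j$ is a useful clarification that the paper leaves implicit.
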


\begin{proof}
\begin{enumerate}
    \item Suppose that $M(1-d_i,1-d_j,1-d_r) \leq B$. Using the matrix multiplication rule gives $P^d_{i,j} \leq \max\{P^d_{i,r},P^d_{r,j}, M(1-d_i,1-d_j,1-d_r)\} \leq B$.
    But then $C_k(d_0,\cdots, d_{k-1}) \leq \max \{ P^d_{i,j}, P^d_{j,i}\} \leq B$, a contradiction.
    \item Suppose that $d_r + d_{j_0} > \omega - B - (\omega-2) \delta $, then:
    $M(1-d_i,1-d_j,1-d_r) = 3 - d_i-d_j-d_r - (3-\omega)(1-\max\{d_i,d_j,d_r\})=2-(d_r+d_{j_0})+(\omega-2)(1-\max\{d_i,d_j\}) < 2 - (\omega -B - (\omega-2) \delta) + (\omega-2)(1-\delta) = B$ (contradiction).
    
    That $B \geq \omega(1-\delta) $ implies $d_r + d_{j_0} \leq \omega -B -(\omega-2) \delta \leq 2\delta$. This together with $d_{j_0} \geq \delta$ imply $d_r \leq \omega -B - (\omega-2) \delta -\delta$. 
    
\end{enumerate}
\end{proof}


\begin{lemma} \label{lemma:consecutivelow}
For any two indices $i, j$, and integer $t \leq k-2$. If $d_i, d_j \geq \delta $ and there are no $t+1$ consecutive $\delta$-low indices $r$ such that $i < r < j$ then $P_{i,j} \leq \max \{ 1 +t \delta , \omega (1-\delta) \} $. 

As a consequence, if there are no $t +1$ consecutive $\delta$-low indices then $C_k(d_0,...,d_{k-1}) \leq \max \{ 1 + t\delta, \omega (1-\delta) \} $.
\end{lemma}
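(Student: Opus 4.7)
The plan is to induct on $\ell(i,j)$. For the base case $\ell(i,j)=2$ (i.e.\ $j=i+1$), the recurrence~(\ref{eq:inductive}) gives $P_{i,j}=1$, which trivially satisfies the bound. For the inductive step, I would split on whether some intermediate index $r$ with $i<r<j$ (cyclically) is $\delta$-high.

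If a $\delta$-high pivot $r$ exists, I would invoke the matrix-multiplication branch of~(\ref{eq:inductive}): $P_{i,j}\leq\max\{P_{i,r},P_{r,j},M(1-d_i,1-d_r,1-d_j)\}$. Because $d_i,d_r,d_j\geq\delta$, all three dimension parameters are at most $1-\delta$, so padding the smaller dimensions up to the largest gives $M(1-d_i,1-d_r,1-d_j)\leq M(1-\delta,1-\delta,1-\delta)\leq\omega(1-\delta)$. Both sub-intervals $(i,r)$ and $(r,j)$ inherit the ``no $t+1$ consecutive $\delta$-low'' hypothesis, have $\delta$-high endpoints, and have $\ell$-length strictly less than $\ell(i,j)$, so the inductive hypothesis yields $P_{i,r},P_{r,j}\leq\max\{1+t\delta,\omega(1-\delta)\}$, and the bound propagates. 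If no intermediate index is high, the $\ell(i,j)-2$ intermediates form a single $\delta$-low run strictly between $i$ and $j$, forcing $\ell(i,j)-2\leq t$ by hypothesis; the extension rule $P_{i,j}\leq 1+f(i+1,j-1)$ then yields $P_{i,j}<1+t\delta$.

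For the consequence, I would first observe that the cycle contains at least two $\delta$-high indices: otherwise the $\delta$-low indices form a cyclic run of length at least $k-1$, and $t\leq k-2$ gives $k-1\geq t+1$, contradicting the hypothesis. Fixing any two $\delta$-high indices $i,j$, each of the forward intervals from $i$ to $j$ and from $j$ to $i$ inherits the no-$(t+1)$-consecutive-low property from the global condition, so the main claim bounds both $P_{i,j}$ and $P_{j,i}$ by $\max\{1+t\delta,\omega(1-\delta)\}$, and hence so is $C_k(d_0,\ldots,d_{k-1})\leq\max\{P_{i,j},P_{j,i}\}$.

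The only subtle point is verifying the pivot bound $M(1-d_i,1-d_r,1-d_j)\leq\omega(1-\delta)$, which reduces to the padding observation $M(a,b,c)\leq\omega\max\{a,b,c\}$ together with $\max\{1-d_i,1-d_r,1-d_j\}=1-\min\{d_i,d_r,d_j\}\leq 1-\delta$; everything else is bookkeeping to confirm that the inductive hypotheses restrict cleanly to the two sub-intervals carved out by a high pivot.
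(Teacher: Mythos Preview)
Your proof is correct and follows essentially the same approach as the paper: both use the extension rule $P_{w,y}\leq 1+f(w+1,y-1)$ to bound $P$ across a maximal run of $\delta$-low indices (at most $t$ of them, giving the $1+t\delta$ term), and the matrix-multiplication rule at $\delta$-high pivots (giving the $\omega(1-\delta)$ term) to stitch the pieces together. The paper simply lists all $\delta$-high indices $i=i_0,i_1,\ldots,i_z=j$ at once and then appeals to the matrix-multiplication rule implicitly inductively, whereas you structure this as an explicit induction on $\ell(i,j)$ with a case split on whether a high intermediate exists; these are the same argument in different dress. Your derivation of the consequence (arguing that at least two $\delta$-high indices must exist, then bounding both $P_{i,j}$ and $P_{j,i}$) is also correct and is left implicit in the paper.
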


\begin{proof}
Let $i=i_0,i_1,\ldots,i_z=j$ be the indices within $\{i,i+1,\ldots,j\}$ (indices mod $k$) such that $d_{i_b}\geq \delta$ for each $b\in \{0,\ldots,z\}$. Since there are no consecutive $t+1$ $\delta$-low indices, for each $b$, $P_{i_b,i_{b+1}}\leq 1+t\delta$, using the rule $P_{w,y}\leq P_{w,y-1}+d_{y-1}$. Now we can use the matrix multiplication rule to get $P_{i,j}\leq \max\{1+t\delta, M(1-\delta,1-\delta,1-\delta)\}$.
\end{proof}

\subsection{Finding odd cycles} \label{subsection:oddcycle}
Here we prove Yuster and Zwick's conjecture that the exponent $c_k$ of the runtime when $k$ is odd and $\omega \leq \frac{2k}{k-1}$ is $\omega(k+1)/(2\omega+k-1)$. When $k$ is odd and $\omega > \frac{2k}{k-1}$, using only rule 1 and 2(a), (b) in Algorithm \ref{alg:detectcycle}, one can prove $c_k \leq 2 -\frac{2}{k+1} < (k+1)\omega/(2\omega+k-1) $ (see Theorem 3.4 of \cite{AlYuZw97}).

Let $t: = \lfloor \frac{k-1}{2} \rfloor, h := k-t-1$. Note that $t \leq h \leq t+1$ and $2h \leq k$. 

Let $\delta \geq 0$ be a parameter to be specified later. Let $B: = 1 + t \delta$. Assume that $B \geq \omega (1-\delta)$. Pick arbitrary $0 \leq d_0,\cdots,d_{k-1} \leq 1$. Below, we write $C_k$ in place of $C_k(d_0,\cdots,d_{k-1})$ for simplicity. We need to prove that $C_k\leq B$.

By Lemma \ref{lemma:consecutivelow}, if there are no $t+1$ consecutive $\delta$-low indices then $C_k \leq \max\{1 + t\delta, \omega(1-\delta)\} = B$. Now, consider the case when there are at least $t+1$ consecutive $\delta$-low indices. WLOG, we can assume that there exists $s\in[0, h-1]$ such that indices $0$ and $s$ are $\delta$-high and indices $r$ are $\delta$-low for all $s+1 \leq r \leq k-1$. That there are at most $s-1 \leq h-2<t$ indices $r$ such that $0 < r < s$ and Lemma \ref{lemma:consecutivelow} implies $P_{0,s} \leq B.$ Our proof for upper bounds on $C_k$ will proceed as follow: suppose $C_k > B$, we use Lemma \ref{lemma:mainlowsum} to derive multiple inequalities of form $d_i + d_{r} \leq \omega -B - (\omega-2)\delta \leq 2\delta$ where $ r\in \{i+t,i-t\}$, then sum these inequalities together to get $f(R+1,R-1) \leq 2t\delta$, which implies $C_k \leq B$ by Lemma \ref{lemma:circlesum}.

\begin{lemma}\label{lemma:mainlowsum}
Consider indices $i,j, r$ where $0\leq i \leq j \leq s< r \leq k-1$ and $f(r+1,i-1), f(j+1,r-1) \leq t\delta$. If $C_k > B$ then $d_r \leq \omega -B - (\omega -2) \delta - \delta$, and $d_r + \min \{d_i, d_j \}\leq \omega - B - (\omega-2) \delta \leq 2\delta$.
\end{lemma}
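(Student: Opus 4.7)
The plan is to apply Lemma \ref{lemma:lowsum} with $r$ playing the role of the ``middle'' index that lies on the clockwise arc from $j$ back to $i$ (the arc that passes through $s{+}1, s{+}2, \ldots, k{-}1, 0, 1, \ldots, i$). Since $r > s$ and the WLOG configuration declares every index strictly greater than $s$ to be $\delta$-low, $d_r \le \delta$ holds for free; the lemma is to be invoked at $\delta$-high endpoints $i, j$, so $d_i, d_j \ge \delta$, placing us in the $d_r \le \delta \le d_i, d_j$ regime required by Lemma \ref{lemma:lowsum}.

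The first task is to verify the three $P^d$-bounds that Lemma \ref{lemma:lowsum}(1) requires. Using the standing inequality $P^d_{w,y} \le 1 + f(w{+}1, y{-}1)$: the hypothesis $f(j{+}1, r{-}1) \le t\delta$ gives $P^d_{j,r} \le 1 + t\delta = B$, and $f(r{+}1, i{-}1) \le t\delta$ gives $P^d_{r,i} \le B$. The remaining matrix $P^d_{i,j}$ runs along the short arc inside $\{0, \ldots, s\}$. For odd $k$ one has $h = t$, so $s \le h-1 = t-1$, and hence there are at most $s - 1 \le t-2$ indices strictly between $i$ and $j$; in particular there cannot be $t+1$ consecutive $\delta$-low indices on this short arc. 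Lemma \ref{lemma:consecutivelow} therefore yields $P^d_{i,j} \le \max\{1 + t\delta, \omega(1-\delta)\} = B$, using the standing assumption $B \ge \omega(1-\delta)$.

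With $P^d_{i,j}, P^d_{j,r}, P^d_{r,i} \le B$ established and $C_k > B$ assumed, Lemma \ref{lemma:lowsum}(1), applied with the roles of $i$ and $j$ swapped relative to the lemma statement (so that $r$ is genuinely the middle of the $j \to i$ clockwise arc), forces $M(1-d_i, 1-d_j, 1-d_r) > B$. Part (2) of the same lemma then, together with $d_r \le \delta \le d_i, d_j$ and $B \ge \omega(1-\delta)$, yields $d_r + \min\{d_i, d_j\} \le \omega - B - (\omega-2)\delta \le 2\delta$; subtracting $\min\{d_i, d_j\} \ge \delta$ delivers the remaining bound $d_r \le \omega - B - (\omega-2)\delta - \delta$.

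The main obstacle is circular bookkeeping: one must correctly recognize that $r$ lies on the $j \to i$ clockwise arc rather than the $i \to j$ arc, so that the matrix-multiplication decomposition embedded in Lemma \ref{lemma:lowsum}(1) is applied to $P^d_{j,i}$. Correspondingly, the three $P^d$-values that must be bounded by $B$ are $P^d_{i,j}$ (the complementary short arc), together with the two halves $P^d_{j,r}$ and $P^d_{r,i}$ of the long arc. Once this orientation is fixed, the argument is a direct composition of the path-length inequality $P^d_{w,y} \le 1 + f(w{+}1, y{-}1)$, Lemma \ref{lemma:consecutivelow}, and Lemma \ref{lemma:lowsum}.
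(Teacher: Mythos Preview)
Your argument has a genuine gap: you assume that $d_i, d_j \ge \delta$, but this is not a hypothesis of the lemma. The standing WLOG configuration only guarantees that indices $0$ and $s$ are $\delta$-high; indices strictly between them may well be $\delta$-low, and indeed the lemma is later invoked (in Theorem~\ref{thm:main}) with $i=a_{n-1}$, $j=b_{n-1}$, where nothing rules out $d_{a_{n-1}}$ or $d_{b_{n-1}}$ being below $\delta$. This breaks both conclusions as you derive them: your bound $d_r \le \omega - B - (\omega-2)\delta - \delta$ comes from subtracting $\min\{d_i,d_j\}\ge\delta$, which is unavailable; and both Lemma~\ref{lemma:lowsum} (which needs $d_r \le \delta \le d_i,d_j$) and Lemma~\ref{lemma:consecutivelow} (which needs $d_i,d_j\ge\delta$) require exactly the hypothesis you are missing.

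The paper repairs this by a two-step argument. First it applies Lemma~\ref{lemma:lowsum} at the triple $(0,s,r)$ rather than $(i,j,r)$: the endpoints $0,s$ \emph{are} guaranteed $\delta$-high, and since $0\le i$ and $j\le s$ the given bounds $f(r{+}1,i{-}1),\,f(j{+}1,r{-}1)\le t\delta$ dominate $f(r{+}1,k{-}1),\,f(s{+}1,r{-}1)$, yielding $P_{r,0},P_{s,r}\le B$ while $P_{0,s}\le B$ comes from Lemma~\ref{lemma:consecutivelow}. This gives $d_r\le\omega-B-(\omega-2)\delta-\delta$ unconditionally. For the second conclusion one then splits: if $\min\{d_i,d_j\}\le\delta$, simply add it to the bound just obtained for $d_r$; if $\min\{d_i,d_j\}>\delta$, your argument with the triple $(i,j,r)$ goes through verbatim.

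A minor secondary point: your justification that the short arc has fewer than $t$ interior indices is phrased only for odd $k$, but the lemma is also used in the even-cycle section. The parity-free count is $s-1 \le h-2 < t$, valid because $h\le t+1$.
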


\begin{proof}
$P_{0,s} \leq B$ by Lemma \ref{lemma:consecutivelow}. That $\max \{ f(s+1,r-1),f(r+1,k-1) \}  \leq \max\{ f(j+1,r-1), f(r+1,i-1)\} \leq t\delta$ implies $\max\{P_{s,r}, P_{r,0}\} \leq B$. Also, $d_r \leq \delta \leq d_0,d_s$, so $d_r \leq \omega -B - (\omega -2) \delta - \delta \leq 2\delta$ by Lemma \ref{lemma:lowsum}.

WLOG, assume $d_i \leq d_j$. If $d_i \leq \delta$ then $d_i + d_r\leq \omega -B -  (\omega -2) \delta - \delta + \delta = \omega - B - (\omega-2) \delta $. Else, $\delta \leq d_i \leq d_j$. Since $0\leq i \leq j \leq s$, by Lemma \ref{lemma:consecutivelow}, $P_{i,j} \leq B$. That $f(r+1,i-1), f(j+1,r-1) \leq t\delta$ implies $P_{r,i}, P_{j,r} \leq B$. Also, $d_r \leq \delta \leq d_i\leq d_j$, so $d_i + d_r \leq \omega -B -(\omega-2) \delta$ by Lemma \ref{lemma:lowsum}. 
\end{proof}

\begin{lemma} \label{lemma:circlesum}
If there exists index $R, 0\leq R \leq k-1$ such that $f(R+1,R-1) \leq 2t\delta$ then $C_k \leq B$
\end{lemma}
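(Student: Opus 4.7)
The plan is to choose $i = R$ and locate an index $j = R + \ell$ (mod $k$) with $\ell \in \{1, \dots, k-1\}$ that splits the cyclic index set around $R$ into two arcs, each of weighted length at most $t\delta$. Combined with the bound $P_{i,j} \leq 1 + f(i+1, j-1)$ from the Setup, this will give $P_{R,j}, P_{j,R} \leq 1 + t\delta = B$, hence $C_k \leq B$.

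First, unpack the hypothesis. Since $\ell(R+1, R-1) = k - 1$, we have $f(R+1, R-1) = \sum_{r \neq R} d_r$, so the assumption reads $\sum_{r \neq R} d_r \leq 2t\delta$. Define prefix sums around the cycle starting at $R+1$: let $s_0 = 0$ and, for $\ell \in \{1, \dots, k-1\}$, $s_\ell = d_{R+1} + d_{R+2} + \dots + d_{R+\ell}$ (indices mod $k$), so $s_{k-1} \leq 2t\delta$. A short calculation with the definitions of $\ell(\cdot,\cdot)$ and $f$ shows that for $j = R + \ell$,
\[
f(R+1, j-1) = s_{\ell - 1}, \qquad f(j+1, R-1) = s_{k-1} - s_\ell.
\]

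The key step is a pigeonhole choice of $\ell$. If $s_{k-1} \leq t\delta$, take $\ell = 1$: then $s_0 = 0 \leq t\delta$ and $s_{k-1} - s_1 \leq s_{k-1} \leq t\delta$. Otherwise let $\ell^*$ be the smallest index in $\{1, \dots, k-1\}$ with $s_{\ell^*} > t\delta$; such an index exists. By minimality, $s_{\ell^* - 1} \leq t\delta$, and since $s_{k-1} \leq 2t\delta$ and $s_{\ell^*} > t\delta$, we get $s_{k-1} - s_{\ell^*} < t\delta$. Either way, for the chosen $j = R + \ell$ we have both $f(R+1, j-1) \leq t\delta$ and $f(j+1, R-1) \leq t\delta$, hence $P_{R,j} \leq 1 + t\delta = B$ and $P_{j,R} \leq 1 + t\delta = B$. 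Since $C_k$ is a minimum over unordered pairs $\{i,j\}$, we conclude $C_k \leq \max\{P_{R,j}, P_{j,R}\} \leq B$.

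There is essentially no substantive obstacle here: once $f$ is rewritten in terms of cumulative sums along the cycle, the result is a clean prefix-sum/pigeonhole argument. The only care required is in handling the modular indices correctly and checking the boundary cases $\ell = 1$ and $\ell = k-1$ (where one of the two arcs is empty and the corresponding $f$-value is $0$), both of which are consistent with the formulas above.
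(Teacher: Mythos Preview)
Your proof is correct and takes essentially the same approach as the paper: both arguments exploit that the total $f(R+1,R-1)\le 2t\delta$ forces some splitting index $j$ with $f(R+1,j-1)\le t\delta$ and $f(j+1,R-1)\le t\delta$, whence $\max\{P_{R,j},P_{j,R}\}\le 1+t\delta=B$. The only cosmetic difference is that the paper phrases the threshold-crossing at the level of the $P$ values (taking $r^*=\max\{r:P_{R,r}\le B\}$ and using $P_{R,r+1}+P_{r,R}\le 2B$), while you work directly with the explicit prefix sums $s_\ell$ and take the first $\ell^*$ with $s_{\ell^*}>t\delta$; these are the same pigeonhole.
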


\begin{proof}
For every index $ r \in [R+1, R-2]$: $$P_{R,r+1} + P_{r,R} \leq (1 + f(R+1,r)) + (1+f(r+1,R-1)) = 2 + f(R+1,R-1) \leq 2+ 2 t\delta = 2B,$$ so either $P_{R,r+1} \leq B$ or $P_{r,R} \leq B$.

Note that $P_{R,R+1} =1 \leq B$, so there exists index $r^* := \max \{r | R+1 \leq r \leq R-1 \land P_{R,r} \leq B \} $. If $r^* = R-1$, then $P_{R,R-1} \leq B$ and $P_{R-1,R} = 1\leq B$ so $C_k\leq\max\{ P_{R,R-1}, P_{R-1,R}\} \leq B$. If $r^* \leq R-2$, then either $P_{R,r+1} \leq B$ or $P_{r,R} \leq B$. By definition of $r^*$, $P_{R,r+1} > B$, so $P_{r,R} \leq B$ and $C_k \leq \max \{P_{R,r}, P_{r,R} \} \leq B$.
\end{proof}

To use \ref{lemma:mainlowsum}, we need Definition \ref{def:lowarc} to ensure the preconditions, and Definition \ref{def:sequence} to get rid of the $\min \{ .,. \}$ symbol.
\begin{definition} \label{def:lowarc}
For any integer $q$, arc $(i,j)$ is $q$-low if $f(i,j) \leq \delta (\ell(i,j)-q)$ and $q$-high otherwise.
\end{definition}
\begin{lemma} \label{lemma:interval}
Consider indices $i,j$ such that $s+1 \leq i \leq j+1 \leq k$. If $(i,j)$ is $q$-low then $(i',j')$ is $q$-low for any $s+1 \leq i' \leq i$ and $j \leq j'\leq k-1$. If $(i,j)$ is $q$-high then $(i',j')$ is $q$-high for any $i \leq i'\leq j'+1 \leq j+1.$

\end{lemma}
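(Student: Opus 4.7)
The plan is to prove both parts by direct manipulation of the defining inequality $f(i,j) \leq \delta(\ell(i,j)-q)$, exploiting the single structural fact that every index $r$ with $s+1 \leq r \leq k-1$ is $\delta$-low, hence $d_r \leq \delta$. Since all intervals appearing in the lemma sit inside $[s+1,k-1]$ (this is exactly what the constraints $s+1 \leq i$ and $j+1 \leq k$ and the given ranges on $i',j'$ ensure), moving either endpoint by one step changes $f$ by some $d_r$ in this block, which is bounded by $\delta$. That $\delta$ is precisely the per-unit-length budget in the $q$-low condition is what makes the lemma work.

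For the first claim, I would consider extending $(i,j)$ to $(i',j')$ with $s+1 \leq i' \leq i$ and $j \leq j' \leq k-1$. The added indices $i',\ldots,i-1$ on the left and $j+1,\ldots,j'$ on the right all lie in $[s+1,k-1]$, so each contributes at most $\delta$ to $f$. Using $\ell(i',j') - \ell(i,j) = (i-i') + (j'-j)$, I get
\[
f(i',j') \;=\; f(i,j) + \sum_{r=i'}^{i-1} d_r + \sum_{r=j+1}^{j'} d_r \;\leq\; f(i,j) + \delta\bigl(\ell(i',j') - \ell(i,j)\bigr).
\]
Plugging in the $q$-low hypothesis $f(i,j) \leq \delta(\ell(i,j)-q)$ yields $f(i',j') \leq \delta(\ell(i',j')-q)$, so $(i',j')$ is $q$-low.

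For the second claim, shrinking $(i,j)$ to $(i',j')$ with $i \leq i' \leq j'+1 \leq j+1$ removes indices inside $[i,j]\subseteq [s+1,k-1]$, so the symmetric computation gives
\[
f(i,j) - f(i',j') \;\leq\; \delta\bigl(\ell(i,j) - \ell(i',j')\bigr).
\]
Combining with the $q$-high hypothesis $f(i,j) > \delta(\ell(i,j) - q)$ gives $f(i',j') > \delta(\ell(i',j') - q)$, so $(i',j')$ is $q$-high.

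I do not expect a real obstacle in this lemma: it is a bookkeeping statement about how $f$ changes when endpoints move within a $\delta$-low block, and the arithmetic balances exactly because each unit of length contributes at most $\delta$ to $f$ on that block. The only care needed is to verify that the added/removed indices lie in $[s+1,k-1]$, which is immediate from the stated ranges, and to note that the degenerate case $i = j+1$ (where $\ell = 0$ and $f = 0$) is handled automatically since extending from an empty interval does not violate the inequality for any $q$ for which the extended interval could be claimed $q$-low.
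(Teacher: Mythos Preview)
Your proof is correct and follows essentially the same approach as the paper: decompose $f(i',j')$ into $f(i,j)$ plus the contributions of the added indices, each bounded by $\delta$ since they lie in $[s+1,k-1]$, and use additivity of $\ell$ to conclude. The only cosmetic difference is that the paper dispatches the second statement in one line by observing it is the contrapositive of the first (larger interval $q$-high implies smaller interval $q$-high), whereas you redo the symmetric computation explicitly.
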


\begin{proof}
Since $d_r \leq \delta \forall  s+1 \leq r \leq k-1$, 
$$f(i',j') = f(i',i-1) + f(i,j) + f(j+1,j') \leq \ell(i',i-1) \delta + (\ell(i,j)-q) \delta + \ell(j+1,j') \delta = (\ell(i,j)-q)\delta.$$
The second statement follows by taking the contrapositive of the first.
\end{proof}
\begin{lemma} \label{lemma:boundsum}
 For any indices $i,j$ such that $s+1 \leq i \leq j+1 \leq k$, $(i,j)$ is $0$-low. If $C_k > B$ then $(i,j)$ is $(h-s)$-high 
\end{lemma}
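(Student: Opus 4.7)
My plan is to treat the two assertions separately. Throughout I use the setup of the subsection: indices $0$ and $s$ are $\delta$-high, indices $s+1,\ldots,k-1$ are $\delta$-low, $B = 1+t\delta \geq \omega(1-\delta)$, and $P_{0,s} \leq B$ has already been established via Lemma~\ref{lemma:consecutivelow} (there are at most $s-1 \leq h-2 < t$ indices strictly between $0$ and $s$, so no $t+1$ of them can be consecutive).

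For the first assertion (unconditional $0$-lowness) the argument is immediate: every $r \in \{s+1,\ldots,k-1\}$ satisfies $d_r < \delta$, hence
\[
f(i,j) \;=\; \sum_{r=i}^{j} d_r \;\leq\; \ell(i,j)\,\delta,
\]
which is exactly the definition of $(i,j)$ being $0$-low. (The boundary case $i=j+1$ gives $f=\ell=0$, which is $0$-low as well.)

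For the conditional assertion I would argue by contradiction. Suppose $C_k > B$ and yet some $(i,j)$ in the stated range is $(h-s)$-low. My plan is first to reduce to the maximal arc using Lemma~\ref{lemma:interval}: since $s+1 \leq i$ and $j \leq k-1$, that lemma upgrades $(h-s)$-lowness of $(i,j)$ to $(h-s)$-lowness of $(s+1,k-1)$, i.e.
\[
f(s+1,k-1) \;\leq\; \delta\bigl((k-1-s)-(h-s)\bigr) \;=\; \delta(k-1-h) \;=\; t\delta,
\]
where I used $k = t+h+1$. Next I apply the chain-rule inequality $P^d_{w,y} \leq 1 + f(w+1,y-1)$ (a consequence of Equation~(\ref{eq:inductive}), unrolled cyclically) with $w=s$ and $y=0$, obtaining
\[
P_{s,0} \;\leq\; 1 + f(s+1,k-1) \;\leq\; 1+t\delta \;=\; B.
\]
Combined with the already known $P_{0,s} \leq B$, the definition of $C_k$ as a min-over-pairs of maxes gives $C_k \leq \max\{P_{0,s},P_{s,0}\} \leq B$, contradicting the hypothesis $C_k > B$.

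The only mildly delicate point is recognizing that the universally quantified conclusion can be reduced to a single maximal arc $(s+1,k-1)$ via the monotonicity in Lemma~\ref{lemma:interval}; once this is seen, the proof collapses to a one-line application of the chain rule together with the already known bound on $P_{0,s}$. In particular, neither Lemma~\ref{lemma:mainlowsum} nor Lemma~\ref{lemma:circlesum} is needed here; those tools are reserved for the subsequent stages of the odd-cycle argument where one must propagate such arc bounds into inequalities of the form $d_i + d_r \leq 2\delta$ and sum them around the cycle.
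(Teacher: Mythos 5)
Your proposal is correct and follows essentially the same route the paper takes: establish $0$-lowness directly from the pointwise bound $d_r \le \delta$ on the low range, then for the second part apply Lemma~\ref{lemma:interval} to upgrade $(h-s)$-lowness of $(i,j)$ to $(h-s)$-lowness of the maximal arc $(s+1,k-1)$, compute $f(s+1,k-1) \le t\delta = B-1$ using $k=t+h+1$, and conclude $C_k \le \max\{P_{0,s},P_{s,0}\} \le B$. The paper phrases this as a direct proof of the contrapositive rather than as a contradiction, but that is a cosmetic difference.
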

\begin{proof}
Our earlier assumption about $\delta$-low indices implies $d_r \leq \delta \forall s+1 \leq r \leq k-1$. Thus $$f(i,j) = \sum_{r = i}^{j} d_r \leq \sum_{r=i}^{j} \delta = \delta (\ell(i,j)-0).$$
Recall that $P_{0,s} \leq B$ by Lemma \ref{lemma:consecutivelow}. If $(i,j)$ is $(h-s)$-low then so is $(s+1,k-1)$ by Lemma \ref{lemma:interval}. But then
\begin{align*}
f(s+1,k-1) &\leq \delta(\ell(s+1,k-1) - (h-s)) 
=\delta (k-s-1 - (k-t-1 -s)) = \delta t = B-1\\
\Rightarrow C_k &\leq \max \{P_{0,s} , P^d_{s,0}\}\leq \max\{P_{0,s},1+f(s+1,k-1)\} \leq B.
\end{align*}
\end{proof}

\begin{definition} \label{def:sequence}
Define sequences $(a_n), (b_n)$ for $n \in \{0,\cdots,s\}$ as follows:

$a_0 = 0, b_0 = s, (a_n,b_n) = \begin{cases} (a_{n-1} +1, b_{n-1}) \text{ if $d_{a_n} \leq d_{b_n}$} \\ (a_{n-1}, b_{n-1}-1) \text{ else} \end{cases}$ 

Clearly, $(a_n)$ is weakly increasing, $(b_n)$ is weakly decreasing and $b_n - a_n = s-n \geq 0$. Let $T := a_s = b_s$. 
\end{definition}

\begin{theorem} \label{thm:main}
Let $p,q$ be integers in $[0,h-s-1]$. Let $\Delta: = p-q$. For every index $i$, let $i^{\Delta} := i + \Delta$. Let $m = s-h+t \leq s$. We say condition $(p,q)$ holds if $(s+1,b^{\Delta}_{m} +t-1)$ is $p$-low, $(s+1, b^{\Delta}_{m} +t)$ is $(p+1)$-high, $(a^{\Delta}_m-t+1,k-1)$ is $q$-low and $(a^{\Delta}_m-t, k-1)$ is $(q+1)$-high, and property $(p,q,B)$ holds if condition $(p,q)$ implies $C_k \leq B$.
\begin{enumerate}
\item \label{subthm:1} If condition $(p,q)$ holds and $C_k > B$ then:
\begin{enumerate}[label = (\alph*)]
\item \label{sublemma:structure} $\forall n \in \{0,..,m\}:$ \begin{align}
f(a^{\Delta}_n - t+1, a_n -1) &\leq t\delta \label{ineq:1}\\ 
f(b_n + 1, b^{\Delta}_n + t-1) &\leq t \delta \label{ineq:2}\\
d_r + d_{r^{\Delta} - t } \leq \omega -B - (\omega-2)\delta &\leq 2\delta \, \forall r, 0 \leq r < a_n \label{ineq:3}\\
d_r + d_{r^{\Delta} + t } \leq \omega -B - (\omega -2) \delta &\leq 2 \delta \, \forall r, b_n < r \leq s \label{ineq:4} 
\end{align}
\item \label{sublemma:circlesum} $f(b_m +1, a_m-1) \leq 2t \delta$
\end{enumerate}
\item \label{subthm:2} Property $(p,q,B)$ holds when $k$ is odd i.e. if $k$ is odd and condition $(p,q)$ holds, then $C_k \leq B$.
\end{enumerate}
\end{theorem}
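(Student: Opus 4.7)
The plan is as follows: Part \ref{subthm:2} is derived from Part \ref{subthm:1}\ref{sublemma:circlesum}; Part \ref{subthm:1}\ref{sublemma:circlesum} follows from Part \ref{subthm:1}\ref{sublemma:structure} evaluated at $n = m$; and Part \ref{subthm:1}\ref{sublemma:structure} is proved by induction on $n$, with Lemma \ref{lemma:mainlowsum} as the workhorse.

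For Part \ref{subthm:2}, when $k$ is odd we have $t = h$ and hence $m = s - h + t = s$, so $a_m = b_m = T$. Assuming toward contradiction that $C_k > B$ while condition $(p, q)$ holds, Part \ref{subthm:1}\ref{sublemma:circlesum} would give $f(T + 1, T - 1) \leq 2t\delta$, and then Lemma \ref{lemma:circlesum} with $R = T$ forces $C_k \leq B$, contradicting our assumption. Therefore condition $(p, q)$ must imply $C_k \leq B$.

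For Part \ref{subthm:1}\ref{sublemma:structure}, I induct on $n$. The base case $n = 0$ is direct from condition $(p, q)$: with $a_0 = 0$ and $b_0 = s$, inequality (\ref{ineq:1}) reduces to $f(\Delta - t + 1, k - 1) \leq t\delta$, which follows from the $q$-low property of $(a_0^\Delta - t + 1, k - 1)$; symmetrically, (\ref{ineq:2}) follows from the $p$-low property of $(s + 1, b_0^\Delta + t - 1)$; and (\ref{ineq:3}), (\ref{ineq:4}) are vacuous at $n = 0$ since the corresponding index ranges are empty. For the inductive step, assume the four inequalities at $n < m$ and split by the defining rule for $(a_{n+1}, b_{n+1})$. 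If $d_{a_n} \leq d_{b_n}$ (so $a_{n+1} = a_n + 1$, $b_{n+1} = b_n$), apply Lemma \ref{lemma:mainlowsum} with $i = a_n$, $j = b_n$, $r = a_n^\Delta - t$: the precondition $f(r + 1, i - 1) \leq t\delta$ is exactly (\ref{ineq:1}) at step $n$, while $f(j + 1, r - 1) \leq t\delta$ follows from (\ref{ineq:2}) at step $n$ combined with $\delta$-lowness of the intervening indices in $\{s+1, \ldots, k-1\}$. The lemma then yields $d_{a_n} + d_{a_n^\Delta - t} \leq \omega - B - (\omega - 2)\delta \leq 2\delta$, i.e.\ (\ref{ineq:3}) at $n + 1$ for the new index $a_n$. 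One refreshes (\ref{ineq:1}) by writing $f(a_n^\Delta - t + 2, a_n) = f(a_n^\Delta - t + 1, a_n - 1) - d_{a_n^\Delta - t + 1} + d_{a_n}$ and invoking the new pair bound together with $\delta$-lowness of $a_n^\Delta - t + 1$; (\ref{ineq:2}) and (\ref{ineq:4}) carry over unchanged. The case $d_{a_n} > d_{b_n}$ is symmetric, using $r = b_n^\Delta + t$.

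Finally, Part \ref{subthm:1}\ref{sublemma:circlesum} at $n = m$ is obtained by summing (\ref{ineq:3}) over $r \in \{0, \ldots, a_m - 1\}$ and (\ref{ineq:4}) over $r \in \{b_m + 1, \ldots, s\}$, contributing at most $2s\delta$, and then adding the $\delta$-lowness bounds for the remaining indices in $\{s+1, \ldots, k-1\}$ not already covered by these pairs, to obtain $f(b_m + 1, a_m - 1) \leq 2t\delta$. The main technical obstacle is the modular index arithmetic: verifying at each inductive step that $a_n^\Delta - t$ and $b_n^\Delta + t$ stay inside the $\delta$-low tail $\{s+1, \ldots, k-1\}$ so that Lemma \ref{lemma:mainlowsum} applies, that the $p$-low/$q$-low arc preconditions track correctly as $a_n$ and $b_n$ move inward under the sequence rule of Definition \ref{def:sequence}, and that the pair inequalities aggregate into exactly the desired $2t\delta$ bound in Part \ref{subthm:1}\ref{sublemma:circlesum}.
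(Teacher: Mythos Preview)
Your overall architecture is right and matches the paper: Part~\ref{subthm:2} from Part~\ref{subthm:1}\ref{sublemma:circlesum} via Lemma~\ref{lemma:circlesum}, Part~\ref{subthm:1}\ref{sublemma:circlesum} by summing the pair bounds from Part~\ref{subthm:1}\ref{sublemma:structure}, and Part~\ref{subthm:1}\ref{sublemma:structure} by induction with Lemma~\ref{lemma:mainlowsum} producing each new pair inequality. The base case and the derivation of the new pair bound in the inductive step are essentially correct.

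The genuine gap is in your refresh of (\ref{ineq:1}). You propose the one-step shift
\[
f(a_n^{\Delta}-t+2,\,a_n)\;=\;f(a_n^{\Delta}-t+1,\,a_n-1)\;-\;d_{a_n^{\Delta}-t+1}\;+\;d_{a_n},
\]
and then want to absorb $d_{a_n}-d_{a_n^{\Delta}-t+1}$ using the new pair bound and the $\delta$-lowness of $a_n^{\Delta}-t+1$. But the pair bound controls $d_{a_n}+d_{a_n^{\Delta}-t}$, with the \emph{wrong} index (off by one from the term you subtract), and $\delta$-lowness only gives an \emph{upper} bound $d_{a_n^{\Delta}-t+1}<\delta$, which goes the wrong direction since you are subtracting it. Concretely, if $d_{a_n^{\Delta}-t+1}=0$ while $d_{a_n}$ is close to $\delta$, your shift increases the sum and you cannot conclude $f(\text{new})\le t\delta$.

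The paper handles this step quite differently, and this is exactly where the $(p{+}1)$-high and $(q{+}1)$-high parts of condition $(p,q)$ earn their keep. One first sums \emph{all} the already-established pair bounds $d_i+d_{i^{\Delta}-t}\le 2\delta$ for $0\le i\le a_{n-1}$ to get
\[
f(0,a_{n-1})\;\le\;2\,\ell(0,a_{n-1})\,\delta\;-\;f(0^{\Delta}-t,\,a_{n-1}^{\Delta}-t),
\]
and then uses the $(p{+}1)$-high property (which propagates along the tail via Lemma~\ref{lemma:interval}) to \emph{lower}-bound the subtracted term $f(0^{\Delta}-t,\,a_{n-1}^{\Delta}-t)$, together with the $q$-low property to upper-bound $f(a_n^{\Delta}-t+1,k-1)$. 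The two combine, with the arithmetic $\Delta=p-q$, to give exactly $t\delta$. Your proposal never invokes the high conditions in the inductive step, so the argument cannot close; you should replace the one-step shift by this accumulated-sum argument, and for that you will also need the auxiliary fact (proved via Lemma~\ref{lemma:interval}) that the relevant $p$-low/$(p{+}1)$-high/$q$-low/$(q{+}1)$-high properties hold for every $n\in\{0,\ldots,m\}$, not just for $n=m$ as stated in condition $(p,q)$.
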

\begin{figure}[h]
\centering
    \includegraphics[width=0.6\textwidth]{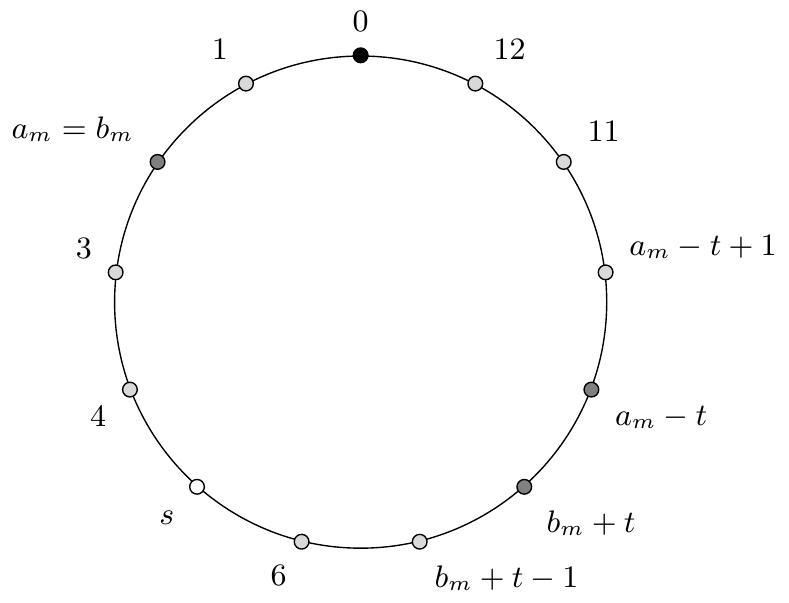}
    \caption{A visualization of the setup of Theorem \ref{thm:main} when $k=13, t=h=6, s=h-1 =5, \Delta = 0$ and $a_m=b_m=2$.}
	\label{fig:cycleDetect1}
\end{figure}
\begin{proof}
We prove \ref{thm:main}.\ref{subthm:1}\ref{sublemma:structure} by induction on $n$ for $n \in \{0,..,m\}$. First, observe a useful fact:

\begin{fact} \label{fact} $\forall n \in \{0,..,m\}$: $(s+1, b^{\Delta}_n + t-1)$ is $p$-low, $(s+1, a^{\Delta}_n-t -1)$ is $(p+1)$-high, $(a^{\Delta}_n - t +1, k-1)$ is $q$-low, $(b^{\Delta}_n+t +1, k-1)$ is $(q+1)$-high.\end{fact}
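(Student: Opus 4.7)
The plan is to derive each of the four claims of Fact \ref{fact} directly from the four corresponding claims in Condition $(p,q)$, using the monotonicity of the sequences $(a_n)$ and $(b_n)$ together with Lemma \ref{lemma:interval}.

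By Definition \ref{def:sequence}, $(a_n)$ is weakly increasing and $(b_n)$ is weakly decreasing, so for every $n\in\{0,\ldots,m\}$ we have $a_n \le a_m$ and $b_n \ge b_m$, hence $a_n^{\Delta}\le a_m^{\Delta}$ and $b_n^{\Delta}\ge b_m^{\Delta}$. I will also use $a_m\le b_m$, which follows from $b_m - a_m = s - m = h - t \ge 0$.

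For the two ``low'' statements I extend the arcs given by Condition $(p,q)$ using the extension direction of Lemma \ref{lemma:interval}. The arc $(s+1,b_m^{\Delta}+t-1)$ is $p$-low by Condition $(p,q)$, and since $b_n^{\Delta}+t-1\ge b_m^{\Delta}+t-1$, extending the right endpoint keeps it $p$-low, yielding that $(s+1, b_n^{\Delta}+t-1)$ is $p$-low. Similarly, the arc $(a_m^{\Delta}-t+1,k-1)$ is $q$-low, and since $a_n^{\Delta}-t+1\le a_m^{\Delta}-t+1$, extending the left endpoint gives that $(a_n^{\Delta}-t+1,k-1)$ is $q$-low.

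For the two ``high'' statements I instead shrink the arcs given by Condition $(p,q)$. The chain $a_n^{\Delta}-t-1\le a_m^{\Delta}-t-1\le b_m^{\Delta}-t-1\le b_m^{\Delta}+t$ shows that $(s+1,a_n^{\Delta}-t-1)$ is a sub-arc of the $(p+1)$-high arc $(s+1,b_m^{\Delta}+t)$, hence itself $(p+1)$-high. Symmetrically, $b_n^{\Delta}+t+1\ge b_m^{\Delta}+t+1\ge a_m^{\Delta}+t+1\ge a_m^{\Delta}-t$ shows that $(b_n^{\Delta}+t+1,k-1)$ is a sub-arc of the $(q+1)$-high arc $(a_m^{\Delta}-t,k-1)$, hence itself $(q+1)$-high.

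The only real obstacle is to verify at each application that the arc endpoints lie in the range $[s+1,k-1]$ required by Lemma \ref{lemma:interval}. This uses the bounds $0\le a_n,b_n\le s\le h-1$, $|\Delta|\le h-s-1$, and $h+t=k-1$, which together keep $a_n^{\Delta}\pm t$ and $b_n^{\Delta}\pm t$ safely inside the ``low region'' (reading indices cyclically modulo $k$). This verification is routine bookkeeping and carries no conceptual difficulty: all four claims follow essentially immediately from the monotonicity of the sequences combined with Lemma \ref{lemma:interval}.
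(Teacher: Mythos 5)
Your two ``low'' claims are fine: $(a_n)$ increasing and $(b_n)$ decreasing let you grow a $p$-low (resp.\ $q$-low) arc to the right (resp.\ left) via Lemma~\ref{lemma:interval}, exactly as the paper does. But the chains you write for the two ``high'' claims contain a real error, visible once you replace $-t-1$ by its residue mod $k$.

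Since $k-t-1 = h$, the index $a_n^\Delta - t - 1$ is, reduced modulo $k$, equal to $a_n^\Delta + h$; likewise $b_m^\Delta - t - 1 \equiv b_m^\Delta + h$. Your chain $a_n^\Delta - t - 1 \le a_m^\Delta - t -1 \le b_m^\Delta - t-1 \le b_m^\Delta + t$, read at the level of representatives in $[s+1,k-1]$, becomes $a_n^\Delta + h \le a_m^\Delta + h \le b_m^\Delta + h \le b_m^\Delta + t$, and the final inequality asserts $h\le t$ — true when $k$ is odd, but \emph{false} for even $k$, where $h=t+1$. The step $a_m^\Delta + t + 1 \ge a_m^\Delta - t$ in your symmetric chain fails in the same way, since $a_m^\Delta - t \equiv a_m^\Delta + h + 1$ and $a_m^\Delta + t + 1 \ge a_m^\Delta + h + 1$ again needs $t\ge h$. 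The inequalities do hold if you treat $-t-1$ as a literal integer shift and allow negative indices, but that loses the arc interpretation entirely, and the conclusion you would get ($a_n^\Delta - 2t - 1 \le b_m^\Delta$) is far weaker than the cyclic comparison Lemma~\ref{lemma:interval} needs.

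The missing ingredient is the exact identity $b_m^\Delta + t = a_m^\Delta + h$, which comes from $b_m - a_m = s - m = h - t$. With it, $a_n^\Delta - t - 1 \equiv a_n^\Delta + h \le a_m^\Delta + h = b_m^\Delta + t$ directly by $a_n\le a_m$, and symmetrically for the $(q+1)$-high arc; there is no need to detour through $b_m^\Delta - t - 1$ at all. This is precisely what the paper's chain $s+1 \le a_n^\Delta - t - 1 \le a_m^\Delta + h \le b_m^\Delta + t$ encodes. Since this Fact later feeds into the even-cycle analysis (Lemma~\ref{lemma:evenmain} relies on Theorem~\ref{thm:main}), the even-$k$ case cannot be waved away. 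So the ``routine bookkeeping'' you deferred is not merely showing the endpoints stay in $[s+1,k-1]$; it is identifying the right residue representatives, and it is exactly where the substantive step of the argument (the use of $b_m-a_m = h-t$) enters.
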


\begin{proof}[Proof of Fact \ref{fact}] 
Since $0 \leq p,q \leq h-s-1$, $$s+1 -h \leq \Delta \leq h-s-1.$$  Also, $$a^{\Delta}_n - t - 1=  a^{\Delta}_n +k-t-1 =a^{\Delta}_n + h.$$ 
Sequence $(a_n)$ defined in \ref{def:sequence} is weakly increasing and $a_m = b_m -(h-t) \leq s-(h-t)$ so: 
$$k-2\geq (s-(h-t))+(h-s-1) +h\geq a^{\Delta}_m + h \geq a^{\Delta}_n + h  \geq 0 + (s+1 -h) + h= s+1,$$
where we use $a_i^{\Delta} = a_i + \Delta.$ 

Therefore, $s+1 \leq a^{\Delta}_n - t-1< a^{\Delta}_n - t +1 \leq a^{\Delta}_m -t + 1 $ and $(a^{\Delta}_m-t+1,k-1)$ is $q$-low so $(a^{\Delta}_n - t +1, k-1)$ is $q$-low by Lemma \ref{lemma:interval}. That $b_m -a_m = s-m = h-t$ implies $s+1 \leq a^{\Delta}_n -t -1  \leq a^{\Delta}_m+h\leq b^{\Delta}_m +t$. Also, $(s+1,b^{\Delta}_m+t)$ is $(p+1)$-high so $(s+1,a^{\Delta}_n-t-1)$ is $(p+1)$-high by Lemma \ref{lemma:interval}. 

Analogously (but note that $(b_n)$ is decreasing), $(s+1, b^{\Delta}_n + t-1)$ is $p$-low and $(b^{\Delta}_n+t +1, k-1)$ is $(q+1)$-high.
\end{proof}

Now, let us proceed with the inductive proof.

Base case, $n=0$: 

By Fact \ref{fact}, $(s+1, b^{\Delta}_0+t  -1)$ is $p$-low. We have $\Delta = p-q \leq p$, so:
\begin{equation*} f(s+1,b^{\Delta}_0 + t-1) \leq \delta (\ell(s+1,b^{\Delta}_0+t-1) - p) \leq  \delta (\ell(s+1,b^{\Delta}_0+t-1) - \Delta ) < t \delta \end{equation*}

By Fact \ref{fact}, $(a^{\Delta}_0 - t  + 1, k-1) $ is $q$-low. We have $-\Delta = q-p \leq q$, so:
\begin{equation*} f(a^{\Delta}_0 -t +1, k-1) \leq \delta (\ell(a^{\Delta}_0 - t +1, a_0 -1) - q) \leq \delta (\ell(a^{\Delta}_0 - t +1, a_0 -1) - (-\Delta) ) < t \delta \end{equation*}

Since $a_0 = 0$ and $b_0 = s$, 
inequalities (\ref{ineq:1}) and (\ref{ineq:2}) are proved. Inequalities (\ref{ineq:3}) and (\ref{ineq:4}) are trivially true.

Suppose \ref{thm:main}.\ref{subthm:1}\ref{sublemma:structure} is true for $n-1$ where $m \geq n \geq 1$. WLOG, assume $d_{a_{n-1}} \leq d_{b_{n-1}}$. The case  $d_{a_{n-1}} > d_{b_{n-1}}$ is analogous.

By Definition \ref{def:sequence}, $a_n = a_{n-1}+1, b_n = b_{n-1}$. Thanks to inductive assumption, we only need to show:  $$d_{a_{n-1} } + d_{a^{\Delta}_{n-1} -t} \leq \omega - B -(\omega-2) \delta$$ and $$f(a^{\Delta}_n - t +1, a_n-1) \leq t\delta.$$ Below, write $R$ in place of $a^{\Delta}_{n-1} - t = a^{\Delta}_{n} -t-1 = a^{\Delta}_n + h$ for simplicity's sake. From proof of Fact \ref{fact}, $s+1 \leq R \leq a^{\Delta}_m + h \leq k-2$.

By the inductive assumption, $f(R+1, a_{n-1} -1) = f(a^{\Delta}_{n-1}-t+1, a_{n-1}-1)\leq t\delta$ and $f(b_{n-1}+1, b^{\Delta}_{n-1}+t-1) \leq t \delta$. By Definition \ref{def:sequence}, $h-t+1 = s-(m-1) \leq s- (n-1) = b_{n-1} - a_{n-1}$, so $R-1 = a^{\Delta}_{n-1} + h  \leq b^{\Delta}_{n-1} +t -1 $. Also, $b_{n-1} +1 \leq s+1 \leq R$, so $ f(b_{n-1}+1, R -1) \leq f(b_{n-1}+1, b^{\Delta}_{n-1}+t-1) \leq t\delta$. By Lemma \ref{lemma:mainlowsum}, $d_{a_{n-1}} + d_R \leq \omega -B - (\omega -2) \delta \leq 2 \delta$.

By the inductive assumption and the above statement, $\forall 0 \leq i \leq a_{n-1}: d_i \leq 2 \delta - d_{i^{\Delta}-t }$. Hence: 
\begin{equation} \label{eq:f0} f(0,a_{n-1}) \leq \sum_{i=0}^{a_{n-1}} (2 \delta - d_{i^{\Delta} -t }) = 2 \ell (0,a_{n-1}) \delta - \sum_{i=0^{\Delta} -t} ^{a^{\Delta}_{n-1} -t} d_i =  2 \ell (0,a_{n-1}) \delta- f(0^{\Delta}-t,R)   \end{equation}

Recall that $a_n = a_{n-1}+1$, so $R = a^{\Delta}_{n-1} - t = a^{\Delta}_{n}  - t -1$. By Fact \ref{fact}, $(s+1, R)$ is $(p+1)$-high. Since $ 0^{\Delta}-t \geq k + (s+1-h) -t \geq s+1 $, by Lemma \ref{lemma:interval}, $(0^{\Delta}-t , R)$ is also $(p+1)$-high. By Fact \ref{fact}, $ (a^{\Delta}_n - t  +1, k-1)$ is $q$-low. These together with equation (\ref{eq:f0}) imply:
\begin{align*}
&f(a^{\Delta}_n - t+1, a_n-1) = f(a^{\Delta}_n-t+1,k-1) + f(0,a_{n -1})\\
&\leq (\ell (a^{\Delta}_n - t +1, k- 1)-q) \delta + 2 \ell(0,a_{n-1} ) \delta -  f(0^{\Delta}-t , R) \\
&\leq  (\ell (a^{\Delta}_n - t  +1, k- 1)-q) \delta + 2 \ell(0,a_{n-1} ) \delta  -(\ell(0^{\Delta}-t , a^{\Delta}_{n-1} -t) - (p+1)) \delta\\
&= ((t-a_n-\Delta-1 - q)  + 2(a_{n-1} +1 ) - (a_{n-1}+1 -(p+1) )) \delta \\
& = t\delta,
\end{align*}
where the simplification in the last two lines follows from $a_n = a_{n-1}+1$ and $\Delta = p-q.$

Hence, \ref{thm:main}.\ref{subthm:1}\ref{sublemma:structure} is still true for $n$, so is true for all $n\in \{0,..,m\}.$

Now, we prove \ref{thm:main}.\ref{subthm:1}\ref{sublemma:circlesum}. Note that $b^{\Delta}_m - a^{\Delta}_m = b_m - a_m = s-m = h-t$, so $b^{\Delta}_m+t= a^{\Delta}_m +h$. By \ref{thm:main}.\ref{subthm:1}\ref{sublemma:structure}'s inequalities (\ref{ineq:3}), (\ref{ineq:4}):
\begin{equation} \label{eq:sum}
\begin{split}
2 \delta m &\geq \sum_{i=0}^{a_m-1} (d_i  + d_{i^{\Delta}-t}) + \sum_{j = b_m+1}^s (d_j + d_{j^{\Delta}+t}) = \sum_{i=0}^{a_m-1} d_i + \sum_{i= h^{\Delta} + 1}^{a^{\Delta}_m+h} d_i + \sum_{j=b_m+1}^s d_j + \sum_{j=b^{\Delta}_m+t+1}^{s^{\Delta}+t} d_j\\
&=  f(0,a_m-1) + f(h^{\Delta}+1, a^{\Delta}_m+h) + f(b_m+1,s) + f(a^{\Delta}_m+ h+1,s^{\Delta} + t) \\
&= f(0,a_m-1) + f(b_m+1,s) + f(h^{\Delta}+1, s^{\Delta}+ t)
\end{split}
\end{equation}

That $|\Delta| \leq h-s-1, 2h \leq k$ and $h-1 \leq t$ implies $$h^{\Delta} = h+\Delta \in [h-(h-s-1), h + (h-s-1)] \subseteq   [s+1, k-1],$$ and $s^{\Delta}+t+1 \in [s-(h-s-1) + t+1, s+(h-s-1)+t+1] \subseteq [s+1,k-1].$ 

By Lemma \ref{lemma:boundsum}: 
\begin{equation} \label{eq:sum1}
\begin{split} 
f(s+1, h^{\Delta}) \leq \ell(s+1, h^{\Delta} )\delta = (h^{\Delta}- s) \delta = (h+\Delta - s) \delta\\
f(s^{\Delta}+t+1, k-1) \leq \ell (s^{\Delta}+t+1, k-1) \delta = (h -s^{\Delta} )\delta = (h-\Delta -s ) \delta
\end{split}
\end{equation}

Summing equations in (\ref{eq:sum1}) with equation (\ref{eq:sum}) gives:
\begin{equation}
\begin{split}
2 t \delta &= 2\delta m + (h+\Delta -s) \delta + (h-\Delta -s) \delta \\
&\geq   f(0,a_m-1) + f(b_m+1,s) + f(h^{\Delta}+1, s^{\Delta} +t) + f(s+1, h^{\Delta})  + f(s^{\Delta} + t+ 1, k-1) \\
&= f(0,a_m-1) + f(b_m+1,s) + f(s+1,k-1) \\
&= f(b_m+1,a_m-1)\\
\end{split}
\end{equation}
 
Now, we prove \ref{thm:main}.\ref{subthm:2}. When $k$ is odd, $h=t$, $m = s$, and $a_m = b_m = a_s = b_s = T$. Suppose that $C_k > B$, then $f(T+1, T-1) \leq 2t\delta$ by \ref{thm:main}.\ref{subthm:1}\ref{sublemma:circlesum}, thus $C_k \leq B$ by lemma \ref{lemma:circlesum}.
\end{proof}
To finish the proof, we prove property $(p,q,B)$ implies $C_k \leq B$.

\begin{lemma} \label{lemma:weakenedpq} Suppose that property $(p,q,B)$ holds $\forall 0\leq p,q\leq h-s-1$.
\begin{enumerate}
\item \label{sublemma:1}
Let $p,q, \Delta$ be integers such that $0 \leq p,q \leq h-s-1,$ and $\Delta = p-q$. If $(s+1, b^{\Delta}_m + t-1)$ is $p$-low and $(a^{\Delta}_m-t, k-1)$ is $(q+1)$-high, then $C_k\leq B$. Analogously, if $(a^{\Delta}_m-t+1, k-1)$ is $q$-low and $(s+1,b^{\Delta}_m+t)$ is $(p+1)$-high then $C_k \leq B$
\item \label{sublemma:2} $C_k \leq B$
\end{enumerate}
\end{lemma}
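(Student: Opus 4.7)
\emph{Plan.} My plan is to deduce both parts by reducing to the assumed property $(p^*,q^*,B)$ for an appropriate $(p^*,q^*)\in[0,h-s-1]^2$. For part~\ref{sublemma:1}, I will push $(p,q)$ iteratively until the full condition $(p^*,q^*)$ of Theorem~\ref{thm:main} holds. Given the two one-sided premises, call them C1 ($(s+1,b^{\Delta}_m+t-1)$ is $p$-low) and C4 ($(a^{\Delta}_m-t,k-1)$ is $(q+1)$-high), if the full condition fails then at least one of the missing parts C2 ($(s+1,b^{\Delta}_m+t)$ is $(p+1)$-high) or C3 ($(a^{\Delta}_m-t+1,k-1)$ is $q$-low) fails. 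If C2 fails, transit to $(p+1,q)$: the shift $\Delta\mapsto\Delta+1$ makes the new top arc equal to the old extended top arc (so new-C1 is precisely the failure of old-C2), and the new extended bottom arc equals the old bottom arc; the computation in the proof of Fact~\ref{fact} shows $a^{\Delta}_m-t$ lies in $[s+2,k-1]$, i.e., in the $\delta$-low region, so $d_{a^{\Delta}_m-t}\leq\delta$, which combined with old-C4 yields new-C4. If instead C3 fails, the symmetric transit $(p,q)\mapsto(p,q-1)$ preserves the premises using $d_{b^{\Delta}_m+t}\leq\delta$.

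Since each push strictly decreases the nonnegative integer $(h-s-1-p)+q$, after $O(h)$ steps we either land at a pair $(p^*,q^*)$ where the full condition holds and the assumed property $(p^*,q^*,B)$ concludes $C_k\leq B$, or we stop at the boundary $p=h-s-1,q=0$. At the boundary I expect the accumulated per-step inequalities to combine into the circle-sum bound $f(b_m+1,a_m-1)\leq 2t\delta$, mirroring Theorem~\ref{thm:main}.\ref{subthm:1}\ref{sublemma:circlesum}, after which Lemma~\ref{lemma:circlesum} finishes. The analogous half of part~\ref{sublemma:1} (where the given pair consists of C3 and C2 instead of C1 and C4) is treated identically after swapping the roles of the top and bottom arcs.

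For part~\ref{sublemma:2}, start at $(p,q)=(0,0)$, where C1 and C3 are automatic by Lemma~\ref{lemma:boundsum}. If the extended bottom arc $(a_m-t,k-1)$ is $1$-high, the first variant of part~\ref{sublemma:1} applies and concludes; if the extended top arc $(s+1,b_m+t)$ is $1$-high, the analogous variant applies. Otherwise both extended arcs are $1$-low, and a direct computation using $f(\text{small})=f(\text{ext})-d_{\text{flipped}}$ shows that unless $d_{a_m-t}$ and $d_{b_m+t}$ both equal $\delta$ exactly, one of the smaller arcs $(a_m-t+1,k-1)$ or $(s+1,b_m+t-1)$ becomes $1$-high, enabling a shifted application of part~\ref{sublemma:1} at $(1,0)$ or $(0,1)$. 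In the exact-equality case both C1 and C3 remain valid at $(1,1)$ with $\Delta$ unchanged, so the argument recurses with $p+q$ strictly larger until it either triggers one of the earlier cases or exhausts the admissible range.

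The main obstacle, I expect, is the boundary analysis of the push iteration: when the iteration exhausts $[0,h-s-1]^2$ without ever landing on a full condition, the failed tightness conditions accumulated along the way must be aggregated into the global circle-sum bound $f(b_m+1,a_m-1)\leq 2t\delta$, closely paralleling the endgame of Theorem~\ref{thm:main}.\ref{subthm:1}\ref{sublemma:circlesum} but now with only one-sided rather than fully tight data at each step.
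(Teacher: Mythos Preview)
Your push mechanism for part~\ref{sublemma:1} is essentially the paper's induction on $\Delta$, but you have misidentified the obstacle. There is no boundary difficulty at all: when $p=h-s-1$, condition C2 (that $(s+1,b^{\Delta}_m+t)$ is $(p+1)$-high, i.e.\ $(h-s)$-high) holds automatically by Lemma~\ref{lemma:boundsum}, and when $q=0$, condition C3 (that $(a^{\Delta}_m-t+1,k-1)$ is $0$-low) is equally automatic. So at the boundary the full condition $(p,q)$ holds and property $(p,q,B)$ finishes immediately. The paper makes this the base case of its induction; there is no need to accumulate inequalities toward a circle-sum.

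Your plan for part~\ref{sublemma:2} has a genuine error. From ``both extended arcs are $1$-low'' you cannot deduce that a smaller arc is $1$-high: the relation $f(\text{small})=f(\text{ext})-d_{\text{flipped}}$ together with the upper bound $f(\text{ext})\le\delta(\ell-1)$ gives only an \emph{upper} bound on $f(\text{small})$, which is the wrong direction for ``high''. Your recursion to $(1,1)$ then stalls, because part~\ref{sublemma:1} needs one low and one \emph{high} premise, not two low ones. The paper proceeds differently: since every arc in the low region is $0$-low and $(h-s)$-high (Lemma~\ref{lemma:boundsum}), there is a unique $q\in[0,h-s-1]$ with $(a_m-t,k-1)$ simultaneously $(q+1)$-high and $q$-low; a short case analysis on whether $(s+1,b_m+t)$ is $(q+1)$-low, $(s+1,b_m+t-1)$ is $q$-high, or $(a_m-t+1,k-1)$ is $q$-high then either feeds directly into part~\ref{sublemma:1} (with a shift by $\pm 1$) or yields the full condition $(q,q)$.
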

\begin{proof}
\begin{enumerate}
\item
We prove the first statement by induction on $\Delta = p-q$. The second one follows by symmetry. We induct on $\Delta$ where $s+1 - h \leq \Delta \leq h-s-1$

Base case: $\Delta = h - s -1$. Since $0 \leq p,q \leq h-s-1$, $p = h-s-1, q =0$. By Lemma \ref{lemma:boundsum}, $(s+1, b^{\Delta}_m + t)$ is $(p+1)$-high and $(a^{\Delta}_m - t+1, k-1)$ is $q$-low, so $C_k \leq B$ because property $(p,q,B)$ holds.

Suppose Lemma \ref{lemma:weakenedpq} is true for $\Delta' = \Delta+1$, we prove it is also true for $\Delta$. $\forall$ index $i$, let $i^{\Delta'}:= i+\Delta' = i^{\Delta}+1$.

If $(s+1, b^{\Delta}_m + t) = (s+1, b^{\Delta'}_m +t-1)$ is not $(p+1)$-high, i.e. is $(p+1)$-low: Since $a^{\Delta'}_m -t > a^{\Delta}_m-t$ and $(a^{\Delta}_m-t,k-1)$ is $(q+1)$-high, by Lemma \ref{lemma:interval}, $(a^{\Delta'}_m-t, k-1)$ is $(q+1)$-high. Apply induction's assumption for $\Delta' = p+1- q $, we have $C_{k}\leq B$.

If $(a^{\Delta}_m -t+1, k-1) = (a^{\Delta'}_m -t,k-1)$ is not $q$-low, i.e. is $q$-high: Since $b^{\Delta'}_m +t-1 > b^{\Delta}_m+t-1$ and $(s+1,b^{\Delta}_m+t-1 )$ is $p$-low, by Lemma \ref{lemma:interval}, $(s+1, b^{\Delta'}_m+t-1)$ is $p$-low. Apply induction's assumption for $\Delta' = p- (q-1) $, we have $C_k \leq B$.

The remaining case is $(s+1, b^{\Delta}_m + t)$ is $(p+1)$-high and $(a^{\Delta}_m -t+1, k-1) $ is $q$-low, thus condition $(p,q)$ holds. Then $C_k\leq B$ because of property $(p,q, B)$.

Hence Lemma \ref{lemma:weakenedpq} is true.  
\item Suppose for contradiction that $C_k > B$. By Lemma \ref{lemma:boundsum}, there exists $q \in \mathbb{N}, 0 \leq q \leq h-s-1$ such that $(a_m -t, k-1)$ is $(q+1)$-high and $q$-low.

If $(s+1,b_m+t)$ is $(q+1)$-low: Since $(a_m -t, k-1)$ is $(q+1)$-high, so is $(a_m-t+1,k-1)$ by Lemma \ref{lemma:interval}. Let $p=q+1, \Delta = p-q=1$. Since $(s+1,b_m^{\Delta} +t -1) = (s+1, b_m+t)$ is $p$-low, $(a^{\Delta}_m -t,k-1) =(a_m -t+1,k-1)$ is $(q+1)$-high, $C_k \leq B$ by Sublemma \ref{lemma:weakenedpq}.\ref{sublemma:1} (contradiction).

If $(s+1, b_m+t-1)$ is $q$-high: Let $p= q-1, \Delta = p-q = -1$. Since $(s+1,b^{\Delta}_m+t)=(s+1, b_m+t-1)$ is $(p+1)$-high, $ (a^{\Delta}_m -t+1,k-1) = (a_m-t, k-1)$ is $q$-low, $C_k \leq B$ by Sublemma \ref{lemma:weakenedpq}.\ref{sublemma:1} (contradiction). So $(s+1,b_m+t-1)$ is $q$-low.

If $(a_m-t+1, k-1)$ is $q$-high: Since $(s+1,b_m+t-1)$ is $q$-low, so is $(s+1,b_m+t)$ by Lemma \ref{lemma:interval}. Let $p' = q, q'=q-1, \Delta = p'-q' = 1$. Since $(a^{\Delta}_m-t, k-1) = (a_m-t+1, k-1)$ is $(q'+1)$-high, $(s+1, b^{\Delta}_m+t-1)=(s+1,b_m+t)$ is $p'$-low, $C_k \leq B$ by Sublemma \ref{lemma:weakenedpq}.\ref{sublemma:1} (contradiction).

Hence $(s+1,b_m+t)$ is $(q+1)$-high, $(s+1,b_m+t-1)$ is $q$-low, $(a_m-t,k-1)$ is $(q+1)$-high, $(a_m-t+1,k-1)$ is $q$-low. In other words, condition $(q,q)$ holds, so $C_k \leq B$ by property $(q,q,B)$.
\end{enumerate}
\end{proof}

Suppose $k$ is odd. Set $B := 1+t\delta=\omega (1-\delta)$, then $B = (k+1)\omega/(2\omega+k-1)$. Lemma \ref{lemma:weakenedpq} and \ref{thm:main}.\ref{subthm:2} together imply $C_k \leq B$. Since the choice $d_0, \cdots, d_{k-1}$ is arbitrary, $c_k \leq B$. 

Now we show that the bound is tight and that $c_k=B$ for some choice of the degrees.
Assume $\omega \leq \frac{2k}{k-1}$. Set $d_0=\cdots= d_{k-1}=\delta$. Every matrix multiplication "costs" $M(1-\delta,1-\delta,1-\delta) = B$, and rule 1 in Algorithm \ref{alg:detectcycle} "costs" $2 - \delta \geq B$. Using only the rules 2(a), (b) "costs" at least $1+t\delta = B$ in total. Hence,   $C_k(\delta,\cdots,\delta) = B$, and the bound in Conjecture ~\ref{conj:odd} is tight for $\omega \leq \frac{2k}{k-1}$. 
\subsection{Finding 6-Cycles}
Here we prove Conjecture ~\ref{conj:6} on the runtime of Yuster-Zwick's algorithm for finding $6$-Cycles.

Let $B$ be a value dependent on $\omega$ to be specified later. Let $\delta: = \frac{B-1}{2}$. Assume that $\delta \in [0,1] $ and $B \geq \omega(1-\delta)$. Fix a degree class $(d_0,\cdots, d_5)$, and denote $C_6(d_0,\cdots, d_5)$ by $C_6$. We want to prove $C_6 \leq B$. By lemma \ref{lemma:consecutivelow}, if there exists no $3$ consecutive $\delta$-low indices then $C_6 \leq \max\{1 +2 \delta, \omega (1-\delta)\} = B$. If all indices are $\delta$-low, then $C_6 \leq \max\{P_{0,3}, P_{3,0}\} \leq 1 + \max\{d_1+d_2, d_4 +d_5 \} \leq 1 + 2 \delta = B$. Now, consider the case when there exists $3$ consecutive $\delta$-low indices, and at least $1$ $\delta$-high index. WLOG, we can assume that indices $3,4,5$ are $\delta$-low and index $0$ is $\delta$-high. Suppose for contradiction that $C_6 > B$. We will prove certain strict upper bounds on $B$, which leads to a contradiction when we set $B$ to be equal to those upper bounds. Then we will conclude that $C_6\leq B$, thus also proving the bound for $c_6$.

We need the following lemma:

\begin{lemma} \label{lemma:6cycineq}
Suppose that $d_0 \geq \delta$ and $d_3, d_4, d_5 < \delta$.
\begin{enumerate}[label=(\alph*)]
\item \label{sublemma:3} If $C_6 > B$ and $d_2 \geq \delta$ then: $B < \frac{10 \omega -3}{4 \omega +3}$ and $B < \frac{15 -2 \omega}{11 -2 \omega}$ 
\item \label{sublemma:4} If $C_6 > B$ and $d_2 < \delta$ and $d_1 \geq \delta$ then: 
\begin{itemize}
\item If $\omega \leq \frac{9}{4}$: $B < \frac{10 \omega -3}{4 \omega +3}$ and $B< \frac{22-4\omega}{17-4\omega}$
\item If $\omega > \frac{9}{4}$: $B < \frac{11 \omega -2}{4 \omega +5}$
\item If $\omega \leq \frac{5}{2}$: $B < \frac{10 - \omega}{7 -\omega}$
\end{itemize}
\item \label{sublemma:5} If $d_1, d_2 < \delta$, then $C_6 \leq B$.
\end{enumerate}
\end{lemma}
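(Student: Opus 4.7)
\textbf{Proof plan for Lemma \ref{lemma:6cycineq}.}

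\emph{Part (c)} is by direct computation and does not require a contradiction argument. Since indices $1,2,3,4,5$ are all $\delta$-low while only index $0$ is $\delta$-high, the path bound $P_{w,y}\le 1+f(w+1,y-1)$ gives $P_{0,3}\le 1+f(1,2)=1+d_1+d_2\le 1+2\delta=B$ and, symmetrically, $P_{3,0}\le 1+f(4,5)=1+d_4+d_5\le 1+2\delta=B$. Hence $C_6\le\max\{P_{0,3},P_{3,0}\}\le B$, and we are done.

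\emph{Parts (a) and (b)} are the main work. The plan is to assume $C_6>B$ and derive several linear inequalities in the $d_i$'s that together force a strict upper bound on $B$. The engine is Lemma \ref{lemma:lowsum}: whenever we find a triple $(i,j,r)$ with $d_i,d_j\ge \delta$, $d_r<\delta$, and $P_{j,i}$, $P_{i,r}$, $P_{r,j}\le B$, we obtain
\[
d_r+\min\{d_i,d_j\}\le\omega-B-(\omega-2)\delta.
\]
In part (a) the high indices are $0$ and $2$ (with $d_1$ possibly high or low), and the natural triples are $(0,2,r)$ for $r\in\{3,4,5\}$ and the "short-arc" triple $(2,0,r)$ where $r\in\{3,4,5\}$. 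The short-arc $P_{2,0}$ is the delicate one: the naive path bound gives $1+d_3+d_4+d_5$, which may exceed $B$, so we instead bound $P_{2,0}$ through rule~2(c) with a carefully chosen pivot, inserting the matrix-multiplication estimate $M(1-d_i,1-d_r,1-d_j)\le 3-d_i-d_r-d_j-(3-\omega)(1-\max\{d_i,d_j,d_r\})$. The two stated thresholds, $(10\omega-3)/(4\omega+3)$ and $(15-2\omega)/(11-2\omega)$, correspond to two ways of combining the inequalities produced from these triples (respectively, one that leans on matrix multiplication for $P_{2,0}$ and one that relies on the path bound together with a degree-sum estimate), using the relation $\delta=(B-1)/2$ to eliminate $\delta$ at the end.

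In part (b) the high indices are $0$ and $1$, so the triples have the form $(0,1,r)$, $(1,0,r)$ with $r\in\{2,3,4,5\}$; now the "short" path from $1$ to $0$ passes through all four low vertices $2,3,4,5$, so its direct bound $1+d_2+d_3+d_4+d_5$ need not be $\le B$. This is where the $\omega$-dependent split originates: depending on the regime of $\omega$, the best way to bound the $P$-values entering Lemma \ref{lemma:lowsum} is either (i) via a single matrix multiplication through a chosen pivot, giving the $(10\omega-3)/(4\omega+3)$ bound for small $\omega$, (ii) via a nested pair of matrix multiplications, giving the $(11\omega-2)/(4\omega+5)$ bound in the middle range, or (iii) via the path/rule-2(a),(b) bound together with a single application of Lemma \ref{lemma:lowsum}, giving the $(22-4\omega)/(17-4\omega)$ and $(10-\omega)/(7-\omega)$ bounds. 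In each subcase we (a) verify the feasibility of the bound $P\le B$ on the three relevant $P$-values, (b) apply Lemma \ref{lemma:lowsum} to convert these into degree inequalities, and (c) sum the degree inequalities, using the constraints $d_0,d_1\ge\delta$ and $d_2,d_3,d_4,d_5<\delta$, to extract a bound of the form $B\le\cdot$; strictness comes from the strict inequality $M>B$ in Lemma \ref{lemma:lowsum}(1).

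The main obstacle will be bookkeeping: for each of the $\omega$-regimes in part (b) we must (i) pick the right pivot(s) for each $P_{i,j}$, (ii) verify that the matrix-multiplication exponent $M(1-d_i,1-d_r,1-d_j)$ is at most $B$ for our choices, which uses the assumption $B\ge \omega(1-\delta)$ and the low/high split, and (iii) check that the derived linear inequalities are consistent (i.e.\ the system is actually infeasible when $B$ exceeds the claimed threshold). Once that case analysis is carried out, plugging $\delta=(B-1)/2$ turns each extracted inequality into a linear inequality in $B$ and $\omega$ that is easily compared with the claimed thresholds.
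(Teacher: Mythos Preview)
Your part~(c) is exactly right and matches the paper.

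Your plan for parts~(a) and~(b) has the right overall shape---apply Lemma~\ref{lemma:lowsum} to several triples and linearly combine the resulting inequalities---but it misidentifies the key moves, and as written it would not go through.

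First, you never need to bound the ``hard'' $P$-values by~$B$. In part~(a) you say $P_{2,0}$ is delicate and propose to bound it via rule~2(c). In fact the opposite is used: since $P_{0,2}\le B$ (Lemma~\ref{lemma:consecutivelow}) and $C_6>B$, one \emph{deduces} $P_{2,0}>B$, hence $d_3+d_4+d_5>2\delta$. This lower bound is then combined with the three inequalities $(\omega-2)d_0+d_2+d_r<\omega-B$ (from Lemma~\ref{lemma:lowsum} applied to the triple $\{0,2,r\}$, $r\in\{3,4,5\}$). A second ingredient of the same flavor---$P_{1,4}\le B$ forces $P_{4,1}>B$, hence $d_0+d_5>2\delta$---is also needed. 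Without these ``reverse'' inequalities you cannot close the system.

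Second, in part~(b) the triples $(0,1,r)$ alone are not enough. The paper also uses the triples $\{r,2,5\}$ for $r\in\{0,1\}$: here $P_{5,r}\le 1+d_r\ (\le 1+2\delta$ because $d_r<2\delta$ follows from $2-d_r>B$), $P_{r,2}\le 1+d_{r'}$, and $P_{2,5}\le 1+d_3+d_4$, so Lemma~\ref{lemma:lowsum} gives $(\omega-2)d_r+d_2+d_5<\omega-B$. These, together with the $(0,1,r)$ inequalities for $r\in\{3,4\}$ and the ``reverse'' bounds $d_1+d_2>2\delta$, $d_0+d_5>2\delta$ (from $P_{3,0}\le B\Rightarrow P_{0,3}>B$, etc.), are what get combined.

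Finally, the $\omega$-split in part~(b) does not arise from different ways of bounding $P$-values (all $P$-bounds used are plain path bounds). It arises purely from how one \emph{linearly combines} the already-derived inequalities: the coefficient $(2\omega-\tfrac92)$ changes sign at $\omega=\tfrac94$, so one multiplies by different nonnegative weights in each regime to eliminate variables. The thresholds $(22-4\omega)/(17-4\omega)$ and $(10-\omega)/(7-\omega)$ come from additionally feeding in the crude bounds $d_r<2-B$ and $d_2,d_5\ge 2B-3$, not from nested matrix multiplications.
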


The proof of Lemma \ref{lemma:6cycineq} only involves linearly combining inequalities derived from Equation (\ref{eq:inductive}). We include the full proof of Lemma \ref{lemma:6cycineq} in the Appendix. Now, we continue on the proof of Conjecture ~\ref{conj:6}.

To show a lower bound $B$ on $c_6$, we show a tuple $(d_0, \cdots ,d_5)$, termed the "hard-case degree class", where $C_6(d_0,\cdots,d_5) = B$. Computing $C_6(d_0,..,d_5)$ given a tuple $(d_0,...,d_5)$ can be done via a constant size linear program. 

For $\delta = \frac{B-1}{2}$, $B\geq \omega(1-\delta)=\omega \frac{3-B}{2}$ iff $B \geq \frac{3 \omega}{\omega +2}$ and $\delta \in [0,1]$ iff $B \in [1,3]$. Set $B $ to be the RHS of equation (\ref{eq:6cycruntime}). It is easy to check that, for every value of $\omega \in [2,3]$, $B \geq \frac{3\omega}{\omega+2}$ and $B \in [1,3]$, so all the conditions in lemma \ref{lemma:6cycineq} are satisfied. By Lemma \ref{lemma:6cycineq}\ref{sublemma:5}, we only need to prove $C_6 \leq B$ when $d_2\geq \delta$ and when $d_2 < \delta, d_1 \geq \delta$.

\begin{enumerate}[label=(\alph*)]
\item \label{6cyc:case1}If $2\leq \omega \leq \frac{13}{6}$ then $B = \frac{10 \omega -3}{4\omega +3}$. Lemma \ref{lemma:6cycineq}\ref{sublemma:3}, \ref{sublemma:4} imply $B < \frac{10 \omega -3}{4 \omega +3}$, which is a contradiction. So $C_6 \leq B$ as needed. The "hard-case degree class" is $(\frac{4\delta}{3}, \delta, \delta, \frac{2\delta}{3}, \frac{2\delta}{3}, \frac{2\delta}{3})$.
\item If $\frac{13}{6} \leq \omega \leq \frac{9}{4}$ then $B = \frac{22 -4 \omega}{17 -4\omega}$. Lemma \ref{lemma:6cycineq}\ref{sublemma:3}, \ref{sublemma:4} imply $B < \max\{ \frac{15-2\omega}{11-2\omega},\frac{22-4\omega}{17-4\omega}\} = \frac{22 -4\omega}{17-4\omega}$, which is a contradiction. The "hard-case degree class" is $(2-B, \frac{7B-10}{4}, \frac{6-3B}{4}, \frac{2-B}{2}, \frac{2-B}{2}, 2B-3 )$.
\item If $\frac{9}{4} < \omega \leq \frac{16}{7}$ then $B= \frac{11 \omega -2}{4\omega + 5}$. Lemma \ref{lemma:6cycineq}\ref{sublemma:3}, \ref{sublemma:4} imply $B < \max\{ \frac{10\omega-3}{4\omega+3},\frac{11\omega-2}{4\omega + 5}\} = \frac{11\omega -2 }{4\omega +5}$, which is a contradiction. The "hard-case degree class" is $(\frac{8 \delta}{7}, \frac{8\delta}{7}, \frac{6\delta}{7}, \frac{4\delta }{7}, \frac{4\delta}{7}, \frac{6\delta}{7})$.
\item If $\frac{16}{7} \leq \omega \leq \frac{5}{2}$ then $B = \frac{10- \omega }{7 -\omega } $. Lemma \ref{lemma:6cycineq}\ref{sublemma:3}, \ref{sublemma:4} imply $B < \max\{\frac{15-2\omega}{11-2\omega},\frac{10-\omega}{7-\omega }\} \leq \frac{10 -\omega}{7-\omega}$, which is a contradiction. The "hard-case degree class" is $(2-B, 2-B,  2B-3,\frac{2-B}{2} ,\frac{2-B}{2} , 2B-3)$.
\end{enumerate}

\subsection{Finding even cycles}
Here we analyze the algorithm when $k$ is even. We will show that our analysis is tight when $\omega=2$.

When $k$ is even, $h = t+1$. Let $\beta := \delta\frac{t}{h} < \delta$, then $B= 1+t\delta = 1+h\beta$. Set $B = (\omega-2) (1-\delta) + (1-\delta) + (1-\beta) \geq \omega(1-\delta)$. Note that, $B=\frac{k \omega - \frac{4}{k}}{2\omega + k-2 - \frac{4}{k}}$. To prove Theorem \ref{thm:evencycle}, we fix arbitrary $d_0,\cdots, d_{k-1}$ and prove $C_k:=C_k(d_0,\cdots, d_{k-1}) \leq B$. We define $s$, $(a_n)$, $(b_n)$ same as in Subsection \ref{subsection:oddcycle}. If $s = 0$ then $P_{0,h} \leq 1 + \ell(1,h-1) \delta = B$ and $P_{h,0} \leq 1 + \ell(h+1,k-1) \delta = B$, so $C_k\leq B$.  Below, we assume that $s \geq 1$. Note that $\omega -B - (\omega-2)\delta = \delta +\beta$. As in Subsection \ref{subsection:oddcycle}, we will sum inequalities of the form $d_i + d_r \leq \delta+\beta$ to get $f(b_s+1,a_s-1) \leq t \delta + h \beta = 2 t\delta$, then use Lemma \ref{lemma:circlesum} and \ref{lemma:weakenedpq} to conclude that $C_k \leq B$.  
\begin{lemma} \label{lemma:evenlow}
Suppose that $C_k > B$.
\begin{enumerate}
\item \label{sublemma:lowsingle} $\forall r, h \leq r \leq s+h$: $d_r \leq \beta$
\item \label{sublemma:lowpair} For every index $r$, $s+1 \leq r \leq h-1: d_r + d_{r+h} \leq \delta + \beta$. Equilvalently, for every index $r$, $s+h+1 \leq r \leq k-1: d_r + d_{r-h} \leq \delta + \beta$
\item \label{sublemma:lowinterval} If index $i$ satisfied $s+1 \leq i \leq i+s-1 \leq k-1$, then $f(s+1,i-1) + f(i+s,k-1) \leq (t-s) \delta + (h-s) \beta$ 
\end{enumerate}
\end{lemma}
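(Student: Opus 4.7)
My plan is to prove the three parts in order: \ref{sublemma:lowsingle} and \ref{sublemma:lowpair} by careful applications of Lemma \ref{lemma:lowsum}, and then \ref{sublemma:lowinterval} as a combinatorial consequence via a pairing argument.

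For \ref{sublemma:lowsingle}, for each $r \in [h, s+h]$ I would apply Lemma \ref{lemma:lowsum} with $(i, j, r_*) = (s, 0, r)$. We have $P_{0, s} \leq B$ by Lemma \ref{lemma:consecutivelow}, while the segments $P_{s, r} \leq 1 + f(s+1, r-1)$ and $P_{r, 0} \leq 1 + f(r+1, k-1)$ are both $\leq 1 + t\delta = B$ because the index ranges $[s+1, r-1]$ and $[r+1, k-1]$ are entirely $\delta$-low and have size at most $t$ whenever $r \in [h, s+h]$. With $C_k > B$ in hand, Lemma \ref{lemma:lowsum} then forces $M(1 - d_s, 1 - d_0, 1 - d_r) > B$ and converts this to $d_r \leq \omega - B - (\omega - 2)\delta - \delta = \beta$.

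For \ref{sublemma:lowpair}, fix $r \in [s+1, h-1]$ and set $R := r + h \in [s+h+1, k-1]$. This is the main obstacle, since Lemma \ref{lemma:lowsum} is set up for one $\delta$-low index against two $\delta$-high ones, whereas both $d_r$ and $d_R$ are $\delta$-low. The antipodal path $P_{r, R}$ crosses only the $t$ consecutive $\delta$-low indices $r+1, \dots, R-1$, so $P_{r, R} \leq 1 + t\delta = B$; combined with $C_k \leq \max\{P_{r, R}, P_{R, r}\}$ and $C_k > B$, this forces $P_{R, r} > B$. I then expand $P_{R, r}$ twice using Equation (\ref{eq:inductive})---first breaking at index $0$, then breaking the resulting $P_{0, r}$ at $s$---to obtain
\[
P_{R, r} \leq \max\bigl\{P_{R, 0},\ P_{0, s},\ P_{s, r},\ M(1 - d_R, 1 - d_0, 1 - d_r),\ M(1 - d_0, 1 - d_s, 1 - d_r)\bigr\}.
\]
The three $P$-terms are each $\leq B$ by the same low-degree counting, so one of the two $M$-values must exceed $B$. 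If $M(1 - d_0, 1 - d_s, 1 - d_r) > B$, Lemma \ref{lemma:lowsum} applied to $(i, j, r_*) = (0, s, r)$ gives $d_r \leq \beta$, hence $d_r + d_R \leq \beta + \delta$. Otherwise $M(1 - d_R, 1 - d_0, 1 - d_r) > B$; since $d_0 \geq \delta \geq d_r, d_R$ this $M$ equals $(1 - d_r) + (1 - d_R) + (\omega - 2)(1 - d_0)$, and the inequality rearranges to $d_r + d_R < 2 - B + (\omega - 2)(1 - d_0) \leq 2 - B + (\omega - 2)(1 - \delta) = \delta + \beta$, where the last identity uses the definition of $B$.

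For \ref{sublemma:lowinterval}, I combine the first two parts via pairing. The $t - s$ antipodal pairs $\{(j, j+h) : j \in [s+1, h-1]\}$ partition $[s+1, h-1] \cup [s+h+1, k-1]$, and the interval $[h, s+h]$ of size $s + 1$ covers the remainder of $[s+1, k-1]$. Because the excluded block $[i, i + s - 1]$ has width $s < h$, no antipodal pair can lie entirely inside it. Let $c$ and $b$ denote the number of pairs with both or exactly one endpoint in the summed range $[s+1, i-1] \cup [i+s, k-1]$, and $\ell$ the number of $[h, s+h]$-indices in the sum. Then $c + b = t - s$ and $2c + b + \ell = 2(t - s) + 1$, giving $\ell = (t - s + 1) - c$. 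Bounding each complete pair by $\delta + \beta$ via part \ref{sublemma:lowpair}, broken pairs trivially by $\delta$, and $[h, s+h]$-indices by $\beta$ via part \ref{sublemma:lowsingle}, the sum is at most
\[
c(\delta + \beta) + b\delta + \ell\beta = (c + b)\delta + (c + \ell)\beta = (t - s)\delta + (h - s)\beta,
\]
which is the claimed bound.
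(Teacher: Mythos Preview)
Your proof is correct. Parts \ref{sublemma:lowsingle} and \ref{sublemma:lowpair} are essentially the paper's argument: the paper likewise bounds $P_{0,s}$, $P_{s,r}$, $P_{r,0}$ and invokes the matrix-multiplication lower bound (via Lemma~\ref{lemma:mainlowsum}, which is just Lemma~\ref{lemma:lowsum} specialized to this setup). For \ref{sublemma:lowpair} the paper case-splits on $d_r\le\beta$ versus $d_r\ge\beta$; in the latter case it observes $M(1-d_r,1-d_0,1-d_s)\le B$ directly, builds $P_{0,r}\le B$ from that, and then forces $M(1-d_0,1-d_r,1-d_{r+h})>B$. Your route---forcing $P_{R,r}>B$ and expanding it recursively through $0$ and $s$---lands on the same two $M$-inequalities and the same conclusions, so the difference is purely in presentation.

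Part \ref{sublemma:lowinterval} is where you genuinely diverge. The paper splits into the cases $i\le h$ and $i+s-1\ge h+s$ (symmetric), and in the first case decomposes
\[
f(s+1,i-1)+f(i+s,k-1)=\bigl[f(s+1,i-1)+f(s+h+1,i+h-1)\bigr]+f(i+s,h+s)+f(i+h,k-1),
\]
bounding the bracketed pair-sum by $(i-s-1)(\delta+\beta)$ via \ref{sublemma:lowpair}, the middle block by $(h-i+1)\beta$ via \ref{sublemma:lowsingle}, and the tail by $(k-i-h)\delta$, then checks the arithmetic. Your pairing argument replaces this explicit decomposition and case split with a single counting identity ($2c+b+\ell=2(t-s)+1$, $c+b=t-s$), which is cleaner and case-free. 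Both approaches are elementary; yours trades the concrete interval bookkeeping for a uniform combinatorial count, at the cost of needing the ``no antipodal pair fits in a width-$s$ block'' observation.
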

\begin{proof}
\begin{enumerate}
\item Since $s+1\leq h r \leq h+s \leq h+h-1 = k-1$, $f(s+1, r-1) \leq \ell(s+1,r-1) \delta\leq \ell(s+1,h+s-1)\delta =t\delta$ and $f(r+1,k-1) \leq \ell(r+1,k-1) \delta \leq \ell(h+1,k-1)=t \delta$. By lemma \ref{lemma:mainlowsum}, $d_r \leq \beta$.
\item If $d_r \leq \beta$ then $d_r + d_{r+h} \leq \beta + \delta$. 

If $d_r \geq \beta$ then $M(1-d_r,1-d_0,1-d_s) \leq B$. Clearly, $P_{s,r} \leq 1+ f(s+1,r-1)\leq 1+ t\delta = B$, and $P_{0,s} \leq B$ by lemma \ref{lemma:consecutivelow}, so $P_{0,r} \leq \max\{P_{0,s}, P_{s,r}, M(1-d_r,1-d_0,1-d_s) \} \leq B$. Clearly, $P_{r,r+h}, P_{r+h,0} \leq B$. So, by Lemma \ref{lemma:lowsum}, $M(1-d_r, 1-d_{r+h}, 1-d_0) \geq B$. Since $d_r, d_{r+h} \leq \delta \leq d_0$, $d_r + d_{r+h} \leq \omega -B - (\omega-2)d_0 \leq \omega-B -(\omega-2)\delta= \delta +\beta$.
\item We have two cases: either $i \leq h$ or $i+s-1 \geq h+s $. These two cases are symmetrically equivalent. We can assume $i \leq h$; the remaining case is analogous.
Since $i-1 \leq h-1$, by \ref{lemma:evenlow}.\ref{sublemma:lowpair}:
\begin{equation*} f(s+1, i-1) + f(s+h+1, i+h-1) = \sum_{r = s+1}^{i-1} (d_r + d_{r+h}) \leq (i-s-1) (\delta+\beta)\end{equation*}
Since $i+s \leq h+s$, by \ref{lemma:evenlow}.\ref{sublemma:lowsingle}, $f(i+s,h+s) \leq \ell(i+s,h+s) \beta =  (h-i+1) \beta$.

Since $s+1\leq i+h \leq 2h \leq k$ and $d_r \leq \delta \forall s+1 \leq r \leq k-1$, $f(i+h, k-1) = \sum_{r=i+h}^{k-1} d_r \leq \ell(i+h,k-1)\delta = (k-i-h) \delta$. Hence:
\begin{align*} 
&f(s+1,i-1) + f(i+s,k-1)\\
&= f(s+1,i-1) + f(i+s,h+s) +f(h+s+1,i+h-1) + f(i+h,k-1) \\
&\leq (i-s-1) (\delta+\beta) + (h-i+1) \beta + (k-i-h) \delta = (t-s) \delta + (h-s) \beta \end{align*}

\end{enumerate}
\end{proof}
\begin{lemma} \label{lemma:evenmain}
Consider integers $p,q \in [0, h-s-1]$ and $ \Delta: = p-q$. Suppose condition $(p,q)$ holds. 

Suppose $d_r + d_{r^{\Delta}+t} \leq \delta + \beta \forall b_s < r \leq b_m$. If 
$\exists n \in \{0,\cdots, m\}: f(a^{\Delta}_n - h +1, a_n-1) \leq t \delta$ and $\forall r\in [ a_n , a_s): d_r + d_{r^{\Delta}-h} \leq \delta+ \beta,$ then $C_k \leq B.$

Suppose  $d_r + d_{r^{\Delta}-t} \leq \delta+\beta \forall a_m\leq r < a_s$. If $\exists n \in \{0,\cdots, m\}: f(b_n +1, b^{\Delta}_n + h -1) \leq t \delta$ and $\forall r\in (b_s, b_n]: d_r + d_{r^{\Delta}+h} \leq \delta+ \beta,$ then $C_k \leq B$.
\end{lemma}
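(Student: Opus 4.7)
The plan is to adapt the closing argument of Theorem \ref{thm:main}.\ref{subthm:1}\ref{sublemma:circlesum} from the odd case to the even one, exploiting the extra flexibility provided by the hypotheses. In the odd case one had $h=t$ and hence $m=s$, so the structural pair bounds obtained at $n=m$ already involved the final indices $a_s$ and $b_s$; summing them together with the bounds from Lemma \ref{lemma:boundsum} telescoped cleanly to $f(b_s+1,a_s-1)\le 2t\delta$, whereupon Lemma \ref{lemma:circlesum} closed the argument. In the even case $h=t+1$ and $m=s-1$, so exactly one extra step separates $(a_m,b_m)$ from $(a_s,b_s)$, and the two assumptions of the lemma are precisely the additional pair and interval bounds required to bridge that one-step gap.

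I would argue by contradiction: assume $C_k>B$. First invoke Theorem \ref{thm:main}.\ref{subthm:1}\ref{sublemma:structure} at $n=m$, obtaining the interval bounds $f(a_m^\Delta-t+1,a_m-1)\le t\delta$ and $f(b_m+1,b_m^\Delta+t-1)\le t\delta$, together with pair bounds $d_r+d_{r^\Delta-t}\le\delta+\beta$ for $0\le r<a_m$ and $d_r+d_{r^\Delta+t}\le\delta+\beta$ for $b_m<r\le s$. Concatenating the last family with the first hypothesis of the lemma extends the pair bound to the entire range $b_s<r\le s$. On the $a$-side, the second hypothesis supplies pair bounds with the shift $-h$ on $[a_n,a_s)$ together with the compensating interval bound $f(a_n^\Delta-h+1,a_n-1)\le t\delta$; the change of shift from $-t$ to $-h$ is the one-step extension needed to reach from $a_m$ to $a_s$.

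I would then sum all of these pair inequalities, rewrite each $\sum (d_r+d_{r'})$ as $f(\cdot,\cdot)+f(\cdot,\cdot)$ on the cycle, and combine with Lemma \ref{lemma:evenlow}.\ref{sublemma:lowinterval} (applied to the low-degree arc $\{s+1,\ldots,k-1\}$) to cancel the shifted copies landing in that arc. Using the defining identity $h\beta=t\delta$ (which follows from $\beta=\delta t/h$), the accumulated upper bound collapses to
\[
f(b_s+1,a_s-1)\ \le\ t\delta+h\beta\ =\ 2t\delta.
\]
Lemma \ref{lemma:circlesum} then yields $C_k\le B$, contradicting the standing assumption $C_k>B$. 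The second assertion of the lemma follows by the symmetric argument, interchanging the roles of $a\leftrightarrow b$ and of the shifts $+t\leftrightarrow-t$ and $-h\leftrightarrow+h$.

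The main obstacle will be the index arithmetic: verifying that every arc appearing after summation lies either entirely in the high-degree interval $[0,s]$ or entirely in the low-degree arc $[s+1,k-1]$ (where Lemma \ref{lemma:evenlow} applies), and confirming that the single extra $\beta$ coming from the $-h$ side of the construction is exactly what is needed to reach the target $2t\delta$ via $h\beta=t\delta$. The skeleton of the proof is that of the odd case, but the pair inequalities and interval bounds must be assembled in a slightly different pattern because the two final indices on the $a$- and $b$-sides now differ by one step rather than coinciding.
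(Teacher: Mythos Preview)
Your overall plan---reach $f(b_s+1,a_s-1)\le 2t\delta$ by summing pair inequalities and invoking Lemma \ref{lemma:evenlow}.\ref{sublemma:lowinterval}, then close with Lemma \ref{lemma:circlesum}---matches the paper, and you correctly identify that the identity $h\beta=t\delta$ is what makes the bookkeeping balance. The gap is in the mechanics of the summation on the $a$-side.

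Theorem \ref{thm:main}.\ref{subthm:1}\ref{sublemma:structure} gives you pair bounds with shift $-t$ on $[0,a_m)$, while the hypothesis gives pair bounds with shift $-h$ on $[a_n,a_s)$. These two families do \emph{not} splice: since $h=t+1$, the last shifted index from the $-t$ family, namely $(a_n-1)^\Delta-t$, coincides with the first shifted index from the $-h$ family, $a_n^\Delta-h$. A straight sum therefore double-counts one term in the low arc, and the telescoping to a single $f(\cdot,\cdot)$ interval fails. Replacing the $-t$ pair bounds on $[0,a_n)$ by the hypothesis's interval bound $f(a_n^\Delta-h+1,a_n-1)\le t\delta$ does not fix this either: one is then short exactly one pair inequality at the boundary index $r=a_{n-1}$, and there is no off-the-shelf bound on $d_{a_{n-1}}+d_{a_{n-1}^\Delta-h}$.

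The paper handles this by \emph{induction on $n$} (downward from the given $n$ to $0$), with a case split at the boundary. Writing $R:=a_n^\Delta-h$, one first shows $d_R\le\beta$ via Lemma \ref{lemma:mainlowsum}. If $d_R\le d_{a_{n-1}}$, then the interval bound at $n$ transfers to $n-1$ (because $f(a_{n-1}^\Delta-h+1,a_{n-1}-1)=f(a_n^\Delta-h+1,a_n-1)+d_R-d_{a_{n-1}}$), and Lemma \ref{lemma:mainlowsum} supplies the missing $-h$ pair bound at $r=a_{n-1}$; the inductive hypothesis then applies. If instead $d_{a_{n-1}}\le d_R\le\beta$, the paper uses the \emph{ad hoc} pairing $d_{a_{n-1}}+d_{0^\Delta-h}\le\beta+\delta$ (valid only because $d_{a_{n-1}}\le\beta$), which extends the shifted interval at the \emph{other} end and makes the sum close directly as in the base case $n=0$. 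This dichotomy on $d_R$ versus $d_{a_{n-1}}$ is the missing idea in your sketch; without it the boundary index cannot be absorbed.
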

\begin{proof}
We prove the first statement by induction on $n \in \{0,\cdots, m\}$. The second statement is analogous. Suppose for contradiction $C_k > B$. By \ref{thm:main}.\ref{subthm:1}\ref{sublemma:structure}, $d_r + d_{r^{\Delta}+t} \leq \delta +\beta \forall b_m < r \leq s$, so  $d_r + d_{r^{\Delta}+t} \leq \delta + \beta \forall b_s < r \leq s$, so 
\begin{equation} \label{ineq:boundb} f(b_s+1,s) + f(b^{\Delta}_s+1+t,s^{\Delta}+t) \leq (s-b_s) \end{equation}

Base case: $n = 0$. Since $\forall r, a_0 \leq r< a_s: d_r + d_{r^{\Delta}-h} \leq \delta+ \beta$, $f(a_0, a_s-1) + f(a^{\Delta}_0 - h, a^{\Delta}_s - h -1) \leq (a_s -a_0) (\delta + \beta)$. Since $a^{\Delta}_s - h-1 = b^{\Delta}_s+k-h-1 = b^{\Delta}_s +t$, $ f(a^{\Delta}_0 - h, a^{\Delta}_s - h -1)+  f(b^{\Delta}_s+1+t,s^{\Delta}+t) = f(a^{\Delta}_0 -h, s^{\Delta}+t)$. Combining these with (\ref{ineq:boundb}) gives $f(0, a_s-1) + f(a^{\Delta}_0-h, s^{\Delta}+t) + f(b_s+1, s) \leq (s-b_s+a_s -a_0) (\delta +\beta) = s (\alpha+\beta) $.

Since $a_0=0, |\Delta |\leq h-s-1$, $s+1\leq a^{\Delta}_0-h \leq 0^{\Delta} -h +s-1 = s^{\Delta} + t \leq k-1$. By \ref{lemma:evenlow}.\ref{sublemma:lowinterval}, $f(s+1,a^{\Delta}_0 -h-1) + f( s^{\Delta}+t+1, k-1)\leq (t-s)\delta + (h-s)\beta$. So, $f(b_s+1,a_s-1) \leq s (\delta +\beta) + (t-s)\delta + (h-s)\beta =t \delta + h \beta = 2t \delta$. Since $a_s = b_s = T$, $C_k \leq B$ by lemma \ref{lemma:circlesum}.

Suppose Lemma \ref{lemma:evenmain} is true for $n-1\leq m-1$. We show that it is still true for $n$.

If $a_n = a_{n-1}$ then we are done. Else, $a_n = a_{n-1} +1$, $b_n = b_{n-1}$ and $d_{a_{n-1}} \leq d_{b_{n-1}}$. We show $f(a^{\Delta}_{n-1}-h+1, a_{n-1}-1) \leq t\delta$ and $d_{a_{n-1}} + d_{a^{\Delta}_{n-1} -h} \leq \delta + \beta$, then conclude that $C_k \leq B$ using the inductive assumption..

Consider $R = a^{\Delta}_n - h = a^{\Delta}_{n-1} -h +1$. Note that $b_n -a_n = s-n \geq s-m =1$, so $R-1=a^{\Delta}_n -h -1 = a^{\Delta}_n +t \leq b^{\Delta}_n +t-1$, so $f(b_n+1 ,R-1) \leq f(b_n+1,b^{\Delta}_n +t-1) \leq t \delta $ by \ref{thm:main}.\ref{subthm:1}\ref{sublemma:structure}. Clearly, $f(R+1, a_n-1) = f(a^{\Delta}_n-h+1, a_n-1) \leq t \delta$. So $d_R \leq \beta$ by Lemma \ref{lemma:mainlowsum}.

Note that $f(a^{\Delta}_{n-1}-h+1, a_{n-1}-1)  = f(a^{\Delta}_n-h+1, a_n-1) + d_R - d_{a_{n-1}} \leq t \delta + (d_R -d_{a_{n-1}})$.

If $d_R \leq d_{a_{n-1}}$ then $f(a^{\Delta}_{n-1}-h+1, a_{n-1}-1)  \leq t\delta$. Since $b_n = b_{n-1}$ and $a^{\Delta}_{n-1} - h -1 < a^{\Delta}_n -h-1 = R-1$, $f(b_{n-1}+1, a^{\Delta}_{n-1} - h -1) \leq f(b_n+1, R-1) \leq t\delta$. So $d_{a_{n-1}} + d_{a^{\Delta}_{n-1}-h} \leq \delta +\beta $ by Lemma \ref{lemma:mainlowsum}, thus $C_k \leq B$ by inductive assumption.

If $d_{a_{n-1}} \leq d_R \leq \beta$: Since $d_r + d_{r-h} \leq \delta +\beta \forall a_n \leq r < a_s$. By \ref{thm:main}.\ref{subthm:1}\ref{sublemma:structure}, $d_r + d_{r-t} \leq \delta +\beta \forall 0 \leq r < a_{n-1} \leq a_m$. Hence, $f(0, a_{n-1} -1) +  f(0^{\Delta}-t, a^{\Delta}_{n-1}-1-t) \leq a_{n-1}(\delta +\beta)$ and $f(a_n, a_s-1) + f(a^{\Delta}_n-h, a^{\Delta}_s-1-h) \leq (a_s-a_n) (\delta +\beta)$. Since $0^{\Delta}-h \in [s+1,k-1]$, $d_{a_{n-1}} + d_{0^{\Delta}-h} \leq \beta +\delta$. Summing these three inequalities with (\ref{ineq:boundb}) gives: $f(0,a_s-1) + f(0^{\Delta}-h, s^{\Delta}+t)+ f(b_s+1, s) \leq s (\delta+\beta)$. So $C_k \leq B$ by same argument as in base case.
\end{proof}

Now, we prove that property $(p,q,B)$ holds, then conclude that $C_k \leq B$ using lemma \ref{lemma:weakenedpq}. 
\begin{proof} 
Suppose for contradiction that condition $(p,q)$ holds but $C_k > B$. By \ref{thm:main}.\ref{subthm:1}\ref{sublemma:circlesum}, $f(b_m+1, a_m-1) \leq 2t \delta$. Note that $m=s-1$ and $b_m -a_m =1$. Let $R:= a^{\Delta}_m-h$, then $R+1 = b^{\Delta}_m+h$. Since $ f(R+1,a_m-1) + f(b_m+1, R) =f(b_m+1,a_m-1)\leq 2t\delta$, either $f(a^{\Delta}_m -h+1,a_m-1) \leq t\delta$ or $f(b_m+1,b^{\Delta}_m+h-1) \leq t\delta$. WLOG, assume $d_{a_m} \leq d_{b_m}$, then $a_s = a_m+1, b_s = b_m$.

If $f(a^{\Delta}_m -h+1,a_m-1) \leq t\delta$: Note that $f(b_m+1, a^{\Delta}_m-h-1) = f(b_m+1, a^{\Delta}_m+t) = f(b_m+1, b^{\Delta}_m+t-1) \leq t\delta$ by \ref{thm:main}.\ref{subthm:1}\ref{sublemma:structure}. So $d_{a_m} + d_{a^{\Delta}_m -h} \leq \delta +\beta$ by lemma \ref{lemma:mainlowsum}. First statement of lemma \ref{lemma:evenmain} (note that $b_s=b_m$) for $n = m$ gives $C_k \leq B$ (contradiction).

If $f(b_m+1,b^{\Delta}_m+h-1) \leq t\delta$: Note that $f(b^{\Delta}_m+h+1,a_m -1) = f(a^{\Delta}_m-t+1, a_m-1) \leq t \delta$ by \ref{thm:main}.\ref{subthm:1}\ref{sublemma:structure}. So $d_{a_m} + d_{a^{\Delta}_m-t} = d_{a_m} + d_{b^{\Delta}_m +h} \leq \delta + \beta$. Second statement of lemma \ref{lemma:evenmain} for $n=m$ gives $C_k \leq B$ (contradiction).
\end{proof}
Assume $\omega =2$. Set $d_0 = 2 \beta, d_1 =\cdots = d_t = \delta, d_{t+1}=\cdots=d_{k-1} = \beta$. Note that $M(1-2\beta, 1-\delta, 1-\beta) = M(1-\delta,1-\delta,1-\beta) = B$, $2-\delta \geq 2-2\beta \geq B$ and $t\delta = 2\beta + (t-1) \beta = B-1$. Thus, using rule 1. or 2.(c) "costs" at least $B$, and using only rules 2.(a) and 2.(b) costs at least $B$ in total. So $C_k(d_0,\cdots,d_{k-1}) =B$, thus the bound is tight for $\omega =2$.

\section{Acknowledgments} We would like to thank Uri Zwick for pointing us to the open problem of analyzing the Yuster-Zwick algorithm.
\newpage
\bibliographystyle{plain}
\bibliography{refs}
\section{Appendix}
\subsection{Induced pattern detection for $k\le 6$}

When $k\in \{5,6\}$, we have $k'=1$. Consider a class $c$. By Corollary \ref{sizeofU}, $U(c)$ has exactly two patterns which differ in only one edge $e$, namely $\tilde{H}$ and $\tilde{H}\setminus{e}$. By Theorem \ref{b_H}, $b_{\tilde{H}}^c+b_{\tilde{H}\setminus(e)}^c=2$, so $b_{\tilde{H}}^c=b_{\tilde{H}\setminus(e)}^c=1$. So for any class $c$ and any unlabeled pattern $\tilde{H}$ that embeds in $c$, we have $\alpha_{\tilde{H}}^c=|Aut(\tilde{H})|$. Moreover by Lemma \ref{classificationlemma}, there is some class $c$ such that $U(c)$ consists of $\tilde{H}$ and $\tilde{H}\setminus\{e\}$, where $e$ is an arbitrary edge in $\tilde{H}$.

Hence by Theorem \ref{generalcase} and Corollary \ref{sumform}, for any unlabeled pattern $\tilde{H}$ which has at least one edge, we can compute $n_{\tilde{H}}|Aut(\tilde{H})|+n_{\tilde{H}\setminus e}|Aut(\tilde{H}\setminus e)|$ in $O(M(n,n^2,n))$ time for $k=5$ and $O(M(n,n^2,n^2))$ time for $k=6$. Now we give an algorithm which detects any fixed pattern $\tilde{H}$ in a graph $G$, where $\tilde{H}$ is not the $k$-Clique or the $k$-Independent Set.

Let $e_1,\ldots,e_h$ be an arbitrary permutation of all the edges of $\tilde{H}$. Let $\tilde{H}_i=\tilde{H}\setminus\{e_1,\ldots,e_{i-1}\}$ where $\tilde{H}_1=\tilde{H}$. Compute $q_i= n_{\tilde{H}_i}|Aut(\tilde{H}_i)|+n_{\tilde{H}_{i+1}}|Aut(\tilde{H}_{i+1})|$. Compute $Q= \sum_{i=1}^h (-1)^i q_i$. In fact, $Q=n_{\tilde{H}}|Aut(\tilde{H})|+(-1)^h n_{\tilde{H}_{h+1}}|Aut(\tilde{H}_{h+1})|$, which is of the form (\ref{goaleqform}) for $r=k!$, since $\tilde{H}_{h+1}$ is the $k$-Independent Set. So we can detect all $5$-node patterns in time $O(M(n,n^2,n))\in O(n^{\omega+1})$, and all $6$-node patterns in time $O(M(n,n^2,n^2))\in O(n^{\omega+2})$.



 
\subsection{Omitted proofs}
\paragraph{Proof of Theorem \ref{b_H}.}
Let $H=(w_0,\ldots,w_{k-1})$ be an arbitrary pattern in $c$. Define $b_{\tilde{H}}^c$ to be the number of ways we can specify the edges $w_0w_1,\ldots, w_0w_{k'}$ so that the resulting vertex order maps to a vertex order of $\tilde{H}$. Note that this is independent of the choice of $H$, because all edges except the $k'$ edges mentioned are the same for all $H\in c$. For each of these $b_{\tilde{H}}^c$ vertex orderings, we can apply $|Aut(\tilde{H})|$ automorphisms to get a different ordering that maps to it. So all these orderings make the $\alpha_{\tilde{H}}^c$ possible ways $\tilde{H}$ can be embedded in $c$; hence $\alpha_{\tilde{H}}^c=b_{\tilde{H}}^c\cdot |Aut(\tilde{H})|$. Now note that the total number of ways we can specify the $k'$ edges $w_0w_1,\ldots,w_0w_{k'}$ is $2^{k'}$, so $\sum_{\tilde{H}\in U(c)} \frac{\alpha_{\tilde{H}}^c}{|Aut(\tilde{H})|}=\sum_{\tilde{H}\in U(c)} b_{\tilde{H}}^c =  2^{k'}$

\paragraph{Proof of Lemma \ref{classificationlemma}.}
Let $H=(w_0,w_1,\ldots,w_{k-1})$ be an ordering of the vertices of $\tilde{H}$ such that $e_i=w_0w_i$ for each $i\in \{1,\ldots,k'\}$. Now each pattern $H'\in C(H)$ differs from $H$ only in those $k'$ edges, so the unlabeled version of $H'$ is obtained from $\tilde{H}$ by removing some of $e_i$ edges. So $C(H)\subseteq S$. Now consider $\tilde{H'} \in S$. Since $\tilde{H'}$ is obtained from $\tilde{H}$, we can consider the same ordering of vertices for it. Call this vertex order $H'$. So $H'$ and $H$ differ only in the $k'$ first edges, so $H'\in C(H)$. Hence $S\subseteq U(c)$ which shows that $U(c)=S$.  

Now since the number of ways we can embed $H$ in class $c$ is $1$ (we have to put an edge between all the $k'$ pairs of vertices), we have $b_{\tilde{H}}^c=1$.

\paragraph{Proof of Lemma \ref{lemma:6cycineq}.}
For $ r \in \{0,\cdots,5\}$: If $B<C_6$, we get $B < C_6 \leq 2 - d_r$, $d_r < 2 - B$. Note that $2 \leq \omega \leq 3$.
\begin{enumerate}[label=(\alph*)]
\item 
By Lemma \ref{lemma:consecutivelow}, $P_{0,2} \leq \max\{1+2\delta, \omega(1-\delta)\} \leq B$. This and $B < C_6$ imply $B < P_{2,0}$. Hence $1+2\delta = B < P_{2,0} \leq 1 + d_3 + d_4 + d_5$, so $d_3 +d_4 + d_5 \geq 2\delta = -(B-1)$.

For $r \in \{3,4,5\}$, $\max\{P_{r,0}, P_{2,r}\} \leq 1 + \max\{f(r+1,5),f(3,r-1)\} \leq 1+2 \delta=B$. By lemma \ref{lemma:lowsum}, $M(1-d_0,1-d_2,1-d_r) > B $, which means: 
\begin{equation} \label{ineq:6cyc3low1} (\omega -2) d_0 + d_2 +d_r < \omega-B \end{equation} 
Summing (\ref{ineq:6cyc3low1}) for $r \in \{3,4,5\}$ with $-(d_3 + d_4 + d_5) < -(B-1)$ gives: 
\begin{equation} (\omega -2) d_0 + d_2 < \omega - B - \frac{B-1 }{3}   \end{equation}
Summing (\ref{ineq:6cyc3low1}) for $r = 3$ with $-(\omega-2)d_0 \leq -(\omega -2) \delta$ gives: 
\begin{align*} &d_2 + d_3 \leq \omega  - (\omega-2) \delta - B \leq \omega - (\omega-2)\delta - \omega(1-\delta) = 2 \delta\\
&\Rightarrow P_{1,4} \leq 1 + d_2 + d_3 \leq 1 + 2 \delta = B \Rightarrow B < C_6 \leq \max\{P_{1,4}, P_{4,1}\} \leq \max\{B, P_{4,1}\} \\
&\Rightarrow B < P_{4,1} \leq 1 + d_5 + d_0 \Rightarrow 2 \delta < d_0 + d_5
 \end{align*}  
Summing (\ref{ineq:6cyc3low1}) for $r= 5$ with $-(d_5 + d_0) < -2 \delta = -(B-1)$ gives: \begin{equation}\label{ineq:6cyc3low2} (\omega -3) d_0 + d_2 < \omega - B - (B-1)\end{equation}

Summing (\ref{ineq:6cyc3low1}) for $r= 5$, $-(d_5 + d_0) < -(B-1)$ and $d_0 < 2-B$ gives: 
\begin{equation} \label{ineq:6cyc3low3} (\omega -2) d_0 + d_2 < \omega - B - (2B -3) \end{equation}

Multiplying (\ref{ineq:6cyc3low1}) with $3 - \omega \geq 0$, (\ref{ineq:6cyc3low2}) with $\omega-2 \geq 0$ and summing them gives:
$d_2 \leq \frac{5 \omega}{3} - B \frac{2 \omega}{3} -1$

Since $d_2 > \delta = \frac{B-1}{2}$, $\frac{B-1}{2} < d_2 \leq  \frac{5 \omega}{3} - B \frac{2 \omega}{3} -1$, which implies $B < \frac{10 \omega -3}{4 \omega +3}$

Multiplying (\ref{ineq:6cyc3low3}) with $3 - \omega \geq 0$, (\ref{ineq:6cyc3low2}) with $\omega-2 \geq 0$ and summing them gives:
$d_2 \leq 7 - \omega - B (5-\omega)$. Thus $\frac{B-1}{2} < d_2 \leq 7 - \omega - B (5-\omega)$, which implies $B < \frac{15 - 2 \omega}{11 - 2\omega}$
\item 
$P_{3,0} \leq 1+d_4 + d_5 \leq 1+2\delta = B \Rightarrow B < P_{0,3}\leq 1 + d_1 + d_2 \Rightarrow 2 \delta < d_1 + d_2$. Analogously, $2 \delta < d_0 + d_5$

For $r \in \{0,1\}$: If $d_r \geq 2 \delta$ then $C_6 \leq 2- d_r \leq 2 (1-\delta) \leq \omega(1-\delta) \leq B$. So $d_r \leq 2 \delta$, thus $\max \{P_{5,1}, P_{0,2} \} \leq \max\{1+d_0,1+d_1\} \leq B$. Clearly $P_{1,2} = P_{5,0}= 1 \leq B, P_{2,5} \leq 1 + d_3 + d_4 \leq B$. So $P_{5,r},P_{r,2},P_{2,5} \leq B$. By Lemma \ref{lemma:lowsum}, $ M(1-d_r,1-d_5, 1-d_2) > B $, which means:
\begin{equation} \label{ineq:6cyc4low1} (\omega-2)d_r + d_2 + d_5 < \omega - B \end{equation}
Summing (\ref{ineq:6cyc4low1}) for $r=0$ with $-(\omega-2)(d_0 + d_5) \leq -(\omega-2)2 \delta = -(\omega-2) (B-1)$ gives: 
\begin{equation} \label{ineq:6cyc4low1a} d_2 + (3-\omega) d_5 < \omega - B - (\omega-2)(B-1)  \end{equation}

Summing (\ref{ineq:6cyc4low1}) for $r=1$ with $-(\omega-2)(d_1 + d_2) \leq -(\omega-2)2 \delta = -(\omega-2) (B-1)$ gives: \begin{equation} \label{ineq:6cyc4low1b} d_5 + (3-\omega) d_2 < \omega - B - (\omega-2) (B-1) \end{equation}
For $r \in \{3,4\}$: $\max\{P_{1,r}, P_{r,0} \} \leq1+ \max\{f(2,r-1), f(r+1,5)\} \leq 1 + 2\delta = B$, and $P_{0,1} = 1 \leq B$. By Lemma \ref{lemma:lowsum}, $M(1-d_0,1-d_1,1-d_r) > B$, which means:
\begin{equation} \label{ineq:6cyc4low3} (\omega -2)d_0 + d_1 + d_r < \omega - B, r \in \{3,4\} \end{equation}
Summing (\ref{ineq:6cyc4low3}) for $r \in \{3,4\}$, $2 \delta -d_3 -d_4 \leq d_5$, $2\delta - d_1 \leq d_2$ and $(\omega-2)(2 \delta - d_0) \leq (\omega-2)d_5$ gives:
\begin{equation} \label{ineq:6cyc4low4} (\omega - \frac{3}{2}) d_5 +  d_2 > -(\omega - B) +  ( \omega - \frac{1}{2}) (B-1) \end{equation}
If $\omega \leq \frac{9}{4}$: Multiplying (\ref{ineq:6cyc4low1a}) by $\omega - \frac{3}{2} > 0$, (\ref{ineq:6cyc4low4}) by $\omega -3\leq 0$ and summing them gives: $ (2\omega - \frac{9}{2}) d_2 < \frac{3}{2} (\omega -B)- \frac{3}{2}(B-1) $. Since $d_5 \leq \delta = \frac{B-1}{2} $ and $2 \omega -\frac{9}{2}\leq 0$, $(  2\omega - \frac{9}{2}) \frac{B-1}{2} < \frac{3}{2} (\omega -B)- \frac{3}{2}(B-1)$, which implies $B < \frac{10 \omega -3}{4 \omega +3}$.

Subtracting (\ref{ineq:6cyc4low1a}) by  (\ref{ineq:6cyc4low4}) gives: $(\frac{9}{2}  -2 \omega) d_5 <  2(\omega-B) - (2\omega - \frac{5}{2}) (B-1)$. Since $d_5 \geq (B-1) - d_0 \geq (B-1) -(2-B)$ and $\frac{9}{2}  -2 \omega \geq 0$, $(\frac{9}{2}  -2 \omega)((B-1) -(2-B)) <  2(\omega-B) - (2\omega - \frac{5}{2}) (B-1)$, which implies $B < \frac{22-4\omega}{17-4\omega}$

If $\omega > \frac{9}{4}$: Multiplying (\ref{ineq:6cyc4low1a}) by $2\omega^2 - 9\omega + 11> 0$, (\ref{ineq:6cyc4low1b}) by $4\omega -9 > 0$, (\ref{ineq:6cyc4low4}) by $2(\omega-2)(\omega -4)< 0$, and summing them gives:
$0< (7 \omega-14)(\omega -B) - 2(2\omega-1)(\omega-2) (B-1)$, which implies $B < \frac{11 \omega-2}{4 \omega +5}$

If $\omega \leq \frac{5}{2}$: Note that $d_2 \geq (B-1) - d_1 \geq (B-1) - (2-B) = 2B-3, d_5 \geq (B-1) -d_0 \geq (B-1) - (2-B) = 2B-3$. Summing (\ref{ineq:6cyc4low1a}) with $-d_2 \leq -(2B-3)$ and $(\omega-3) d_5 \leq (\omega-3) (2B-3)$ gives: $0 < \omega - B - (\omega-2) (B-1) - (4-\omega) (2B-3)$, which implies $B < \frac{10 - \omega }{7 -\omega}$
\item $P_{3,0} \leq 1 + d_4 + d_5 \leq 1+2\delta = B, P_{0,3} \leq 1+d_1+d_2 \leq 1+2\delta = B$, so $C_6 < \max\{P_{0,3}, P_{3,0}\} \leq B$.
\end{enumerate}
\end{document}